\documentclass[onecolumn,12pt]{IEEEtran}
\usepackage{cite}
\usepackage{algorithmic}
\usepackage{float}
\usepackage{array}
\usepackage{bbm}
\usepackage{amsfonts}
\usepackage{graphicx}
\hyphenation{op-tical net-works semi-conduc-tor}
\usepackage{xcolor}

\usepackage{mathtools}
\usepackage{amsmath, amsthm, amssymb}
\usepackage[outdir=./]{epstopdf}
\usepackage{dsfont}
\usepackage{mathrsfs}
\newtheorem{theorem}{Theorem}
\newtheorem{Corollary}{Corollary}
\newtheorem{Definition}{Definition}

\usepackage{color,algorithm,algorithmic,amsbsy,amsmath,amssymb,epsfig,bbm,mathrsfs,fancyhdr,fancyvrb,url}
\usepackage{graphicx}
\usepackage{ulem}
\usepackage{subcaption}
\usepackage{afterpage}  


\usepackage{xcolor,cite,etoolbox}
\makeatletter
\pretocmd\@bibitem{\color{black}\csname keycolor#1\endcsname}{}{\fail}
\newcommand\citecolor[1]{\@namedef{keycolor#1}{\color{black}}}
\makeatother
\citecolor{ji2018ultra}
\citecolor{ericsson_wp}
\citecolor{wu2019low}
\citecolor{lv2018millimeter}
\citecolor{yi2017modeling}
\citecolor{moltchanov2019socially}
\citecolor{ibrahim2019meta}


\newcommand{\beq}{\begin{equation}}
\newcommand{\eeq}{\end{equation}}
\newcommand{\beqn}{\begin{eqnarray}}
\newcommand{\eeqn}{\end{eqnarray}}

\newtheorem{lemma}[theorem]{Lemma}

\normalem

\begin{document}
\title{The Meta Distributions of the $\mathrm{SIR}/\mathrm{SNR}$ and Data Rate in Coexisting Sub-6GHz and Millimeter-wave Cellular Networks}

\author{Hazem Ibrahim, Hina Tabassum, and Uyen T. Nguyen
\IEEEcompsocitemizethanks{\IEEEcompsocthanksitem The authors are with the Department of Electrical Engineering and Computer Science, York University, Toronto, Ontario, M3J 1P3 Canada (e-mail:{hibrahim,hina,utn}@cse.yorku.ca). Part of the manuscript was presented at the 2019 IEEE ICC: SAC Internet of Things Track \cite{ibrahim2019meta}.}
}

\maketitle

\begin{abstract}
Meta distribution is a  fine-grained unified performance metric that enables us to evaluate  the {reliability and latency}  of next generation wireless networks, in addition to the conventional coverage probability. In this paper, using stochastic geometry tools, we develop a systematic framework to characterize the  meta distributions of the downlink  signal-to-interference-ratio (SIR)/signal-to-noise-ratio (SNR) and data rate of a typical device in a cellular network with coexisting sub-6GHz and millimeter wave (mm-wave) spectrums. Macro base-stations (MBSs) transmit on sub-6GHz channels (which we term ``microwave'' channels), whereas small base-stations (SBSs) communicate with {\color{black}devices} on mm-wave channels. The SBSs are connected to MBSs via a microwave ($\mu$wave) wireless backhaul.  The $\mu$wave channels are interference limited and mm-wave channels are noise limited; therefore, we have the meta-distribution of SIR and SNR in $\mu$wave and mm-wave channels, respectively. To model the line-of-sight (LOS) nature of mm-wave channels, we use Nakagami-m fading model. To derive the meta-distribution of SIR/SNR, we characterize the conditional success probability (CSP) (or equivalently reliability) and its $b^{\mathrm{th}}$ moment  for  a typical {\color{black}device}  (a) when it associates to a $\mu$wave MBS for {\em direct} transmission, and (b) when it associates to a mm-wave SBS  for {\em dual-hop} transmission (backhaul and access transmission).  Performance metrics such as  the mean and variance of the local delay (network jitter),  mean of the CSP (coverage probability), and variance of the CSP are derived. Closed-form expressions are presented for special scenarios. The extensions of the developed framework to the $\mu$wave-only network or mm-wave only networks where SBSs have mm-wave backhauls are discussed. Numerical results validate the analytical results. Insights are extracted related to the reliability, coverage probability, and latency of the considered  network.
\end{abstract}
\begin{IEEEkeywords}
5G Cellular networks, millimeter wave, meta distribution, reliability, latency, wireless backhaul, Nakagami fading, stochastic geometry.
\end{IEEEkeywords}
\IEEEdisplaynontitleabstractindextext
%
\IEEEpeerreviewmaketitle

\section{Introduction}\label{sec:introduction_report}

\IEEEPARstart{T}{he sub-6GHz spectrum} is running out of bandwidth to support a huge number of devices in the cellular networks. Therefore, cellular operators of the upcoming 5G networks will tap into the millimeter-wave (mm-wave) spectrum to use wider bandwidths. The mm-wave spectrum has wider bandwidths that can meet higher traffic demands and support data rates into the order of gigabits per second. Mm-wave spectrum usage is one of the key enablers of 5G and beyond networks \cite{NR} and will coexist with sub-6GHz frequencies \cite{elshaer2016downlink, semiari2019integrated}. However, mm-wave transmissions are highly susceptible to blockages and penetration losses; therefore the mm-wave spectrum will complement the sub-6GHz spectrum in 5G networks \cite{ji2018ultra,andrews2014will,wang2015multi,polese2017improved}.

 In this article, we develop a framework to characterize the  meta  distributions  of  SIR/SNR  as  well  as  data  rate  in  the  coexisting  sub-6GHz and  mm-wave  cellular  network. We assume a two-tier network architecture as illustrated in Fig.~\ref{mmwave_network_arch}.  Tier 1 consists of macro base stations (MBSs) and tier 2 is composed of small base stations (SBSs).  A MBS communicates with SBSs on {\em backhaul links}  in the microwave spectrum.  SBSs communicate with devices on {\em access links} in the mm-wave spectrum.   This scenario supports {\em dual-hop} communications between MBSs and devices.  Devices can also communicate with MBSs via {\em direct links} in the microwave spectrum, as shown in Fig.~\ref{mmwave_network_arch}.

 Given the above {\em hybrid  spectrum network} architecture, it is crucial to develop new theoretic  frameworks to characterize the performance of such networks.  Within this context, we consider the use of {\em meta distributions} to study the performance of such hybrid spectrum networks.



The meta distribution is first introduced by M. Haenggi~\cite{haenggi2016meta} to provide a  fine-grained {\em reliability and latency} analysis of 5G wireless networks with ultra-reliable and low latency communication requirements \cite{bennis2018ultrareliable,kalamkar2019per}. 
Meta distribution is defined as the distribution of the conditional success probability (CSP) of the transmission link (also termed as {\em link reliability}), conditioned on the locations of  the wireless transmitters. The meta distribution provides answers to questions such as ``{\em What fraction of {\color{black}devices} can achieve x\% transmission success probability?}'' whereas the conventional success probability answers questions such as {\em``What fraction of {\color{black}devices} experience transmission success?''} \cite{haenggi2016meta}. In addition to the standard coverage (or success) probability which is equivalent to the mean of CSP, the meta distribution can capture important network performance measures such as the mean of the local transmission delay, the variance of the local transmission delay (referred to as {\em network jitter}), and the variance of the CSP which depicts the variation of the {\color{black}devices}' performance from the mean coverage probability. Evidently, the standard coverage probability provides limited information about the performance of a typical wireless network \cite{salehi2017analysis,salehi2018meta,deng2019energy}. 

In this article, we develop a novel stochastic geometry framework based on meta distributions to estimate and analyze the communication latency and reliability of devices in a coexisting sub-6GHz and mm-wave  cellular network.
\begin{figure}[!h]
    \begin{center}
    \scalebox{0.3}[0.3]{\includegraphics{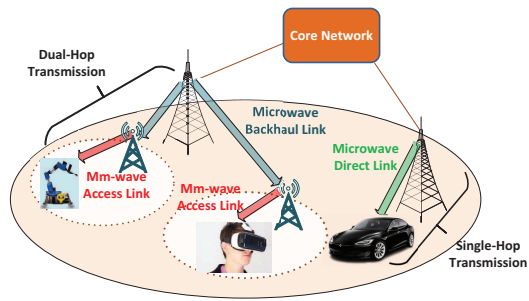}}
    \end{center}
    \caption{\small Coexisting sub-6GHz and mm-wave cellular networks.}
     \label{mmwave_network_arch}
    \end{figure}
\vspace{-1cm}
\subsection{Related Work}\label{related_work_mm}
A variety of research works studied the {\em coverage probability of  mm-wave only cellular networks} \cite{di2015stochastic, bai2015coverage, turgut2017coverage}. Di Renzo et al. \cite{di2015stochastic} proposed a general mathematical model to analyze multi-tier mm-wave cellular networks. Bai et al. \cite{bai2015coverage} derived the coverage and rate performance of mm-wave cellular networks. They used a distance dependent line-of-sight (LOS) probability function where the locations of the LOS and non-LOS (NLOS) BSs are modeled as two independent non-homogeneous Poisson point processes, to which different path loss models are applied. The authors assume independent Nakagami fading for each link. Different parameters of Nakagami fading are assumed for LOS and NLOS links. Turgut and Gursoy \cite{turgut2017coverage} investigated heterogeneous downlink mm-wave cellular
networks consisting of $K$ tiers of randomly located BSs where each tier operates in a mm-wave frequency band. They derived coverage probability for the entire network using tools from stochastic geometry. They used Nakagami fading to model small-scale fading. Deng et al. \cite{deng2018success} derived the success probability at the typical receiver in mm-wave device-to-device (D2D) networks. The authors considered Nakagami fading and incorporated directional beamforming.

Some recent studies  analyzed the {\em coverage or success probability} of coexisting $\mu$wave and mm-wave cellular networks. A hybrid cellular network  was considered by Singh et al. \cite{singh2015tractable} to estimate the uplink-downlink coverage and rate distribution of self-backhauled mm-wave networks. Elshaer et al. \cite{elshaer2016downlink} developed an analytical model to characterize decoupled uplink and downlink cell association strategies. The authors showed the superiority of this technique compared to the traditional coupled association in a network with traditional MBSs coexisting with denser mm-wave SBSs. Singh et al. \cite{singh2015tractable} and Elshaer et al. \cite{elshaer2016downlink} modeled the fading power as Rayleigh fading to enable better tractability. 

{\color{black} Compared to traditional coverage analysis conducted in \cite{turgut2017coverage,bai2015coverage,elshaer2016downlink}, Deng and Haenggi \cite{deng2017fine}  analyzed the meta distribution of the SIR in \emph{mm-wave only} {\em single-hop} D2D networks using the \emph{Poisson bipolar model} and simplified \emph{Rayleigh fading channels} for analytical tractability.}

\subsection{Contributions}

{\color{black}To the best of our knowledge, our work is the first to characterize the meta distributions of SIR/SNR and data rate for coexisting $\mu$wave and mm-wave networks. Different from previous research in \cite{turgut2017coverage,bai2015coverage,elshaer2016downlink,deng2017fine}, we develop a stochastic geometry framework that takes in consideration (i) coexistence of two different network tiers with completely different channel propagation, interference, and fading models, (ii) dual-hop transmissions enabled by two different spectrums, one in each network tier, and (iii)  Nakagami-m fading model with shape parameter $m$ for LOS mm-wave channels. Nakagami-m fading  is a generic and versatile distribution that includes Rayleigh distribution  (typically used
for non-LOS fading) as its special case when $m=1$ and can well approximate the Rician fading distribution for $1\leq m \leq \infty$ (typically used for 
LOS fading).}

We assume a hybrid spectrum network architecture described above and illustrated in Fig. 1.
Since microwave transmissions are interference limited and mm-wave transmission are noise limited\footnote{Given highly directional beams and high sensitivity to blockage, recent studies showed that mm-wave networks can be considered as noise limited rather than interference limited \cite{ghosh2014millimeter,singh2015tractable,andrews2017modeling}.}, we study the meta distributions of the SIR and SNR in $\mu$wave and mm-wave channels, respectively.  We also characterize the meta distrubusiton of data rates.  Our contributions and methodology include the following:

\begin{itemize}
\item  Different from existing works, we characterize the CSP (which is equivalent to reliability) of a typical device  and its $b^{\mathrm{th}}$ moment when the {\color{black} device} either associates to (1)  $\mu$wave MBS for direct transmission or (2)  mm-wave SBS for dual-hop transmission (access and backhaul transmission). Using the novel moment expressions in the two scenarios, {\color{black}we derive a novel expression for the cumulative moment $M_{b,\text{T}}$ of the considered hybrid spectrum network.}
\item  Using the cumulative moment $M_{b,\text{T}}$, we characterize the exact and approximate meta distributions of the data rate and downlink SIR/SNR of a typical {\color{black}device}. 
 Since the expression of $M_{b, \text{T}}$ relies on a Binomial expansion of power $b$, the results for the meta-distribution requiring complex values of $b$ are obtained by applying Newton's Generalized Binomial Theorem.
 \item We characterize important network performance metrics such as coverage probability, mean local delay {\color{black}(which is equivalent to latency)}, and variance of the local delay (network jitter), using the derived cumulative moment $M_{b,\text{T}}$. For metrics with negative values of $b$, we apply the binomial theorem for negative integers.
 \item To model the LOS nature of mm-wave, we consider the versatile {\em Nakagami-m} fading channel model. To the best of our knowledge, the meta distribution for the Nakagami-{\em m} fading channel has not been investigated yet.
\item We demonstrate the application of this framework to other specialized network scenarios where (i)~SBSs are connected to MBSs via a mm-wave wireless backhaul and (ii)~a network where all transmissions are conducted in $\mu$wave spectrum.  Closed-form results are provided for special cases and asymptotic scenarios.
\end{itemize}


We validate analytical results using Monte-Carlo simulations. Numerical results give valuable insights related to the  reliability,  mean local delay, variance of CSP, and standard success probability of a {\color{black} device}.  For example, the mean local delay increases with the increasing density of SBSs  in a $\mu$wave-only network; however, it stays constant in a hybrid spectrum  network. Moreover, the data rate reliability, i.e., the fraction of devices achieving a required data rate, increases as the number of antenna elements increases. We also note that as the number of antenna elements in  a hybrid spectrum  network increases, the reduction in the variance of reliability is noticeable, which shows the importance of analyzing the higher moments of the CSP using the meta distribution. These insights would help 5G cellular network operators to find the most efficient operating antenna configurations for ultra-reliable and low latency  applications.

\subsection{Outline of the Article}
The remainder of the article is organized as follows. In Section \ref{model_mm}, we describe the system model and assumptions. In Section \ref{meta_dist_analysis}, we provide mathematical preliminaries of the meta distribution. In Section~\ref{total_moment_assoc}, we characterize the association probabilities of a typical {\color{black} device} and formulate the meta distribution of the  SIR{\color{black}/SNR} of a {\color{black}device} in the hybrid spectrum 5G cellular networks. In Section~\ref{v_sec}, we characterize the CSP and its $b^{\mathrm{th}}$ moment for direct, access, and backhaul transmissions. Finally, we derive the exact and approximate meta distributions of the SIR{\color{black}/SNR} and data rate in a hybrid spectrum network as well as $\mu$wave-only network in Section \ref{mu-wave-meta}.  Finally, Section \ref{results_mm}  presents  numerical results and Section \ref{conc_future_mm} concludes the article.

\section{System Model and Assumptions}\label{model_mm}
In this section, we describe the network deployment model (Section \ref{network_model_mmwave}), antenna model (Section \ref{beamforming}), channel model (Section \ref{channel_model}), {\color{black}device} association criteria (Section \ref{biased_association}), and SNR/SIR models for access and backhaul transmissions (Section \ref{snr-sir}).

\subsection{Network Deployment and Spectrum Allocation Model}\label{network_model_mmwave}
    We assume a two-tier cellular network architecture as shown in Fig.~\ref{mmwave_network_arch} in which the locations of the MBSs and SBSs are modeled as a two-dimensional (2D) homogeneous Poisson point process (PPP) $\mathbf{\Phi}_k=\{\mathbf{y}_{k,1}, \mathbf{y}_{k,2},...\}$ of density $\lambda_{k}$,  where $\mathbf{y}_{k,i}$ is the location of  $i^{\mathrm{th}}$ MBS (when $k=1$) or the $i^{\mathrm{th}}$ SBS (when $k=2$). Let the MBS tier be tier~1 ($k=1$) and the SBSs constitute tier~2 ($k=2$). Let $\mathcal{D}$ denotes the set of {\color{black} devices}. The locations of {\color{black} devices} in the network are modeled as independent homogeneous PPP  $\mathbf{\Phi}_\mathcal{D}=\{\mathbf{x}_{1}, \mathbf{x}_{2},....\}$ with density $\lambda_{\mathcal{D}}$, where $\mathbf{x}_{i}$ is the location of the $i^{\mathrm{th}}$
    {\color{black}device}. We assume that $\lambda_{\mathcal{D}}\gg\lambda_{2}>\lambda_{1}$ as in
    \cite{Hazem2015,ibrahim2018data,ibrahim2016mobility}. We consider a typical outdoor {\color{black}device} which is located at the origin and is denoted by $0$ and its tagged BS is denoted by $\mathbf{y}_{k,0}$, i.e., tagged MBS (when $k=1$) or tagged SBS (when $k=2$). All BSs in the $k^{\mathrm{th}}$ tier transmit with the same transmit power $P_{k}$ in the downlink. A list of the key mathematical notations  is given in Table~\ref{Notation_Summary_mmwave}.
    
     We assume that a portion $\eta W_1$ of the frequency band $W_1$ is reserved for the access transmission and the rest $(1-\eta) W_1$ is reserved for the backhaul transmission, where $W_1$, and $W_2$ denote the total available $\mu$wave spectrum and mm-wave spectrum, respectively, and $0 \leq \eta \leq 1$. Determining the optimal spectrum allocation ratio $\eta$ will be studied in our future work. 
\footnotesize
\begin{table*}[!h]
\centering
\small
\caption{{\color{black} Mathematical Notations} }
\resizebox{\textwidth}{!}{\begin{tabular}{|c|c|c|c|}
\hline
  \textbf{Notation}&\textbf{Description} & \textbf{Notation}&\textbf{Description}\\ \hline
  \footnotesize$\mathbf{\Phi}_k$;$\mathbf{\Phi}_\mathcal{D}$&\footnotesize PPP of BSs of $k^{\mathrm{th}}$ tier; PPP of {\color{black}devices} & \footnotesize $\lambda_k$;$\lambda_{\mathcal{D}}$&\footnotesize Density of BSs of $k^{\mathrm{th}}$ tier; density of {\color{black}devices}\\ \hline
  \footnotesize $P_k$&\footnotesize\shortstack{Transmit power of BSs in $k^{\mathrm{th}}$ tier} & $B_k$&\shortstack{Association bias for BSs of $k^{\mathrm{th}}$ tier}\\ \hline
  \footnotesize$\alpha_1,\alpha_{2,L},\alpha_{2,N}$&\footnotesize\shortstack{Path loss exponent of MBS tier;\\ LOS SBS; NLOS SBS} & \footnotesize$G^{o}_{1}$&\shortstack{\footnotesize omnidirectional antenna gain of $\mu$wave MBSs }\\ \hline
  \footnotesize$G_{2}^{max}$;$G_{2}^{min}$;$\theta_{a}$&\shortstack{\footnotesize Main lobe gain; side lobe gain; and \\\footnotesize 3 dB beamwidth for mm-wave SBS} &  \footnotesize$h_{l}$&\shortstack{\footnotesize Gamma fading channel gain for mm-wave SBSs}\\ \hline
  \footnotesize$g$&\shortstack{\footnotesize Rayleigh fading channel gain} &  \footnotesize$m_{l}$&\shortstack{\footnotesize Nakagami-m fading parameter where $l\in\{L,N\}$  \\\footnotesize denotes LOS and NLOS transmission links}\\ \hline
  $p_{L}$;$p_{N}$&\shortstack{\footnotesize Mm-wave blockage LOS probability; NLOS probability} &
  $\theta$&\shortstack{\footnotesize Predefined SIR/SNR threshold} \\ \hline
  \footnotesize$\bar{F}_{P_{s}}(x)$ &\shortstack{\footnotesize Meta distribution of SIR/SNR}  &  \footnotesize$P_{s}(\theta)$ &\shortstack{\footnotesize Conditional success probability (CSP)} \\ \hline
  \footnotesize$M_{b}(\theta)$& \footnotesize The $b^{th}$ moment of $P_{s}(\theta)$ &  \footnotesize$\mathcal{A}_{2}$;$\mathcal{A}_{2,L}$;$\mathcal{A}_{2,N}$& \shortstack{\footnotesize Association Probability with $\mu$wave MBS;\\\footnotesize LOS mm-wave SBS; NLOS mm-wave SBS} \\ \hline
\end{tabular}}
\label{Notation_Summary_mmwave}
\end{table*}
\normalsize

\subsection{Antenna Model}\label{beamforming}
We assume that all MBSs are equipped with omnidirectional antennas with gain denoted by $G^{\mathrm{o}}_{1}$ dB.
We consider SBSs and {\color{black}devices} are equipped with directional antennas with  sectorized gain patterns as in  \cite{andrews2017modeling,di2015stochastic,deng2017fine} to approximate the actual antenna pattern. 
The sectorized gain pattern is given by:
\begin{equation}\label{sets}
  \hspace{-.5 cm} G_{a}(\theta) = \begin{cases}
    G_{a}^\mathrm{max}& \mbox{if } \left|\theta\right| \leq \frac{\theta_{a}}{2} \\
    G_{a}^\mathrm{min}&\mbox{otherwise }
  \end{cases},
\end{equation}
where subscript $a\in\{2,\mathcal{D}\}$ denotes for SBSs and {\color{black}devices}, respectively.  Considering a $\sqrt{\mathcal{N}}\times\sqrt{\mathcal{N}}$ uniform planar square antenna array with $\mathcal{N}$ elements, the antenna parameters of a uniform planar square antenna array can be given as in \cite{deng2017fine}, i.e., $G_{a}^\mathrm{max}=\mathcal{N}$ is the main lobe antenna gain, $G_{a}^\mathrm{min}=1/\sin^{2}\left(\frac{3\pi}{2\sqrt{\mathcal{N}}}\right)$ is the side lobe antenna gain, $\theta\in [-\pi,\pi)$ is the angle of the boresight direction, and $\theta_{a}=\frac{\sqrt{3}}{\sqrt{\mathcal{N}}}$ is the main lobe beam width. A perfect beam alignment is assumed between a {\color{black}device} and its serving SBS \cite{elshaer2016downlink} \cite{bai2015coverage}. The antenna beams of the desired access links are assumed to be perfectly aligned, i.e., the direction of arrival (DoA) between the transmitter and receiver is known a priori at the BS and the effective gain on the intended access link can thus be denoted as $G_{2}^\mathrm{max} G_{\mathcal{D}}^\mathrm{max}$. This can be done by assuming that the serving mm-wave SBS and {\color{black}device} can adjust their antenna steering orientation using the estimated angles of arrivals. The analysis of the alignment errors on the desired link is beyond the scope of this work.  


\subsection{Channel Model}\label{channel_model}
\subsubsection{Path-Loss Model} The signal power decay is modeled as $ L(r)=r^{\alpha}$, where  $L(r)$ is the path loss for a typical receiver located at a distance $r$ from the transmitter and $\alpha$ is the path loss exponent (PLE).  Let $L_{1}(r)=\|r_{1,\mathcal{D}}\|^{\alpha_{1}}$ denotes
the path loss of a typical {\color{black}device} associated with the MBS tier, where $\alpha_1$ is the PLE. Similarly, $L_{2}(r)=\|r_{2,\mathcal{D}}\|^{\alpha_{2,l}}$  denotes
the path loss of a typical {\color{black}device} associated with the SBS tier where $\alpha_{2,l}=\alpha_{2,L}$ is the PLE in the case of LOS and $\alpha_{2,l}=\alpha_{2,N}$ is the PLE in the case of NLOS.  It has been shown that mm-wave LOS and NLOS conditions have markedly different PLEs \cite{maccartney2013path}. Also, we consider the near-field path loss factor $\zeta=(\frac{\text{carrier wavelength}}{4\pi})^{2}$ at 1 m \cite{elshaer2016downlink}, i.e., different path loss for different frequencies at the reference distance.
\subsubsection{Fading Model}
 For outdoor mm-wave channels, we consider a versatile Nakagami-m  fading channel model due to its analytical tractability and following the previous line of research studies \cite{bai2015coverage,turgut2017coverage,khan2016millimeter,yi2017modeling,deng2018success}. Nakagami-{\em m} fading is a general and tractable model to characterize mm-wave channels. Also, in several scenarios, Nakagami-m can approximate the Rician fading  which is commonly used to model the LOS transmissions but not tractable for meta distribution modeling \cite{nakagami1960m,simon2005digital}.
The fading parameter $m_{l}\in [1,2,...,\infty)$ where $l\in\{L,N\}$ denotes LOS and NLOS transmission links, respectively, and the mean fading power is denoted by $\Omega_{l}$.   The fading channel power $h_{l}$ follows a gamma distribution given as $f_{h_{l}}(x)=\frac{m_{l}^{m_{l}}x^{m_{l}-1}}{\Omega_{l}^{m_{l}}\Gamma(m_{l})}\exp(\frac{-m_{l}x}{\Omega_{l}})$, $x>0$, where $\Gamma(.)$ is the Gamma function, $m_{l}$ is the shape (or fading) parameter, and $\frac{m_{l}}{\Omega_{l}}$ is the scale parameter. That is, we consider $h_{l}\sim\Gamma(m_{L},1/m_{L})$ for the LOS links and $h_{l}\sim\Gamma(m_{N},1/m_{N})$ for the NLOS links. Rayleigh fading is a special case of Nakagami-{\em m} for $m_L=m_N=1$. Due to the NLOS nature of $\mu$wave channels, we assume Rayleigh fading with power normalization, i.e., the channel gain $g(\mathbf{x}, \mathbf{y})\sim \exp(1)$, is independently  distributed with the unit mean.

\subsubsection{Blockage Model for Mm-wave Access Links}\label{blockage_parag}
For mm-wave channels, LOS transmissions are vulnerable to significant penetration losses \cite{maccartney2013path}; thus LOS transmissions can be blocked with a certain probability. Following \cite{bai2015coverage,venugopal2016device,khan2016millimeter,bai2014analysis}, we consider the actual LOS region of a {\color{black}device} as a fixed LOS ball referred to as "equivalent LOS ball". For the sake of mathematical tractability, we consider a distance dependent blockage probability $p(r)$ that a mm-wave link of length $r$ observes, i.e., the LOS probability $p_{L}(r)$ if the mm-wave desired link length is less than $d$ and $p_{N}(r)$ otherwise. That is, SBSs within a LOS ball of radius $d$ are marked LOS with probability $p_{L}(r)$, while the SBSs outside that LOS ball are marked as NLOS with probability $p_{N}(r)$. Note that we will drop the notation $(r)$ in both $p_{L}(r)$ and $p_{N}(r)$ from this point onwards and we will use only $p_{L}$ and $p_{N}$, respectively.


\subsection{Association Mechanism}\label{biased_association}
 Each {\color{black}device} associates with either a MBS or a SBS depending on the maximum biased received power in the downlink. The  association criterion at the typical {\color{black}device} can be written mathematically as follows:
 \small
\begin{equation}\label{biasing}
    P_{k}B_{k}G_{k}\zeta_{k}L_{k}(r)^{-1}\ge P_{j}B_{j}G_{j}\zeta_{j}L_{\mathrm{min},j}(r)^{-1}, \forall  j\in\{1,2\}, j\neq k
    \end{equation}
    \normalsize
    \noindent where $P_{(\cdot)}$, $B_{(\cdot)}$, $G_{(\cdot)}$, and $\zeta_{(\cdot)}$ denote the transmission power, biasing factor, effective antenna gain, and near-field path loss at 1 m of the intended link, respectively, in the corresponding tier (which is determined by the index in the subscript).  Let $L_{\mathrm{min},j}(r)^{-1}$ be the minimum path loss of a typical {\color{black}device} from a BS in the $j^{\mathrm{th}}$ tier. When a {\color{black}device} associates with a mm-wave SBS in tier-2, i.e., $k=2$, the antenna gain of the intended link is  $G_{2}=G_{2}^\mathrm{max} G_{\mathcal{D}}^\mathrm{max}$, otherwise $G_{1}=G^{\mathrm{o}}_{1} G_{\mathcal{D}}$, where $G^{\mathrm{o}}_{1}$ is defined as the omnidirectional antenna gain of MBSs and $ G_{\mathcal{D}}$ is the {\color{black}device} antenna gain while operating in $\mu$wave spectrum. On the other hand, the SBS associates with a MBS offering the maximum received power in the downlink.

 \subsection{SNR/SIR Models for Access and Backhaul Transmissions}\label{snr-sir}
The {\color{black}device}  associates to either a MBS for direct transmission or a SBS for dual-hop transmission. The first link (backhaul link) transmissions occur on the $\mu$wave spectrum between MBSs and SBSs and the second link (access link) transmissions take place in the mm-wave spectrum between SBSs and {\color{black}devices}. Let $\theta_{2}$ denotes the predefined SIR threshold for SBSs in the backhaul link and $\theta_{\mathcal{D}}$ denotes the predefined SIR/SNR threshold for {\color{black}devices}. Throughout the paper, we use subscripts ``$1,2$'', ``$2,\mathcal{D}$'', ``$1,\mathcal{D}$'',  ``$\mathcal{D}$'', ``$\mathrm{BH}$'' to denote backhaul link, access  link, direct  link, {\color{black}device}, and backhaul,  respectively.
\subsubsection{Backhaul Transmission}
The $\mathrm{SIR}$ of a typical SBS associated with a MBS can be modeled as:
       \begin{equation} \label{SIR-backhaul-O}
       \mathrm{SIR}_{1,2} = \frac {P_{1}r_{1,2}^{-\alpha_{1}}g(0, \mathbf{y}_{1,0})}{\mathcal{I}_{1,2}},
       \end{equation}
       where $\mathcal{I}_{1,2}$ denotes the backhaul interference received at a SBS from MBSs that are scheduled to transmit on the same resource block excluding the tagged MBS. Then,
      $\mathcal{I}_{1,2}=P_{1}\sum_{i:\mathbf{y}_{1,i}\in\mathbf{\Phi}_1\backslash \{\mathbf{y}_{1,0}\}}\|\mathbf{y}_{1,i}\|^{-\alpha_{1}}g(0,\mathbf{y}_{1,i}).$
\subsubsection{Direct Transmission} The $\mathrm{SIR}$ of a typical {\color{black}device} associated directly with a MBS is modeled as:
       \begin{equation} \label{SIR-direct-link}
       \mathrm{SIR}_{1,\mathcal{D}} = \frac{P_{1}r_{1,\mathcal{D}}^{-\alpha_{1}}g(0, \mathbf{y}_{1,0})}{\mathcal{I}_{1,\mathcal{D}}},
       \end{equation}
       where $\mathcal{I}_{1,\mathcal{D}}$ denotes the interference received at a typical {\color{black}device} from  MBSs  excluding the tagged MBS. Then $\mathcal{I}_{1,\mathcal{D}}$ can be  calculated as:
       $
        \mathcal{I}_{1,\mathcal{D}}=P_{1}\sum_{i:\mathbf{y}_{1,i}\in\mathbf{\Phi}_1\backslash \{\mathbf{y}_{1,0}\}}\|\mathbf{y}_{1,i}\|^{-\alpha_1}g(0,\mathbf{y}_{1,i}).
        $
\subsubsection{Access Transmission}
The SNR of a typical {\color{black}device}  associated with a mm-wave SBS  is modeled as:
         \begin{equation} \label{snr-access-O-mmwave}
       \mathrm{SNR}_{2,\mathcal{D}} = \frac {P_{2}G_{2}\zeta_{2}\|r_{2,\mathcal{D}}\|^{-\alpha_{2,l}}h_{l}(0, \mathbf{y}_{2,0})}{\sigma_{2}^{2}},
       \end{equation}
where $\zeta_{2}$ is the near-field path loss at 1 m for mm-wave channels,  and $\sigma_{2}^{2}$ is the variance of the additive white Gaussian noise at the {\color{black}device} receiver. Given highly directional beams and high sensitivity to blockage, recent studies showed that mm-wave networks are typically noise limited \cite{ghosh2014millimeter,singh2015tractable,andrews2017modeling}.
\section{The Meta Distribution: Mathematical Preliminaries}\label{meta_dist_analysis}
In this section, we define the meta distribution  of the SIR of a typical {\color{black}device} and highlight exact and  approximate methods to evaluate the meta distribution. 

\begin{Definition}[Meta Distribution of the SIR and CSP]
The meta distribution $\bar{F}_{P_{s}}(x)$ is the complementary cumulative distribution function (CCDF) of the CSP (or reliability) $P_{s}(\theta)$ and given by \cite{haenggi2016meta}:
\begin{equation}\label{meta-1}
\bar{F}_{P_{s}}(x)\overset{\Delta}{=}\mathbb{P}(P_{s}(\theta)>x), \quad\quad x\in [0,1],
\end{equation}
where, conditioned on the locations of the transmitters
and that the desired transmitter is active, the CSP $P_{s}(\theta)$ of a typical {\color{black}device} \cite{haenggi2016meta} can be given as
$
P_{s}(\theta)\overset{\Delta}{=}\mathbb{P}(\mathrm{SIR}>\theta|\mathbf{\Phi},\text{tx})
$
where $\theta$ is the desired $\mathrm{SIR}$. 
\end{Definition}

Physically, the meta distribution provides the fraction of the active links whose CSP (or reliability) is greater than the reliability threshold $x$.
Given $M_{b}(\theta)$ denotes the $b^{\mathrm{th}}$ moment of $P_{s}(\theta)$, i.e., $M_{b}(\theta)\overset{\Delta}{=}\mathbb{E}^{0}(P_{s}(\theta)^{b})$, $b\in\mathbb{C}$,
the exact meta distribution can be given using the Gil-Pelaez theorem \cite{gil1951note}  as {\color{black}\cite{haenggi2016meta}}:
\begin{equation}\label{Gil-pelaez-2hops}
\bar{F}_{P_{s}}(x)=\frac{1}{2}+\frac{1}{\pi}\int_{0}^{\infty}\frac{\Im\left(e^{-jt\log{x}}M_{jt}(\theta)\right)}{t}\text{d}t,
\end{equation}
where $\Im(w)$ is imaginary part of $w\in \mathbb{C}$ and $M_{jt}(\theta)$ denotes  the imaginary moments of $P_{s}(\theta)$, i.e., , $j\overset{\Delta}{=}\sqrt{-1}$. Using moment matching techniques and taking $\beta\overset{\Delta}{=}\frac{(M_{1}(\theta)-M_{2}(\theta))(1-M_{1}(\theta))}{M_{2}(\theta)-M_{1}(\theta)^{2}}$, the {meta distribution of the CSP} can be approximated using the Beta distribution as follows:
\begin{equation}\label{beta-match}
\bar{F}_{P_{s}}(x)\approx 1-I_{x}\left(\frac{\beta M_{1}(\theta)}{1-M_{1}(\theta)},\beta\right), \quad\quad x\in [0,1],
\end{equation}
where   $M_{1}(\theta)$ and $M_{2}(\theta)$ are the first and the second moments, respectively; $I_{x}(a,b)$ is the regularized incomplete Beta function
$
I_{x}(a,b)\overset{\Delta}{=}\frac{\int_{0}^{x}t^{a-1}(1-t)^{b-1}d\text{t}}{B(a,b)}
$
and $B(a,b)$ is the Beta function. 

\section{The Meta Distribution of the SIR/SNR in Hybrid Spectrum Networks}\label{total_moment_assoc}
To characterize the meta distribution of the SIR/SNR of a typical {\color{black}device} that can associate with either a $\mu$wave MBS  with probability $\mathcal{A}_{1}$ or with a wireless backhauled mm-wave SBS  with probability $\mathcal{A}_{2}$, the methodology of analysis is listed as follows:
\begin{enumerate}
    \item Derive the probability of a typical  {\color{black}device} associating with $\mu$wave MBSs $\mathcal{A}_{1}$, LOS mm-wave SBSs $\mathcal{A}_{2,{\rm L}}$, and NLOS mm-wave SBSs $\mathcal{A}_{2,{\rm N}}$ where $\mathcal{A}_2=\mathcal{A}_{2,{\rm L}}+\mathcal{A}_{2,{\rm N}}$ ({\bf Section~\ref{A_222}}).
    \item Formulate the meta distribution of the SIR/SNR of a {\color{black}device} in the  hybrid network ($\bar{F}^b_{P_{s,\text{T}}}(x)$) considering the direct link and dual-hop link with wireless backhaul transmission ({\bf Section~\ref{meta_dual}}). 
    \item Formulate the CSP (${P_{s,\text{T}}}(\theta)$) and its $b^{\mathrm{th}}$ moment  $(M_{b,\text{T}})$ ({\bf Section~\ref{meta_dual}}).
    \item Derive the CSP at backhaul link $P_{s,\mathrm{BH}}(\theta_{2})$, CSP at access link $P_{s,2}(\theta_{\mathcal{D}})$, and CSP at direct link $P_{s,1}(\theta_{\mathcal{D}})$. Derive the $b^{\mathrm{th}}$ moments of CSPs, i.e., $M_{b,\mathrm{BH}}(\theta_{2})$, $M_{b,2}(\theta_{\mathcal{D}})$, and $M_{b,1}(\theta_{\mathcal{D}})$ for  backhaul link, access link, and direct link transmissions, respectively ({\bf Section~\ref{v_sec}}). 
    \item Obtain the meta distributions of SIR/SNR and data rate in hybrid spectrum  network using Gil-Pelaez inversion and the Beta approximation ({\bf Section~\ref{vi_sec}}).
\end{enumerate}

\subsection{Association Probabilities in Hybrid Spectrum Networks}\label{A_222}
In this subsection, we characterize the probabilities with which a typical {\color{black}device} associates with $\mu$wave MBSs ($\mathcal{A}_{1}$) or mm-wave SBSs ($\mathcal{A}_{2}$). The results are presented in the following.
\begin{lemma}[The Probability of Associating with mm-wave SBSs]\label{1}
	The  probability of a typical {\color{black}device} to associate with a mm-wave SBS, using the association scheme  in Eq. \eqref{biasing}, can be expressed as:
	\small
	\begin{align}
	&\mathcal{A}_{2}=1-\frac{2\pi\lambda_{1}}{\hat{a}\alpha_{1}}\Bigg(\int_{0}^{d^{\alpha_{2,L}}}H(l_1)e^{-\pi\lambda_{2}p_{L}l_{1}^{\frac{2}{\alpha_{2,L}}}}d\text{l}_{1}+\int_{d^{\alpha_{2,L}}}^{d^{\alpha_{2,N}}}H(l_1)e^{-\pi\lambda_{2}p_{L}d^{2}}d\text{l}_{1}+
	\int_{d^{\alpha_{2,N}}}^{\infty}H(l_1)e^{-\pi\lambda_{2}\bigg[(p_{L}-p_N)d^{2}+p_{N}l_{1}^{\frac{2}{\alpha_{2,N}}}\bigg]}
	d\text{l}_{1}\Bigg),
	\end{align}
	\normalsize
	where $\hat{a}\overset{\Delta}{=}\frac{P_{2}B_{2}G_{2}\zeta_{2}}{P_{1}B_{1}G_{1}\zeta_{1}}$ and $H(l_1)\overset{\Delta}{=}\left(\frac{l_1}{\hat{a}}\right)^{\frac{2}{\alpha_{1}}-1}\exp\left(-\pi\lambda_{1}\left(\frac{l_1}{\hat{a}}\right)^{\frac{2}{\alpha_{1}}}\right)$. Subsequently, the probability of a {\color{black}device} to associate with a $\mu$wave MBS can be given as $\mathcal{A}_{1}=1-\mathcal{A}_{2}$. The conditional association probability for a typical {\color{black}device} to associate with SBS is as follows:
	\begin{align}
	\bar{\mathcal{A}}_{2}(l_1)&=1-\frac{2\pi\lambda_{1}}{\hat{a}\alpha_{1}}\Bigg(H(l_1)e^{-\pi\lambda_{2}p_{L}l_{1}^{\frac{2}{\alpha_{2,L}}}}+H(l_1)e^{-\pi\lambda_{2}p_{L}d^{2}}+
	H(l_1)e^{-\pi\lambda_{2}\left[(p_{L}-p_N)d^{2}+p_{N}l_{1}^{\frac{2}{\alpha_{2,N}}}\right]}
	\Bigg),
	\end{align}
	\normalsize
	subsequently, $\bar{\mathcal{A}}_{1}(l_1)=1-\bar{\mathcal{A}}_{2}(l_1)$.
\end{lemma}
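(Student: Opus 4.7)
The plan is to rewrite the biased received-power rule \eqref{biasing} as a comparison between the minimum path losses in the two tiers, and then decouple the two tiers by conditioning on the MBS side. Concretely, the association event with a mm-wave SBS is
\[
\mathcal{A}_{2} \;=\; \mathbb{P}\!\left(\hat{a}\,L_{\min,1} \ge L_{\min,2}\right), \qquad \mathcal{A}_{1} \;=\; \mathbb{P}\!\left(L_{\min,2} > \hat{a}\,L_{\min,1}\right),
\]
where $L_{\min,k}$ is the smallest path loss from the typical device to any BS in tier $k$ and $\hat{a}=P_{2}B_{2}G_{2}\zeta_{2}/(P_{1}B_{1}G_{1}\zeta_{1})$. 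The identity $\mathcal{A}_{2}=1-\mathcal{A}_{1}$ lets me work with $\mathcal{A}_{1}$ (the simpler ``complement'' event) and obtain $\mathcal{A}_{2}$ at the end.

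First I would derive the density of $L_{\min,1}$: since $\mathbf{\Phi}_{1}$ is a 2D homogeneous PPP of density $\lambda_{1}$, the nearest-MBS distance has density $2\pi\lambda_{1}r\,e^{-\pi\lambda_{1}r^{2}}$, and a change of variable $y=r^{\alpha_{1}}$ gives $f_{L_{\min,1}}(y)=(2\pi\lambda_{1}/\alpha_{1})\,y^{2/\alpha_{1}-1}\exp(-\pi\lambda_{1}y^{2/\alpha_{1}})$. Next I would compute the CCDF $\bar F_{L_{\min,2}}(y)=\mathbb{P}(L_{\min,2}>y)$ as a void probability of the mm-wave BS process. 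Using the LOS-ball blockage model of Section~\ref{blockage_parag}, the LOS SBSs form a thinned PPP of density $\lambda_{2}p_{L}$ supported on the disk of radius $d$ and the NLOS SBSs form an independent thinned PPP of density $\lambda_{2}p_{N}$ supported outside that disk. This forces a three-regime split in $y$: for $y\le d^{\alpha_{2,L}}$ only LOS SBSs can produce such a path loss and the void probability is $\exp(-\pi\lambda_{2}p_{L}\,y^{2/\alpha_{2,L}})$; for $d^{\alpha_{2,L}}<y\le d^{\alpha_{2,N}}$ the mm-wave path-loss process has a ``gap'', so the event reduces to no LOS SBS inside the full LOS ball, with probability $\exp(-\pi\lambda_{2}p_{L}d^{2})$; for $y>d^{\alpha_{2,N}}$ I add the NLOS void contribution on the annulus $d<r<y^{1/\alpha_{2,N}}$, which combines with the LOS-ball factor to give $\exp(-\pi\lambda_{2}[(p_{L}-p_{N})d^{2}+p_{N}y^{2/\alpha_{2,N}}])$.

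I would then apply the tower property, treating $L_{\min,1}$ and $L_{\min,2}$ as independent (the MBS tier and SBS tier are independent PPPs), to write
\[
\mathcal{A}_{1} \;=\; \int_{0}^{\infty} f_{L_{\min,1}}(y)\,\bar F_{L_{\min,2}}(\hat{a}\,y)\,dy .
\]
Substituting $l_{1}=\hat{a}\,y$ with Jacobian $dy=dl_{1}/\hat{a}$ turns $f_{L_{\min,1}}(l_{1}/\hat{a})/\hat{a}$ into $(2\pi\lambda_{1}/(\hat{a}\alpha_{1}))\,H(l_{1})$ with $H$ exactly as in the lemma. Partitioning the $l_{1}$-axis at $d^{\alpha_{2,L}}$ and $d^{\alpha_{2,N}}$ to match the three-regime formula for $\bar F_{L_{\min,2}}$ yields the three integrals displayed in the statement, and $\mathcal{A}_{2}=1-\mathcal{A}_{1}$ finishes the main identity. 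The conditional association probability $\bar{\mathcal{A}}_{2}(l_{1})$ is obtained by stripping the outer integral and summing the three per-$l_{1}$ integrand expressions corresponding to the LOS-only, gap, and LOS-plus-NLOS cases.

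The main obstacle I anticipate is the bookkeeping around the three-way partition: because $\alpha_{2,L}\neq\alpha_{2,N}$ in general, the mm-wave path-loss process has a support gap on $(d^{\alpha_{2,L}},d^{\alpha_{2,N}}]$, and one must be careful that the ``LOS inside the ball, NLOS outside the ball'' structure of the blockage model translates cleanly into disjoint radial PPPs whose void probabilities multiply. Once this geometric decoupling is made explicit, the rest is a standard nearest-BS computation plus the change of variables that rewrites the integrand in the normalized form $H(l_{1})$.
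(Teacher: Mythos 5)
Your proposal is correct and follows essentially the same route the paper takes (it defers its proof to Appendix~A of the technical report, citing the approach of Elshaer et al.): compare the biased minimum path losses of the two tiers, use the nearest-point distribution of $\mathbf{\Phi}_1$ for $L_{\min,1}$, compute $\bar F_{L_{\min,2}}$ as a product of void probabilities of the independently thinned LOS/NLOS processes with the three-regime split at $d^{\alpha_{2,L}}$ and $d^{\alpha_{2,N}}$, and change variables $l_1=\hat a y$ to obtain the normalized integrand $H(l_1)$. Your regime boundaries, exponents, and the final complementation $\mathcal{A}_2=1-\mathcal{A}_1$ all check out against the stated formula.
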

\begin{proof}
Using the approach in \cite{elshaer2016downlink}, we derive Lemma 1 in {\bf Appendix A} of
our technical report \cite{dblp}.
\end{proof}
A closed-form expression of $\mathcal{A}_1$ can be derived for a case of practical interest as follows. 
\begin{Corollary}
When $\alpha_{1}=4$, $\alpha_{2,L}=2$, and $\alpha_{2,N}=4$, then $\mathcal{A}_1$ can be given in closed-form as follows: 
\begin{align}\label{close_form} 
\mathcal{A}_1=\frac{e^C
(\Phi[\sqrt{C}+\sqrt{\pi\lambda_2 p_L d^2}]-\Phi[\sqrt{C}])
}
{\sqrt{p_L \lambda_2/\hat{a}}}
+
\frac{e^{-d^2 \pi p_L \lambda_2}(e^{-\pi \lambda_1 \sqrt{d^2/\hat{a}}}-e^{-\pi \lambda_1 \sqrt{d^4/\hat{a}}})}{\pi \lambda_1/2 \hat{a}}
+
\frac{e^{d^2 \pi (p_N-p_L) \lambda_2-C_1 \sqrt{d^4/\hat{a}}})}{ C_1/2\hat{a}},
	\end{align}
	where $\Phi(\cdot)$ is the error function,  $C=\frac{\pi \lambda^2_1}{4 \hat{a} p_L \lambda_2}$ and $C_1=\pi (\lambda_1+\sqrt{\hat{a}}p_N\lambda_2)$ and  $\mathcal{A}_{2}=1-\mathcal{A}_{1}$.
\end{Corollary}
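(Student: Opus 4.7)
The plan is to start from the integral expression for $\mathcal{A}_2$ in Lemma~1 and write $\mathcal{A}_1 = 1-\mathcal{A}_2 = \frac{2\pi\lambda_1}{\hat{a}\alpha_1}(I_1+I_2+I_3)$, where $I_1, I_2, I_3$ are the three integrals over $l_1$ appearing there. Substituting $\alpha_1=4$ reduces $H(l_1)$ to $(l_1/\hat{a})^{-1/2}\exp\bigl(-\pi\lambda_1\sqrt{l_1/\hat{a}}\bigr)$, while $\alpha_{2,L}=2$ and $\alpha_{2,N}=4$ turn the blockage exponents into a term linear in $l_1$ (first integral) and linear in $\sqrt{l_1}$ (third integral), with a constant exponent in the middle integral.

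I would then apply the monotone change of variables $u = \sqrt{l_1/\hat{a}}$, under which $l_1 = \hat{a}u^2$ and $(l_1/\hat{a})^{-1/2}\,dl_1 = 2\hat{a}\,du$. This simultaneously kills the $1/\sqrt{\cdot}$ prefactor in $H(l_1)$ and makes each integrand a clean exponential in $u$:
\[
I_1 = 2\hat{a}\int_0^{d/\sqrt{\hat{a}}} e^{-\pi\lambda_2 p_L\hat{a}\,u^2 - \pi\lambda_1 u}\,du,
\]
\[
I_2 = 2\hat{a}\,e^{-\pi\lambda_2 p_L d^2}\int_{d/\sqrt{\hat{a}}}^{d^2/\sqrt{\hat{a}}} e^{-\pi\lambda_1 u}\,du,
\]
\[
I_3 = 2\hat{a}\,e^{-\pi\lambda_2(p_L-p_N)d^2}\int_{d^2/\sqrt{\hat{a}}}^{\infty} e^{-(\pi\lambda_1 + \pi\lambda_2 p_N \sqrt{\hat{a}})u}\,du.
\]

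The main obstacle is $I_1$, the only genuinely Gaussian piece. I would complete the square in the exponent, writing $-\pi\lambda_2 p_L\hat{a}\,u^2 - \pi\lambda_1 u = -\pi\lambda_2 p_L\hat{a}\bigl(u + \tfrac{\lambda_1}{2\hat{a}\lambda_2 p_L}\bigr)^2 + C$, with $C = \frac{\pi\lambda_1^2}{4\hat{a}\lambda_2 p_L}$ matching the constant defined in the Corollary. The residual integral is a standard error-function expression, and evaluating the affine shift $\sqrt{a}(u+b/(2a))$ at $u=0$ and $u=d/\sqrt{\hat{a}}$ produces exactly the arguments $\sqrt{C}$ and $\sqrt{C}+\sqrt{\pi\lambda_2 p_L d^2}$, giving the term $\sqrt{\hat{a}/(p_L\lambda_2)}\,e^C\bigl[\Phi(\sqrt{C}+\sqrt{\pi\lambda_2 p_L d^2})-\Phi(\sqrt{C})\bigr]$ up to the outer prefactor.

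The remaining two integrals are elementary. For $I_2$, direct integration of $e^{-\pi\lambda_1 u}$ over $[d/\sqrt{\hat{a}},\,d^2/\sqrt{\hat{a}}]$ gives a difference of exponentials with arguments $\pi\lambda_1\sqrt{d^2/\hat{a}}$ and $\pi\lambda_1\sqrt{d^4/\hat{a}}$. For $I_3$, identifying the decay rate $\pi\lambda_1 + \pi\sqrt{\hat{a}}p_N\lambda_2 = C_1$ reduces the integral to a single exponential evaluated at the lower limit, yielding $e^{-C_1\sqrt{d^4/\hat{a}}}/C_1$. Multiplying each $I_j$ by the common prefactor $\frac{2\pi\lambda_1}{\hat{a}\alpha_1}=\frac{\pi\lambda_1}{2\hat{a}}$, simplifying constants with the definitions of $C$ and $C_1$, and summing the three contributions gives the closed form for $\mathcal{A}_1$ stated in the Corollary; the identity $\mathcal{A}_2 = 1-\mathcal{A}_1$ is then immediate.
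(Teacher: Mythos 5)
Your derivation is the right one, and it is essentially the only route: the paper gives no explicit proof of this corollary (Lemma~1 itself is proved in an external technical report), so the intended argument is exactly what you do --- specialize the three integrals in Lemma~1, substitute $u=\sqrt{l_1/\hat a}$ so that $H(l_1)\,dl_1 = 2\hat a\,e^{-\pi\lambda_1 u}\,du$, complete the square in $I_1$, and evaluate $I_2,I_3$ as elementary exponential integrals. Your expressions for $I_1,I_2,I_3$, the identification $C=\pi\lambda_1^2/(4\hat a p_L\lambda_2)$ from the completed square, the $\erf$ arguments $\sqrt C$ and $\sqrt C+\sqrt{\pi\lambda_2 p_L d^2}$, and the decay rate $C_1=\pi(\lambda_1+\sqrt{\hat a}\,p_N\lambda_2)$ in $I_3$ are all correct.

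The one step that does not hold up is your last sentence. Carrying out the evaluations you describe gives
$I_1=\frac{e^C(\Phi[\sqrt C+\sqrt{\pi\lambda_2 p_L d^2}]-\Phi[\sqrt C])}{\sqrt{p_L\lambda_2/\hat a}}$,
$I_2=\frac{2\hat a}{\pi\lambda_1}e^{-\pi\lambda_2 p_L d^2}\big(e^{-\pi\lambda_1\sqrt{d^2/\hat a}}-e^{-\pi\lambda_1\sqrt{d^4/\hat a}}\big)$, and
$I_3=\frac{2\hat a}{C_1}e^{d^2\pi(p_N-p_L)\lambda_2-C_1\sqrt{d^4/\hat a}}$;
these are \emph{already} the three displayed terms of the corollary, term by term, before any prefactor is applied. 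Hence multiplying by $\frac{2\pi\lambda_1}{\hat a\alpha_1}=\frac{\pi\lambda_1}{2\hat a}$ cannot ``give the closed form stated in the Corollary'' --- it gives $\frac{\pi\lambda_1}{2\hat a}$ times the printed right-hand side. The correct statement is $\mathcal{A}_1=\frac{\pi\lambda_1}{2\hat a}(I_1+I_2+I_3)$, i.e.\ the printed corollary has dropped the overall factor $\frac{\pi\lambda_1}{2\hat a}$. You can confirm which version is right from the paper's own remark following the corollary: as $\hat a\to\infty$ the printed first term scales like $\sqrt{\hat a}\to\infty$, which is impossible for a probability and contradicts the claim that $\mathcal{A}_1$ becomes small, whereas with the prefactor the term scales like $\lambda_1/\sqrt{\hat a}\to 0$ as asserted. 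So your computation is correct; you should state the result as $\frac{\pi\lambda_1}{2\hat a}$ times the displayed expression rather than asserting agreement with the formula as printed.
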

It can be seen from {\bf Corollary~1} that when the number of antenna elements $\mathcal{N}$ goes to infinity, i.e., $G_2 \rightarrow \infty$, $\hat{a} \rightarrow \infty$,  then $\mathcal{A}_1$ can be simplified as 
$
\mathcal{A}_1=\frac{
\Phi[\sqrt{\pi\lambda_2 p_L d^2}]
}
{\sqrt{p_L \lambda_2/\hat{a}}}
+
\frac{e^{d^2 \pi (p_N-p_L) \lambda_2}}{ C_1/2\hat{a}},
$ which shows that association probability to MBS will be very small. Similar insights can be extracted for other parameters.
 
 In order to derive the $b^{\mathrm{th}}$ moment of CSP $P_{s,2}(\theta_{\mathcal{D}})$ on an access link when a {\color{black}device} associates with a SBS (the CSP will be discussed later in Lemma \ref{3}), we have to derive the probability of a {\color{black}device} to associate  with LOS SBS $\mathcal{A}_{2,L}$ and NLOS SBS $\mathcal{A}_{2,N}$ which are defined follows.
\begin{lemma}[The Probability of Associating with LOS and NLOS mm-wave SBSs]\label{2}
When a typical {\color{black}device} associates with the mm-wave SBS tier, this typical {\color{black}device} has two possibilities to connect to (a) a LOS mm-wave SBS with association probability
	$\mathcal{A}_{2,L}$ and (b) a NLOS mm-wave SBS with association probability
	$\mathcal{A}_{2,N}$  which are characterized, respectively, as follows:
	\small
	\begin{align}\label{ass_SBS_LOS}
	\mathcal{A}_{2,L}&=\int_{0}^{d^{\alpha_{2,L}}} \bar{\mathcal{A}}_{2,L}(l_{2,L}) d\text{l}_{2,L},\quad\mathcal{A}_{2,N}=\int_{d^{\alpha_{2,N}}}^{\infty} \bar{\mathcal{A}}_{2,N}(l_{2,N}) d\text{l}_{2,N},
	\end{align}
	\normalsize
	where $\bar{\mathcal{A}}_{2,L}(l_{2,L})$ and $\bar{\mathcal{A}}_{2,N}(l_{2,N})$  are the conditional probabilities with which a typical {\color{black}device} may associate to the LOS and NLOS mm-wave SBSs, respectively, and are defined as follows:
	\small
	\begin{align*} \bar{\mathcal{A}}_{2,L}(l_{2,L})\overset{\Delta}{=}&\frac{2\pi\lambda_{2}p_{L}}{\alpha_{2,L}}l_{2,L}^{\frac{2}{\alpha_{2,L}}-1}e^{\left(-\pi\lambda_{1}\left(\bar{a}l_{2,L}\right)^{\frac{2}{\alpha_{1}}}-\pi\lambda_{2}p_{L}l_{2,L}^\frac{2}{\alpha_{2,L}}\right)},\\
	\bar{\mathcal{A}}_{2,N}(l_{2,N})\overset{\Delta}{=}&
	\frac{2\pi\lambda_{2}p_{N}}{\alpha_{2,N}}l_{\text{2,N}}^{\frac{2}{\alpha_{2,N}}-1}\exp\left(-\pi\lambda_{1}\left(\bar{a}l_{2,N}\right)^{\frac{2}{\alpha_{1}}}-\pi\lambda_{2}\big[p_{L}d^{2}+p_{N}(l_{2,N}^\frac{2}{\alpha_{2,N}}-d^{2})\big]\right)d\text{l}_{2,N},
	\end{align*}
	\normalsize
	where $\bar{a}\overset{\Delta}{=}\frac{P_{1}B_{1}G_{1}\zeta_{1}}{P_{2}B_{2}G_{2}\zeta_{2}}$, $\bar{\mathcal{A}}_{2}(l_{2})=\bar{\mathcal{A}}_{2,L}(l_{2,N})+\bar{\mathcal{A}}_{2,N}(l_{2,N})$ and $\mathcal{A}_{2}=\mathcal{A}_{2,L}+\mathcal{A}_{2,N}$.
\end{lemma}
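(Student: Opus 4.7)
The plan is to decompose the mm-wave SBS point process $\mathbf{\Phi}_2$ into two independent processes via the standard thinning property of a PPP under the equivalent-LOS-ball blockage model: LOS SBSs form a PPP of density $\lambda_2 p_L$ restricted to the ball of radius $d$ around the typical device, and NLOS SBSs form a PPP of density $\lambda_2 p_N$ restricted to the complement. Since $\mu$wave MBSs and mm-wave SBSs appear in the association rule~\eqref{biasing} with different effective gains and transmit powers, but LOS and NLOS SBSs share the same power and gain parameters, the intra-tier LOS-vs-NLOS comparison reduces to a pure path-loss comparison, while the inter-tier comparison is absorbed into the ratio $\bar{a} = P_1 B_1 G_1 \zeta_1/(P_2 B_2 G_2 \zeta_2)$.

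For the LOS sub-case, I would first obtain the PDF of the path loss $L_{2,L}$ of the closest LOS SBS. Starting from the PDF of the distance to the nearest LOS SBS, $f_{R_L}(r)=2\pi\lambda_2 p_L r\exp(-\pi\lambda_2 p_L r^2)$ for $r\le d$, a change of variables $l=r^{\alpha_{2,L}}$ produces exactly the prefactor $\tfrac{2\pi\lambda_2 p_L}{\alpha_{2,L}}l_{2,L}^{2/\alpha_{2,L}-1}\exp(-\pi\lambda_2 p_L l_{2,L}^{2/\alpha_{2,L}})$ appearing in the statement. I would then multiply by the probability that no MBS dominates this LOS SBS in biased received power: translating \eqref{biasing} with $k=2$, $j=1$ into a distance condition $r_1\ge(\bar{a}l_{2,L})^{1/\alpha_1}$ and invoking the void probability of the MBS PPP gives the factor $\exp(-\pi\lambda_1(\bar{a}l_{2,L})^{2/\alpha_1})$. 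Crucially, no NLOS-exclusion factor is needed, because any NLOS SBS has path loss at least $d^{\alpha_{2,N}}$ and the LOS candidate under consideration satisfies $l_{2,L}\le d^{\alpha_{2,L}}<d^{\alpha_{2,N}}$, so it automatically dominates every NLOS SBS.

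For the NLOS sub-case, I would derive $f_{L_{2,N}}$ from the void probability of the NLOS PPP outside the LOS ball, yielding the prefactor together with a $\exp(-\pi\lambda_2 p_N(l_{2,N}^{2/\alpha_{2,N}}-d^2))$ term. Again a void-probability argument on the MBS PPP yields the factor $\exp(-\pi\lambda_1(\bar{a}l_{2,N})^{2/\alpha_1})$. The new ingredient here is that the symmetric observation now reverses: because $l_{2,N}\ge d^{\alpha_{2,N}}>d^{\alpha_{2,L}}$, any LOS SBS in the ball of radius $d$ would dominate the nominal NLOS candidate, so NLOS association additionally requires that no LOS SBS exists in that ball. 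Independence of the thinned processes makes this probability simply $\exp(-\pi\lambda_2 p_L d^2)$, which combines with the no-closer-NLOS factor to give the bracket $[p_L d^2 + p_N(l_{2,N}^{2/\alpha_{2,N}}-d^2)]$ appearing in the statement.

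Integrating $\bar{\mathcal{A}}_{2,L}(l_{2,L})$ over $l_{2,L}\in[0,d^{\alpha_{2,L}}]$ and $\bar{\mathcal{A}}_{2,N}(l_{2,N})$ over $l_{2,N}\in[d^{\alpha_{2,N}},\infty)$ then yields $\mathcal{A}_{2,L}$ and $\mathcal{A}_{2,N}$ as in \eqref{ass_SBS_LOS}, and consistency with Lemma~1 follows from $\mathcal{A}_2 = \mathcal{A}_{2,L}+\mathcal{A}_{2,N}$. The main subtlety, rather than any calculation, is bookkeeping: carefully translating the biased-RSS criterion~\eqref{biasing} into the correct exclusion region for each competing BS class (other LOS SBSs, other NLOS SBSs, MBSs) and exploiting the ordering $d^{\alpha_{2,L}}<d^{\alpha_{2,N}}$ to recognise which of those exclusions are automatic and which contribute explicit exponential factors. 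Once this bookkeeping is in place the remainder is a direct application of standard void-probability and Jacobian manipulations.
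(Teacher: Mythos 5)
Your proposal is correct: the thinning of $\mathbf{\Phi}_2$ into independent LOS/NLOS processes, the path-loss PDFs obtained via void probabilities and the change of variables $l=r^{\alpha_{2,l}}$, the MBS exclusion disk of radius $(\bar{a}l)^{1/\alpha_1}$, and the observation that the ordering $d^{\alpha_{2,L}}<d^{\alpha_{2,N}}$ makes the NLOS exclusion automatic in the LOS case but forces the explicit $\exp(-\pi\lambda_2 p_L d^2)$ factor in the NLOS case all reproduce the stated expressions exactly. This is essentially the same route the paper takes, since its (externally deferred) proof explicitly follows the nearest-BS path-loss/void-probability approach of Turgut and Gursoy.
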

\begin{proof}
	Using the approach in \cite{turgut2017coverage}, we derive Lemma 2 in {\bf Appendix B} of
our technical report \cite{dblp}.
\end{proof}
A case of interest is when the number of antenna elements at mm-wave SBSs increases asymptotically. In such a case, the LOS and NLOS association probabilities can be simplified as follows: 
\begin{Corollary} When the number of antenna elements at mm-wave SBSs increases, i.e., $\mathcal{N}\rightarrow \infty$, $\alpha_{1}=4$, $\alpha_{2,L}=2$, and $\alpha_{2,N}=4$, then $\bar{a}\rightarrow 0$. The association probabilities can be given in closed-form as follows:
$\mathcal{A}_{2,L}{=}1 - e^{-\pi p_L  d^2 \lambda_2}$,
$\quad\quad\mathcal{A}_{2,N}{=}
	e^{d^2 \pi (-p_L + p_N) \lambda_2} (1 - 
   \pi p_N d^2 \lambda_2 \:_1F_1[1; 
     2; \pi p_N d^2 \lambda_2]),$
     
\noindent where $\:_1F_1[a;b;z]$ is the Kummer Confluent Hypergeometric function.
\end{Corollary}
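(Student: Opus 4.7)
The plan is to start from the conditional densities $\bar{\mathcal{A}}_{2,L}(l_{2,L})$ and $\bar{\mathcal{A}}_{2,N}(l_{2,N})$ established in Lemma~\ref{2}, specialize the path-loss exponents, take the asymptotic antenna regime $\mathcal{N}\to\infty$, and then compute the two resulting single integrals in closed form.

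First I would argue the limit $\bar{a}\to 0$. Since $\bar{a}=\frac{P_{1}B_{1}G_{1}\zeta_{1}}{P_{2}B_{2}G_{2}\zeta_{2}}$ and the main-lobe gain $G_{2}^{\max}=\mathcal{N}$ grows unboundedly with the number of array elements, we have $G_{2}\to\infty$ and hence $\bar{a}\to 0$. Consequently, inside both $\bar{\mathcal{A}}_{2,L}$ and $\bar{\mathcal{A}}_{2,N}$ the factor $\exp(-\pi\lambda_{1}(\bar{a}l)^{2/\alpha_{1}})$ converges to $1$ uniformly on any bounded range of $l$, which lets us drop the $\lambda_{1}$-dependent blocking term before integrating (a dominated-convergence argument handles the unbounded NLOS range since the remaining exponential in $l^{1/2}$ decays).

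Next I would specialize $\alpha_{2,L}=2$ in $\bar{\mathcal{A}}_{2,L}$. The exponent $l_{2,L}^{2/\alpha_{2,L}-1}$ collapses to $l_{2,L}^{0}$, so the integrand reduces to $\pi\lambda_{2}p_{L}\,e^{-\pi\lambda_{2}p_{L}l_{2,L}}$. Integrating over $[0,d^{2}]$ is then the CDF of an exponential distribution and yields $\mathcal{A}_{2,L}=1-e^{-\pi p_{L}d^{2}\lambda_{2}}$, giving the first identity. For $\mathcal{A}_{2,N}$, I would specialize $\alpha_{2,N}=4$, pull the $l$-independent factor $e^{\pi(p_{N}-p_{L})d^{2}\lambda_{2}}$ out of the integral, and apply the substitution $u=l_{2,N}^{1/2}$ (so that $l_{2,N}^{-1/2}\,\mathrm{d}l_{2,N}=2\,\mathrm{d}u$, and the lower limit $d^{4}$ becomes $d^{2}$). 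The remaining integral $\int_{d^{2}}^{\infty}\pi\lambda_{2}p_{N}\,e^{-\pi\lambda_{2}p_{N}u}\,\mathrm{d}u$ is elementary but I would deliberately keep it in the equivalent integral form $\int_{0}^{d^{2}}\pi\lambda_{2}p_{N}\,e^{-\pi\lambda_{2}p_{N}u}\,\mathrm{d}u$ subtracted from $1$, and invoke the Kummer identity
\begin{equation*}
z\,{}_{1}F_{1}[1;2;z]=e^{z}-1
\end{equation*}
(equivalent to $\int_{0}^{1}e^{zt}\,\mathrm{d}t={}_{1}F_{1}[1;2;z]$) applied to the integral with $z=\pi p_{N}d^{2}\lambda_{2}$ after rescaling the dummy variable to $[0,1]$. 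This produces the factor $\bigl(1-\pi p_{N}d^{2}\lambda_{2}\,{}_{1}F_{1}[1;2;\pi p_{N}d^{2}\lambda_{2}]\bigr)$, and combining with the prefactor yields the stated form of $\mathcal{A}_{2,N}$.

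The routine calculations are the exponential integrals themselves; the only point that requires care is the NLOS integral, where one must notice that the change of variable $u=l^{1/2}$ linearizes the exponent and sets up the integral representation of ${}_{1}F_{1}[1;2;\cdot]$. The main obstacle, more conceptual than technical, is justifying that the $\bar{a}\to 0$ limit may be taken under the NLOS integral sign on the unbounded range $[d^{4},\infty)$; I would handle this by the dominated-convergence theorem, using the Gaussian-in-$l^{1/2}$ decay $e^{-\pi\lambda_{2}p_{N}l^{1/2}}$ as a uniform integrable envelope independent of $\bar{a}$.
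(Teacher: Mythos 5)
Your route---specialize Lemma~2, let $\bar{a}\to 0$, and evaluate the two resulting integrals---is the intended one (the paper states this corollary without a written proof, so substitution into Lemma~2 is all there is), and your LOS computation and the dominated-convergence justification for interchanging the limit with the NLOS integral are fine. The problem is in the last step of the NLOS calculation. After the substitution $u=l_{2,N}^{1/2}$ the integral you need is $\int_{0}^{d^{2}}\pi\lambda_{2}p_{N}e^{-\pi\lambda_{2}p_{N}u}\,\mathrm{d}u$, whose integrand carries a \emph{negative} exponent; rescaling the dummy variable to $[0,1]$ therefore produces $z\int_{0}^{1}e^{-zt}\,\mathrm{d}t=z\,{}_{1}F_{1}[1;2;-z]$ with $z=\pi p_{N}d^{2}\lambda_{2}$, not $z\,{}_{1}F_{1}[1;2;z]$. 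Your own identity $\int_{0}^{1}e^{zt}\,\mathrm{d}t={}_{1}F_{1}[1;2;z]$, applied with the correct argument $-z$, gives the bracketed factor as $1-z\,{}_{1}F_{1}[1;2;-z]=e^{-z}$, and hence
\begin{equation*}
\mathcal{A}_{2,N}=e^{d^{2}\pi(p_{N}-p_{L})\lambda_{2}}\,e^{-\pi p_{N}d^{2}\lambda_{2}}=e^{-\pi p_{L}d^{2}\lambda_{2}}.
\end{equation*}

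This is not a cosmetic point: the positive-argument form you (and the corollary as printed) arrive at cannot be correct. Since $z\,{}_{1}F_{1}[1;2;z]=e^{z}-1$, the stated expression equals $e^{d^{2}\pi(p_{N}-p_{L})\lambda_{2}}\bigl(2-e^{\pi p_{N}d^{2}\lambda_{2}}\bigr)$, which is negative whenever $\pi p_{N}d^{2}\lambda_{2}>\ln 2$ (e.g., for the paper's own parameters $d=0.2$\,km, $\lambda_{2}=70$\,SBSs/km$^{2}$) and so cannot be a probability. The corrected value $e^{-\pi p_{L}d^{2}\lambda_{2}}$ also passes the sanity check that the derivation itself forces: as $\bar{a}\to 0$ we have $\mathcal{A}_{1}\to 0$, so $\mathcal{A}_{2,L}+\mathcal{A}_{2,N}$ must equal $1$, and indeed $\bigl(1-e^{-\pi p_{L}d^{2}\lambda_{2}}\bigr)+e^{-\pi p_{L}d^{2}\lambda_{2}}=1$. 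So the fix is to flip the sign of the argument of ${}_{1}F_{1}$ (or, more simply, to state $\mathcal{A}_{2,N}=e^{-\pi p_{L}d^{2}\lambda_{2}}$); as written, your final step silently changes the sign of the exponent and lands on a formula that the computation does not support.
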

An interesting insight from {\bf Corollary~2} can be seen when the intensity of SBSs  $\lambda_2 \rightarrow \infty$ or $d$ is large, the probability of association to LOS SBSs $\mathcal{A}_{2,L}$ becomes almost 1. On the other hand, when $\lambda_2 \rightarrow 0$ or $d$ is small, $\:_1F_1[a;b;0]=1$ thus $\mathcal{A}_{2,N}$ becomes almost 1.

\subsection{Formulation of the Meta distribution, CSP and its $b^{\mathrm{th}}$ Moment in the Hybrid Network}\label{meta_dual}
When a {\color{black}device} associates with a mm-wave SBS, the overall CSP depends on the CSPs of the SIR and SNR on both the backhaul link and the access link, respectively. On the other hand, when a {\color{black}device} associates to MBS the CSP  depends on the SIR of the direct link. It is thus necessary to formulate the relationship between the meta distribution, CSP, and its $b^{\mathrm{th}}$ moment in the considered hybrid network as follows.
\begin{lemma}[Meta Distribution of the Typical {\color{black}device} in the Hybrid Network]\label{lem_3}
The combined meta distribution of the {\color{black} SIR/SNR} in the hybrid spectrum network can be characterized as follows:
\begin{equation}\label{Gil-pelaez-2hops}
\bar{F}_{P_{s,\text{T}}}(x)=\frac{1}{2}+\frac{1}{\pi}\int_{0}^{\infty}\frac{\Im\left(e^{-jt\log{x}}M_{jt, \text{T}}(\cdot)\right)}{t}\text{d}t,
\end{equation}
where $	M_{jt,\text{T}}(\theta)$ can be characterized by deriving  the  $b^{\mathrm{th}}$ moment of the $P_{s,\text{T}}(\cdot)$\footnote{The $b^{\mathrm{th}}$ moment of a random variable $X$ is the expected value of random variable to the power $b$, i.e., $\mathbb{E}[X^{b}]$.}.
\small
\begingroup \allowdisplaybreaks
\begin{align}\label{total_meta}
	M_{b,\text{T}}(\cdot) &= 
	{M_{b,\mathrm{Dual-Hop}}}+{M_{b,\mathrm{Single-Hop}} }\stackrel{(a)}{=}   \mathbb{E}_{\Phi}[\bar{\mathcal{A}}_{2}(l_{2}) P^b_{s,\mathrm{Dual-Hop}}(\theta_{2})] + \mathbb{E}_{\Phi}[\bar{\mathcal{A}}_1(l_1) P^b_{s,1}(\theta_{\mathcal{D}})],
	\notag\\&\stackrel{(b)}{=} \mathbb{E}_{\Phi}\bigg[\bar{\mathcal{A}}_{2}(l_{2}) (P_{s,\mathrm{BH}}(\theta_{2})
	P_{s,2}(\theta_{\mathcal{D}}))^b \bigg]+\mathbb{E}_{{\Phi}}\bigg[ \bar{\mathcal{A}}_1(l_1)
	 P^b_{s,1}(\theta_{\mathcal{D}})\bigg], 
	 \notag\\&\stackrel{(c)}{=}\mathbb{E}_{{\Phi}}\bigg[P_{s,\mathrm{BH}}(\theta_{2})^{b}\bigg] \mathbb{E}_{{\Phi}}\bigg[ \bar{\mathcal{A}}_{2}(l_{2}) P_{s,2}(\theta_{\mathcal{D}})^{b}\bigg]+\mathbb{E}_{\Phi}\bigg[ \bar{\mathcal{A}}_1(l_1) P^b_{s,1}(\theta_{\mathcal{D}})\bigg],
	 \notag\\&\stackrel{(d)}{=}\mathbb{E}_{{\Phi}}\bigg[P_{s,\mathrm{BH}}(\theta_{2})^{b}\bigg] \mathbb{E}_{{\Phi}}\bigg[ (\bar{\mathcal{A}}_{2,L}(l_{2,L})+\bar{\mathcal{A}}_{2,N}(l_{2,N})) P_{s,2}(\theta_{\mathcal{D}})^{b}\bigg]+\mathbb{E}_{\Phi}\bigg[ \bar{\mathcal{A}}_1(l_1) P^b_{s,1}(\theta_{\mathcal{D}})\bigg],
	 \notag\\&\stackrel{(e)}{=} 
	\underbrace{ M_{b,\mathrm{BH}}(\theta_{2}) M_{b,2}(\theta_{\mathcal{D}})}_\text{Device Associated with SBS}+\underbrace{M_{b,1}(\theta_{\mathcal{D}})}_\text{Device Associated with MBS},
\end{align}
\endgroup
\normalsize
where ${M_{b,\mathrm{Dual-Hop}}}$ is the  $b^{\mathrm{th}}$ moment of the SIR/SNR when a {\color{black}device} associates to mm-wave SBS for dual-hop transmission and $M_{b,\mathrm{Single-Hop}}$ is the $b^{\mathrm{th}}$ moment of the SIR when a {\color{black}device} associates to MBS for direct transmission. After reformulation, we define $M_{b,\mathrm{BH}}(\theta_{2})$
as the unconditional $b^{\mathrm{th}}$ moment of the backhaul SIR, $M_{b,2}(\theta_{\mathcal{D}})$ as the unconditional  $b^{\mathrm{th}}$ moment of the SNR at access link when a {\color{black}device} associates to mm-wave SBS, and  $M_{b,1}(\theta_{\mathcal{D}})$ as the unconditional $b^{\mathrm{th}}$ moment of the SIR at direct link when a {\color{black}device} associates to $\mu$wave BS.
Note that
$P_{s,1}(\theta_{\mathcal{D}})\overset{\Delta}{=}\mathbb{P}(\mathrm{SIR_{1,\mathcal{D}}}>\theta_{\mathcal{D}}|\mathbf{\Phi}_1,\text{tx})$ denotes the CSP of {\color{black}device} over the direct link,
$P_{s,\mathrm{BH}}(\theta_{2})\overset{\Delta}{=}\mathbb{P}(\mathrm{SIR_{1,2}}>\theta_{2}|\mathbf{\Phi_1},\text{tx})
$ denotes the CSP at backhaul link, and
$
P_{s,2}(\theta_{\mathcal{D}})\overset{\Delta}{=}\mathbb{P}(\mathrm{SNR_{2,\mathcal{D}}}>\theta_{\mathcal{D}}|\mathbf{\Phi}_2,\text{tx})
$ denotes the CSP for the access link transmission.   
\end{lemma}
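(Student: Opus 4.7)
The plan is to treat the Gil-Pelaez line as a direct invocation of the inversion identity recalled in Section III, so the real work is to derive the decomposition of $M_{b,\text{T}}$ in equation (10). I would establish the five substitutions (a)--(e) in order, each justified by one probabilistic fact, and then read off the quoted result.

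For step (a), I would condition on the tier the typical device associates with. Association is mutually exclusive -- direct to an MBS with conditional probability $\bar{\mathcal{A}}_1(l_1)$ or dual-hop via an SBS with conditional probability $\bar{\mathcal{A}}_2(l_2)$ -- so the conditional success probability splits as $P_{s,\text{T}} = \bar{\mathcal{A}}_1 P_{s,1} + \bar{\mathcal{A}}_2 P_{s,\text{Dual-Hop}}$ with only one summand nonzero on any realization. Raising to the $b^{\mathrm{th}}$ power therefore preserves the sum, and taking expectation over the point processes yields (a). Step (b) uses the fact that dual-hop success requires simultaneous success on the backhaul and access links, and these two events are conditionally independent given the geometry because the $\mu$wave fading on the MBS--SBS link and the Nakagami-$m$ fading on the SBS--device link are independent marks; hence $P_{s,\text{Dual-Hop}} = P_{s,\mathrm{BH}} \cdot P_{s,2}$ and the $b^{\mathrm{th}}$ power distributes over the product.

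Step (c) is the main technical point. Here $P_{s,\mathrm{BH}}(\theta_2)^b$ is a functional of $\Phi_1$ alone (through the tagged MBS and the $\mu$wave interferers) together with the $\mu$wave fading marks, whereas $\bar{\mathcal{A}}_2(l_2) P_{s,2}(\theta_\mathcal{D})^b$ is a functional of $\Phi_2$, the tagged SBS distance, and the mm-wave fading and noise. Since $\Phi_1 \perp \Phi_2$ as independent PPPs and all fading marks are independent across links, the joint expectation factors into a product. Step (d) is an immediate substitution of $\bar{\mathcal{A}}_2 = \bar{\mathcal{A}}_{2,L} + \bar{\mathcal{A}}_{2,N}$ from Lemma~\ref{2}, and step (e) identifies the three remaining unconditional expectations as the moments $M_{b,\mathrm{BH}}$, $M_{b,2}$, $M_{b,1}$ in line with their definitions.

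The hard part will be justifying step (c) cleanly. The factorization rests essentially on three facts: $\Phi_1 \perp \Phi_2$, independence of fading across the two hops, and -- crucially -- the noise-limited assumption on the mm-wave channel, which means $P_{s,2}$ carries no interference term drawn from $\Phi_1$ or $\Phi_2$ that could couple it to $P_{s,\mathrm{BH}}$. Had mm-wave interference been retained, $P_{s,2}$ would depend on $\Phi_2 \setminus \{\mathbf{y}_{2,0}\}$ in a way that still leaves it independent of $P_{s,\mathrm{BH}}$, but the argument becomes more delicate because the association distance $l_2$ and the interferer configuration in $\Phi_2$ must be handled via a Palm/reduced-Palm computation; the noise-limited simplification sidesteps this. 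The rest -- converting conditional to unconditional moments by integrating against $\bar{\mathcal{A}}_{2,L}(l_{2,L})$, $\bar{\mathcal{A}}_{2,N}(l_{2,N})$, $\bar{\mathcal{A}}_1(l_1)$ and matching with the definitions of $M_{b,\mathrm{BH}}, M_{b,2}, M_{b,1}$ -- is routine algebraic bookkeeping deferred to Section~\ref{v_sec}.
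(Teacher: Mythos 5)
Your proposal follows essentially the same route as the paper's proof: decompose the moment over the mutually exclusive association events (law of total expectation with the conditional association probabilities), factor the dual-hop CSP into the product of the backhaul and access CSPs using the independence of the $\mu$wave and mm-wave links, factor the expectation via $\mathbb{E}[(XY)^b]=\mathbb{E}[X^b]\mathbb{E}[Y^b]$ for independent $X,Y$, substitute $\bar{\mathcal{A}}_{2}=\bar{\mathcal{A}}_{2,L}+\bar{\mathcal{A}}_{2,N}$, and identify the resulting unconditional expectations with $M_{b,\mathrm{BH}}$, $M_{b,2}$, $M_{b,1}$. Your extra commentary on why the noise-limited mm-wave assumption makes the step-(c) factorization clean is a sensible elaboration of the independence argument the paper merely asserts, not a different proof.
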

\begin{proof}
Step (a) follows from the fact that the $b^{\mathrm{th}}$ moment of the SIR or SNR of a {\color{black}device} associated to tier $i$ can be defined as $M_b^{(i)}=\mathbb{E}[\bar{\mathcal{A}_i} M_{b|i}]$ where $\bar{\mathcal{A}_i}$ is the conditional association probability to tier $i$ and $M_{b|i}= P^b_{s,i}$ is the conditional $b^{\mathrm{th}}$ moment of the SIR or SNR in tier $i$. In our case, we have $\bar{\mathcal{A}}_{2}(l_{2})$ which is the conditional association probability to mm-wave SBS where $l_2 \in \{L,N\}$ since a {\color{black}device} can associate to either LOS or NLOS mm-wave SBS.
The step (b) follows from the fact that the CSP of the dual-hop transmission depends on the CSP of access and backhaul link; therefore, we have a product of the access and backhaul CSPs, i.e., $P_{s,\mathrm{BH}}(\theta_{2})P_{s,2}(\theta_{\mathcal{D}})$ that are independent random variables. There is no correlation since $\mu$wave backhaul does not interfere with mm-wave transmissions. The step (c) follows from the fact  if $X$ and $Y$ are
independent then $\mathbb{E}[(XY)^{b}]=\mathbb{E}[X^{b}]\mathbb{E}[Y^{b}]$. Finally, the step (d) follows from the definition of $\bar{\mathcal{A}}_{2}(l_{2})$ in {\bf Lemma~2} and the step (e) follows by applying the definition of moments.
\end{proof}

In the next section, we derive the CSP of access, backhaul, and direct links along with their respective $b^{\mathrm{th}}$ moments, as needed in {\bf Lemma~4} to characterize the overall moment as well as the meta distribution. 
	
\section{Characterization of the CSPs and Moments}\label{v_sec}
 In this section, we  derive the CSPs $P_{s,\mathrm{BH}}(\theta_{2}), P_{s,2}(\theta_{\mathcal{D}}), P_{s,1}(\theta_{\mathcal{D}})$ and the  $b^{\mathrm{th}}$ moments $M_{b,\mathrm{BH}}(\theta_{2})$, $M_{b,2}(\theta_{\mathcal{D}})$, and $M_{b,1}(\theta_{\mathcal{D}})$ for  backhaul link, access link, and direct link, respectively.

\subsection{CSP and the $b^{\mathrm{th}}$ Moment - Access Link}
We condition on having a {\color{black}device} at the origin which becomes a typical {\color{black}device}. 
The CSP of a typical {\color{black}device} at the origin associating with the  mm-wave SBS-tier (when $k=2$) can be described as follows:
\begin{equation}\label{total-coverage-UE}
P_{s,2}(\theta_{\mathcal{D}})=p_{L} P_{s,2,L}(\theta_{\mathcal{D}})+p_{N}P_{s,2,N}(\theta_{\mathcal{D}}).
\end{equation}

The {CSP of the SNR of a {\color{black}device} on the access link} with LOS can be defined  by substituting $\mathrm{SNR}_{2,\mathcal{D}}$  defined in Eq. \eqref{snr-access-O-mmwave} into Definition~1 as follows:
\begingroup
\allowdisplaybreaks
\begin{align}\label{MBS-OR-SBS-LOS}
P_{s,2,L}(\theta_{\mathcal{D}})&=\mathbb{P}\left(h_{L}(0, \mathbf{y}_{2,0})>\frac{\theta_{\mathcal{D}}r_{2,\mathcal{D}}^{\alpha_{2,L}}\sigma_{2}^{2}}{P_{2}G_{2}}|\mathbf{\Phi}_{1},\mathbf{\Phi}_{2},\text{tx}\right)
\stackrel{(a)}{=}1-\frac{\gamma\left(m_{L},\frac{m_{L}}{\Omega_{L}}\nu_{L}\right)}{\Gamma{(m_{L})}}
\stackrel{(b)}{=}\frac{\Gamma\left(m_{L},\frac{m_{L}}{\Omega_{L}}\nu_{L}\right)}{\Gamma{(m_{L})}},
\end{align}
\endgroup
where (a) follows from the definition of $\nu_{L}\overset{\Delta}{=}\frac{\theta_{\mathcal{D}}r_{2,\mathcal{D}}^{\alpha_{2,L}}\sigma_{2}^{2}}{P_{2}G_{2}}$ and the fact that the channel gain $h_{L}(0, \mathbf{y}_{2,0})$ is a normalized gamma random variable  and $\gamma(.,.)$ is the lower incomplete gamma function and  $\Gamma(s)=\gamma(s,x)+\Gamma(s,x)$, where $\Gamma(.,.)$ is the upper incomplete gamma function. 
Similarly,  {CSP of the SNR on the access link} for NLOS case can be given as follows:
\begingroup
\allowdisplaybreaks
\begin{align}\label{MBS-OR-SBS-NLOS}
P_{s,2,N}(\theta_{\mathcal{D}})
&=\frac{\Gamma\left(m_{N},\frac{m_{N}}{\Omega_{N}}\nu_{N}\right)}{\Gamma(m_{N})},
\end{align}
\endgroup
where $\nu_{N}\overset{\Delta}{=}\frac{\theta_{\mathcal{D}}r_{2,\mathcal{D}}^{\alpha_{2,N}}\sigma_{2}^{2}}{P_{2}G_{2}}$. 
As such, the {$\boldsymbol{b^{th}}$ moment of the CSP on the access link} for the typical {\color{black}device} when it is served by the mm-wave SBS tier is given by the following:
\begingroup
\allowdisplaybreaks
\begin{lemma} \label{3}
The $b^{\mathrm{th}}$ moment of the SNR at an ``access link'' when a {\color{black}device} associates with a mm-wave SBS 
can be characterized as follows: 
\small
	\begin{align}\label{m_b_2} &M_{b,\text{2}}(\theta_{\mathcal{D}})=\notag\\&\sum_{k=0}^{b}\binom{b}{k}(-1)^{k}\left(p_{L}^{b}\sum_{\ddot{k}=0}^{m_{L}k}\binom{m_{L}k}{\ddot{k}}(-1)^{\ddot{k}}\int_{0}^{d^{\alpha_{2,L}}}e^{-\zeta_{L}\ddot{k}\ddot{\nu_{L}} l_{\text{2,L}}} \bar{\mathcal{A}}_{2,L}(l_{2,L})+p_{N}^{b}\sum_{\ddot{k}=0}^{m_{N}k}\binom{m_{N}k}{\ddot{k}}(-1)^{\ddot{k}}\int_{d^{\alpha_{2,N}}}^{\infty}e^{-\zeta_{N}\ddot{k}\ddot{\nu_{N}} l_{\text{2,N}}} \bar{\mathcal{A}}_{2,N}(l_{2,N})\right),
	\end{align}
	\normalsize
	where $\bar{\mathcal{A}}_{2,L}(l_{2,L})$ and $\bar{\mathcal{A}}_{2,N}(l_{2,N})$  are given in {\bf Lemma~2},
$\zeta_{L}\overset{\Delta}{=}m_{L}(m_{L}!)^{-1/m_{L}}$, $\nu_{L}\overset{\Delta}{=}\frac{\theta_{\mathcal{D}}r_{2,\mathcal{D}}^{\alpha_{2,L}}\sigma_{2}^{2}}{P_{2}G_{2}}$, $\zeta_{N}\overset{\Delta}{=}m_{N}(m_{N}!)^{-1/m_{N}}$, and $\nu_{N}\overset{\Delta}{=}\frac{\theta_{\mathcal{D}}r_{2,\mathcal{D}}^{\alpha_{2,N}}\sigma_{2}^{2}}{P_{2}G_{2}}$, $\ddot{\nu_{L}}\overset{\Delta}{=}\frac{\nu_{L}}{r^{\alpha_{2,L}}}=\frac{\nu_{L}}{l_{\text{2,L}}}=\frac{\theta_{\mathcal{D}}\sigma_{2}^{2}}{P_{2}G_{2}}$  and $\ddot{\nu_{N}}\overset{\Delta}{=}\frac{\nu_{N}}{r^{\alpha_{2,N}}}=\frac{\nu_{N}}{l_{\text{2,N}}}=\frac{\theta_{\mathcal{D}}\sigma_{2}^{2}}{P_{2}G_{2}}$.
\end{lemma}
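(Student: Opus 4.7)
My plan is to derive $M_{b,2}(\theta_\mathcal{D})$ by starting from the reduced form established in step (d) of Lemma~\ref{lem_3},
\begin{equation*}
M_{b,2}(\theta_\mathcal{D}) = \mathbb{E}_{\Phi}\bigl[(\bar{\mathcal{A}}_{2,L}(l_{2,L}) + \bar{\mathcal{A}}_{2,N}(l_{2,N}))\, P_{s,2}(\theta_\mathcal{D})^b\bigr],
\end{equation*}
and immediately splitting it into its LOS and NLOS contributions. The key observation is that the supports of $\bar{\mathcal{A}}_{2,L}$ (namely $[0,d^{\alpha_{2,L}}]$) and $\bar{\mathcal{A}}_{2,N}$ (namely $[d^{\alpha_{2,N}},\infty)$) are disjoint, so when paired with the mixture structure in Eq.~\eqref{total-coverage-UE}, only the LOS branch $p_L P_{s,2,L}$ contributes on the LOS-ball side and only the NLOS branch $p_N P_{s,2,N}$ contributes outside. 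Raising to the $b^{\text{th}}$ power then yields the $p_L^b$ and $p_N^b$ prefactors seen in the claim, with no cross terms arising.

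The main technical obstacle is that, for general Nakagami-$m$, the CSP expressions in Eqs.~\eqref{MBS-OR-SBS-LOS}--\eqref{MBS-OR-SBS-NLOS} are upper incomplete gamma functions, and neither $\Gamma(m_l,x)^b$ nor its integral against $\bar{\mathcal{A}}_{2,l}(\cdot)$ is tractable directly. To bypass this, I would invoke Alzer's tight bound on the normalized gamma CDF,
\begin{equation*}
\frac{\gamma(m_l,\, m_l \nu_l/\Omega_l)}{\Gamma(m_l)} \;\lesssim\; \bigl(1 - e^{-\zeta_l \nu_l}\bigr)^{m_l},\qquad \zeta_l \overset{\Delta}{=} m_l (m_l!)^{-1/m_l},
\end{equation*}
which converts the gamma tail into a polynomial in $e^{-\zeta_l \nu_l}$. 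This is the step that both motivates the definition of $\zeta_L$ and $\zeta_N$ in the statement and enables the subsequent algebraic expansion.

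The next step is a double application of the binomial theorem. First I expand $P_{s,2,L}^b \approx [1-(1-e^{-\zeta_L \nu_L})^{m_L}]^b$ as an alternating sum over an outer index $k=0,\ldots,b$ with coefficient $\binom{b}{k}(-1)^k$, producing factors $(1-e^{-\zeta_L \nu_L})^{m_L k}$. Second, I expand each of these as an alternating sum over an inner index $\ddot k = 0, \ldots, m_L k$ with coefficient $\binom{m_L k}{\ddot k}(-1)^{\ddot k}$, reducing everything to pure exponentials $e^{-\zeta_L \ddot k \nu_L}$. Using the substitution $\nu_L = \ddot\nu_L\, l_{2,L}$ (since $r^{\alpha_{2,L}} = l_{2,L}$ and $\ddot\nu_L = \theta_\mathcal{D}\sigma_2^2/(P_2 G_2)$), the exponent becomes $\zeta_L \ddot k \ddot\nu_L l_{2,L}$, matching the kernel inside the integral in the claim. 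An identical argument with $L \leftrightarrow N$ handles the NLOS piece over $l_{2,N} \in [d^{\alpha_{2,N}},\infty)$.

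The final step is to integrate term-by-term against the conditional association densities $\bar{\mathcal{A}}_{2,L}(l_{2,L})$ and $\bar{\mathcal{A}}_{2,N}(l_{2,N})$ from Lemma~\ref{2}, for each pair $(k,\ddot k)$, and collect contributions to obtain the stated form. I expect the difficulties to be conceptual rather than computational: (i)~recognizing that the disjointness of the LOS and NLOS supports in $l_2$ kills the cross terms that a naive $b^{\text{th}}$-power expansion of the mixture $p_L P_{s,2,L} + p_N P_{s,2,N}$ would generate, and (ii)~the controlled use of Alzer's inequality to replace the incomplete gamma with an exponential polynomial, without which the outer $b^{\text{th}}$ power could not be resolved into a finite sum of closed-form integrable kernels.
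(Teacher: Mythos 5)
Your proposal follows essentially the same route as the paper's proof in Appendix~C: split $\bar{\mathcal{A}}_{2}(l_2)=\bar{\mathcal{A}}_{2,L}+\bar{\mathcal{A}}_{2,N}$ and use the disjoint supports together with the blockage model to eliminate cross terms and pull out $p_L^b$, $p_N^b$; apply Alzer's bound $\gamma(m_l,\tfrac{m_l}{\Omega_l}\nu_l)/\Gamma(m_l)\lesssim(1-e^{-\zeta_l\nu_l})^{m_l}$; perform the double binomial expansion in $k$ and $\ddot{k}$; and decondition via $\nu_l=\ddot{\nu}_l\,l_{2,l}$. The argument is correct and matches the paper's derivation step for step.
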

\endgroup
\begin{proof}
	See {\bf Appendix~C}.
\end{proof}
For  $\alpha_{1}=4$, $\alpha_{2,L}=2$, and $\alpha_{2,N}=4$,  we can get $M_{b,\text{2} }(\theta_{\mathcal{D}})$ in closed-form using {\bf Corollary~1}. Also, for scenarios where $\mathcal{N}\rightarrow \infty$, $\alpha_{1}=4$, $\alpha_{2,L}=2$, and $\alpha_{2,N}=4$, then $\bar{a}\rightarrow 0$. Also, $\ddot{v}_L \rightarrow 0$ and $\ddot{v}_N \rightarrow 0$,  we can get $M_{b,\text{2} }(\theta_{\mathcal{D}})$ in closed-form using {\bf Corollary~2}.

\subsection{CSP and Moment - Backhaul Link}
For the backhaul link,  we condition on having a SBS at the origin which becomes the typical SBS. Using the expression of  $\mathrm{SIR}_{1,2}$ in  Eq. (\ref{SIR-backhaul-O})  {the CSP of the backhaul link} $P_{s,\mathrm{BH}}(\theta_{2})$ can be given as:
\begingroup \allowdisplaybreaks
\begin{align}\label{P_s_b}
P_{s,\mathrm{BH}}(\theta_{2})&=\mathbb{P}\left(g(0, \mathbf{y}_{1,0})>\frac{\theta_{2}r_{1,2}^{\alpha_1}}{P_{1}}\mathcal{I}_{1,2}|\mathbf{\Phi}_{1},\mathbf{\Phi}_{2},\text{tx}\right)
\nonumber\stackrel{(a)}{=}\mathbb{E}\bigg[\exp(-\theta_{2}r_{1,2}^{\alpha_1}\sum_{i:\mathbf{y}_{1,i}\in\mathbf{\Phi}_1\backslash \{\mathbf{y}_{1,0}\}}\|\mathbf{y}_{1,i}\|^{-\alpha_1}g(0,\mathbf{y}_{1,i}))\bigg],
\notag\\
&=\prod_{\mathbf{y}_{1,i}\in\mathbf{\Phi}_1\backslash \{\mathbf{y}_{1,0}\}}\mathbb{E}\bigg[\exp\left(-\theta_{2}r_{1,2}^{\alpha_1}\|\mathbf{y}_{1,i}\|^{-\alpha_1}g(0,\mathbf{y}_{1,i})\right)\bigg]
\stackrel{(b)}{=}\prod_{\mathbf{y}_{1,i}\in\mathbf{\Phi}_1\backslash \{\mathbf{y}_{1,0}\}}\frac{1}{1+\theta_{2}\left(\frac{r_{1,2}}{\|\mathbf{y}_{1,i}\|}\right)^{\alpha_1}}.
\end{align}
 \endgroup
\normalsize
where (a) follows from the Rayleigh fading channel gain $g(0, \mathbf{y}_{1,0})\sim \exp(1)$  and (b) is found by taking the expectation with respect to $g(0,\mathbf{y}_{2,i})$. The {$\boldsymbol{b^{th}}$ moment of the CSP on the backhaul link} is given as:
\begin{align}\label{two-hops-b-backhaul}
M_{b,\mathrm{BH}}(\theta_{2})&=\mathbb{E}\bigg[P_{s,\mathrm{BH}}(\theta_{2})^{b}\bigg]=\mathbb{E}\bigg[\prod_{\mathbf{y}_{1,i}\in\mathbf{\Phi}_1\backslash \{\mathbf{y}_{1,0}\}}\frac{1}{\left(1+\theta_{2}\left(\frac{r_{1,2}}{\|\mathbf{y}_{1,i}\|}\right)^{\alpha_1}\right)^{b}}\bigg],\notag\\
&\stackrel{(a)}{=}\left({1+2\int_{0}^{1}\left(1-\frac{1}{(1+\theta_{2}r^{\alpha_1})^{b}}\right)r^{-3}dr}\right)^{-1}{=}
\frac{1}{\text{ }_{2}F_{1}(b,-\frac{2}{\alpha_1};1-\frac{2}{\alpha_1};-\theta_{2})},
\end{align}
\normalsize
where (a) follows from the probability generating functional (PGFL) of PPP, i.e., $
G_{\mathcal{R}}[f]\overset{\Delta}{=}\mathbb{E}\prod_{x\in\mathcal{R}}f(x)=\frac{1}{1+2\int_{0}^{1}(1-f(x))x^{-3}dx}.$ \cite[lemma 1]{ganti2016asymptotics} and ${\text{}_{2}F_{1}(.,.;.;.)}$ represents Gauss` Hyper-geometric function.

\subsection{CSP and Moment - Direct Link}
 Using the expression of $\mathrm{SIR}_{1,\mathcal{D}}$  in Eq. (\ref{SIR-direct-link}), we calculate the {CSP of the direct link} $P_{s,1}(\theta_{\mathcal{D}})$ as follows:
\begingroup
\allowdisplaybreaks
\small
\begin{align}\label{MBS-OR-SBS}
P_{s,1}(\theta_{\mathcal{D}})&=\mathbb{P}\left(g(0, \mathbf{y}_{1,0})>\frac{\theta_{\mathcal{D}}r_{1,\mathcal{D}}^{\alpha_{1}}}{P_{1}}\mathcal{I}_{1,\mathcal{D}}|\mathbf{\Phi}_{1},\mathbf{\Phi}_{2},\text{tx}\right)\stackrel{(a)}{=}\mathbb{E}\bigg[\exp\left(-\theta_{\mathcal{D}}r_{1,\mathcal{D}}^{\alpha_{1}}\sum_{i:\mathbf{y}_{1,i}\in\mathbf{\Phi}_1\backslash \{\mathbf{y}_{1,0}\}}\|\mathbf{y}_{1,i}\|^{-\alpha_{1}}g(0,\mathbf{y}_{1,i})\right)\bigg],\notag\\
&=\prod_{\mathbf{y}_{1,i}\in\mathbf{\Phi}_1\backslash \{\mathbf{y}_{1,0}\}}\mathbb{E}\bigg[\exp\left(-\theta_{\mathcal{D}}r_{1,\mathcal{D}}^{\alpha_{1}}\|\mathbf{y}_{1,i}\|^{-\alpha_{1}}g(0,\mathbf{y}_{1,i})\right)\bigg]
\stackrel{(b)}{=}
\prod_{\mathbf{y}_{1,i}\in\mathbf{\Phi}_1\backslash \{\mathbf{y}_{1,0}\}}\frac{1}{1+\theta_{\mathcal{D}}\left(\frac{r_{1,\mathcal{D}}}{\|\mathbf{y}_{1,i}\|}\right)^{\alpha_1}},
\end{align}
\endgroup
\normalsize
where (a) follows from the channel gain $g(0, \mathbf{y}_{1,0})\sim \exp(1)$ and is independently exponentially distributed with unit mean and (b) is obtained by taking the expectation with respect to $g(0,\mathbf{y}_{1,i})$. While taking the association probabilities into consideration, the {$\boldsymbol{b^{th}}$ moment of the CSP} $P_{s,1}(\theta_{\mathcal{D}})$ of the typical {\color{black}device} when it is served by a $\mu$wave MBS is characterized in the following lemma.
\begin{lemma}[The $b^{\mathrm{th}}$ moment of the CSP ($P_{s,1}(\theta_{\mathcal{D}})$) when a {\color{black}device} associates with a MBS]\label{4}
	The $b^{\mathrm{th}}$ moment of the CSP experienced by a {\color{black}device}, when the {\color{black}device} associates with a MBS, can be characterized as follows: 
	\small
	\begingroup \allowdisplaybreaks
	\begin{align}\label{m_b_1}
	&M_{b,\text{1}}(\theta_{\mathcal{D}})=
	\frac{2\pi\lambda_{1}}{\hat{a}\alpha_{1}}\Bigg\{\int_{0}^{d^{\alpha_{2,L}}}H(l_1)\exp\left(-\pi\lambda_{2}p_{L}l_{1}^{\frac{2}{\alpha_{2,L}}}\right)d\text{l}_{1}+\int_{d^{\alpha_{2,L}}}^{d^{\alpha_{2,N}}}H(l_1)\exp\left(-\pi\lambda_{2}p_{L}d^{2}\right)d\text{l}_{1}+\notag\\
	&\quad\quad\quad\quad\int_{d^{\alpha_{2,N}}}^{\infty}H(l_1)\exp\left(-\pi\lambda_{2}[p_{L}d^{2}+p_{N}\left(l_{1}^{\frac{2}{\alpha_{2,N}}}-d^{2}\right)\right)d\text{l}_{1}\Bigg\}\times
	\exp
	\left(\frac{-2\lambda_{1}\pi l_{1}^{\frac{2}{\alpha_{1}}}}{\alpha_{1}}\int_{0}^{1}\bigg[1-\frac{1}{\left(1+\theta_{\mathcal{D}}v\right)^{b}}\bigg]\frac{1}{v^{\frac{2}{\alpha_{1}}+1}}\text{d}v\right),
	\end{align}
	\endgroup
	\normalsize
\end{lemma}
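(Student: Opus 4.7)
The plan is to condition on the path loss $l_1$ from the typical device to its tagged $\mu$wave MBS, compute the conditional $b^{\text{th}}$ moment of $P_{s,1}(\theta_{\mathcal{D}})$ by the PGFL of the interfering PPP, and then decondition by integrating against the joint density of $l_1$ and the event that the device associates with the MBS tier.

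First I would raise the product form of $P_{s,1}(\theta_{\mathcal{D}})$ in Eq.~\eqref{MBS-OR-SBS} to the $b^{\text{th}}$ power and take expectation under the reduced Palm distribution of $\mathbf{\Phi}_1\setminus\{\mathbf{y}_{1,0}\}$. Since the tagged MBS is the closest one contributing nonzero biased received power, the interfering MBSs form a PPP of intensity $\lambda_1$ on $\{r>r_{1,\mathcal{D}}\}$. Applying the PGFL yields an exponential of a single radial integral, which the substitution $v=(r_{1,\mathcal{D}}/r)^{\alpha_1}$ folds onto $[0,1]$ and converts the radial measure to $\frac{r_{1,\mathcal{D}}^2}{\alpha_1}\,v^{-2/\alpha_1-1}\,\mathrm{d}v$. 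Writing $r_{1,\mathcal{D}}^2=l_1^{2/\alpha_1}$ then produces exactly the exponential factor appearing at the end of Eq.~\eqref{m_b_1}.

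Next I would remove the conditioning on $l_1$ by multiplying with the density of the path loss to the tagged MBS, restricted to the event that the device associates with a $\mu$wave MBS rather than a mm-wave SBS. This is precisely the object already assembled in the derivation of $\mathcal{A}_1$ in \textbf{Lemma~1}: it equals $\frac{2\pi\lambda_1}{\hat{a}\alpha_1}H(l_1)$ times the probability that no SBS offers a stronger biased received power. The latter splits into three cases according to whether the corresponding SBS distance lies inside, straddles, or lies outside the LOS ball of radius $d$, yielding the factors $e^{-\pi\lambda_2 p_L l_1^{2/\alpha_{2,L}}}$, $e^{-\pi\lambda_2 p_L d^2}$, and $e^{-\pi\lambda_2[(p_L-p_N)d^2+p_N l_1^{2/\alpha_{2,N}}]}$ on the intervals $[0,d^{\alpha_{2,L}}]$, $[d^{\alpha_{2,L}},d^{\alpha_{2,N}}]$, and $[d^{\alpha_{2,N}},\infty)$ respectively. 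Multiplying the conditional moment (which must be carried \emph{inside} each integral despite how the statement is typeset) by this density and summing the three pieces then yields Eq.~\eqref{m_b_1}.

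The main obstacle is not analytic but organizational: one must ensure that the $l_1$-dependent PGFL exponential sits inside every piecewise integrand, and that the LOS/NLOS labeling of the SBS exclusion event is consistent with the range of $l_1$ in each piece. Once these pieces are correctly aligned, the derivation is a direct combination of the PGFL-based conditional moment with the association-restricted path-loss density borrowed from \textbf{Lemma~1}.
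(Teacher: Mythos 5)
Your proposal is correct and follows essentially the same route as the paper's Appendix~D: raise the product form of $P_{s,1}(\theta_{\mathcal{D}})$ to the power $b$, apply the PGFL of the interfering MBS process outside the serving distance, use the change of variable $v=(r_{1,\mathcal{D}}/y)^{\alpha_1}$ to obtain the exponential factor with the integral over $[0,1]$, and decondition with the association-restricted path-loss density $\frac{2\pi\lambda_1}{\hat a\alpha_1}H(l_1)$ times the three-piece SBS exclusion probability from \textbf{Lemma~1}. Your remark that the $l_1$-dependent exponential must be carried inside each of the three integrals (despite the typeset placement outside the braces) is a valid and worthwhile observation about the statement's presentation.
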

\begin{proof}
	See {\bf Appendix~D}.
\end{proof}
Note that $\int_{0}^{1}\bigg[1-\frac{1}{\left(1+\theta_{\mathcal{D}}v\right)^{b}}\bigg]\frac{1}{v^{\frac{2}{\alpha_{1}}+1}}\text{d}v$ is independent of $l_1$, thus  where $\mathcal{N}\rightarrow \infty$ or $\alpha_{1}=4$, $\alpha_{2,L}=2$, and $\alpha_{2,N}=4$, then  we can get a closed-form for the three integral over $l_1$ using {\bf Corollary~1} and {\bf Corollary~2}.

\subsection{Combined $b^{\mathrm{th}}$ Moment of the CSP in Hybrid  Networks}\label{exact_meta}
After substituting the values of $M_{b,\mathrm{BH}}(\theta_{2})$, $M_{b,2}(\theta_{\mathcal{D}})$, and $M_{b,1}(\theta_{\mathcal{D}})$ in Eq. \eqref{two-hops-b-backhaul}, Eq. \eqref{m_b_2}, and Eq. (\ref{m_b_1}), respectively into the total meta distribution for the entire  network in Eq. (\ref{total_meta}), we get the $b^{\mathrm{th}}$ moment of the CSP at a typical {\color{black}device} as follows:
\small
  \begin{align}\label{whole-moments}
        &M_{b,\text{T}}
        =\frac{1}{\text{ }_{2}F_{1}(b,-\frac{2}{\alpha_1};1-\frac{2}{\alpha_1};-\theta_{2})}\times
                 \bigg\{\sum_{k=0}^{b}\binom{b}{k}(-1)^{k}\left(p_{L}^{b}\sum_{\ddot{k}=0}^{m_{L}k}\binom{m_{L}k}{\ddot{k}}(-1)^{\ddot{k}}\int_{0}^{d^{\alpha_{2,L}}}e^{-\zeta_{L}\ddot{k}\ddot{\nu_{L}} l_{\text{2,L}}} \bar{\mathcal{A}}_{2,L}(l_{2,L})+\right.\notag\\&\quad\quad\quad\quad\quad\quad\quad\quad\quad\quad\quad\quad\quad\quad\quad\quad\left.p_{N}^{b}\sum_{\ddot{k}=0}^{m_{N}k}\binom{m_{N}k}{\ddot{k}}(-1)^{\ddot{k}}\int_{d^{\alpha_{2,N}}}^{\infty}e^{-\zeta_{N}\ddot{k}\ddot{\nu_{N}} l_{\text{2,N}}} \bar{\mathcal{A}}_{2,N}(l_{2,N})\right)\bigg\}+M_{b,1}(\theta_{\mathcal{D}}),
      \end{align}
      \normalsize
 
In the next section, we  use the combined $b^{\mathrm{th}}$ moment in \eqref{whole-moments} to compute the meta distributions of SIR/SNR and data rate using Gil-Pelaez inversion and the Beta approximation.


 \section{Computing the Meta Distributions and Special Cases}\label{vi_sec}   
 In this section, we compute the meta distribution of SIR/SNR  using Gil-Pelaez inversion and beta approximation by applying the derived result of $M_{b,\text{T}}$. Special cases where $b=1$ provides the standard coverage probability and $b=-1$ provides the mean local delay are discussed.  Further, we show how to evaluate the data rate meta distribution from the derived framework. 
 
 \subsection{Computing the Meta Distribution of SIR/SNR}
 Technically, substituting $b=jt$ in \eqref{whole-moments}, we should obtain the imaginary moments $M_{jt, \text{T}}$. However, since the expression of $M_{jt, \text{T}}$ relies on a Binomial expansion of power $b$, the results cannot be obtained directly through substitution. Therefore, we apply Newton's generalized binomial theorem given as follows:
 \begin{Definition}\label{def_1}
Isaac Newton`s generalized binomial theorem is to allow real exponents other than non-negative integers, i.e., imaginary exponent $r$, as
$
\binom{r}{k}=\frac{r(r-1)...(r-k+1)}{k!}=\frac{(r)_{k}}{k!},
$
where $(.)_{k}$ is the Pochhammer symbol, which stands here for a falling factorial.
\end{Definition}
Applying {\bf Definition~2} in step~(e) of {\bf Appendix~C}, we then  obtain the expression for $M_{jt, \text{T}}$ as follows:
\small
      \begin{align}\label{mmwave_imaginary_moment}
      &M_{jt,\text{T}}=\frac{1}{\text{ }_{2}F_{1}(jt,-\frac{2}{\alpha_1};1-\frac{2}{\alpha_1};-\theta_{2})}\times
                 \bigg\{p_{L}^{jt}
                 \sum_{k=0}^{\infty}\frac{(jt)_{k}}{k!}(-1)^{k}\sum_{\ddot{k}=0}^{m_{L}k}\binom{m_{L}k}{\ddot{k}}(-1)^{\ddot{k}}\int_{0}^{d^{\alpha_{2,L}}} e^{-\zeta_{L}\ddot{k}\ddot{\nu_{L}} l_{\text{2,L}}} \bar{\mathcal{A}}_{2,L}(l_{2,L})+\notag\\
                 &\quad\quad\quad\quad\quad\quad\quad\quad\quad\quad p_{N}^{jt}\sum_{k=0}^{\infty}\frac{(jt)_{k}}{k!}(-1)^{k}\sum_{\ddot{k}=0}^{m_{N}k}\binom{m_{N}k}{\ddot{k}}(-1)^{\ddot{k}}\int_{d^{\alpha_{2,N}}}^{\infty} e^{-\zeta_{N}\ddot{k}\ddot{\nu_{N}} l_{\text{2,N}}} \bar{\mathcal{A}}_{2,N}(l_{2,N})\bigg\}+M_{jt,1}(\theta_{\mathcal{D}}),
          \end{align}
          \normalsize
          
          The imaginary moments can be substituted in the Gil-Pelaez inversion theorem as in {\bf Definition~1} to obtain $\bar{F}_{P_s,\text{T}}$.
Furthermore, we follow \cite{haenggi2016meta,salehi2018meta,wang2017meta} to approximate the meta distribution by a Beta distribution by matching the first and second moments, which are easily obtained from the general result in Eq. (\ref{whole-moments}) by substituting $b=1$ and $b=2$ to get $M_{1,\text{T}}$ and $M_{2,\text{T}}$, respectively. 
Taking $\beta\overset{\Delta}{=}\frac{(M_{1,\text{T}}-M_{2,\text{T}})(1-M_{1,\text{T}})}{M_{2,\text{T}}-M_{1,\text{T}}^{2}}$, the meta distribution using beta approximation can be given as follows:
\begin{equation}\label{beta-match1}
\bar{F}_{P_{s,\text{T}}}(x)\approx 1-I_{x}\left(\frac{\beta M_{1,\text{T}}}{1-M_{1,\text{T}}},\beta\right), \quad\quad x\in [0,1],
\end{equation}

\subsection{Mean and Variance of the Local Delay}
The mean local delay is the mean number of transmission attempts, i.e., re-transmissions,  needed to successfully transmit a packet to the target receiver. The mean local delay $M_{-1,\text{T}}$  which is the $-1^{\mathrm{st}}$ moment of the  CSP of a typical {\color{black}device} should be calculated by substituting $b = -1$ in Eq. (\ref{whole-moments}).  However, since the expression of $M_{b, \text{T}}$ relies on a Binomial expansion of power $b$, the results cannot be obtained directly through substitution. Therefore, we apply  Binomial theorem for the negative integers as follows: 
\begin{Definition}\label{def_2} The Binomial theorem for a negative integer power $n$ can be given \cite{kronenburg2011binomial} as
$
(x+y)^{n}=\sum_{k=0}^{\infty}(-1)^{k}\binom{-n+k-1}{k}y^{n-k}x^{k},
$
\end{Definition}
Applying {\bf Definition~3} in step~(e) of {\bf Appendix~C}, we then  obtain the expression for $M_{-1, \text{T}}$ as follows:
\small
	\begin{align}\label{local_delay_total}
	&M_{-1,\text{T}}=\frac{1}{\text{ }_{2}F_{1}(-1,-\frac{2}{\alpha_1};1-\frac{2}{\alpha_1};-\theta_{2})}\times\bigg\{p_{L}^{-1} \sum_{k=0}^{\infty}\sum_{\ddot{k}=0}^{m_{L}k}\binom{m_{L}k}{\ddot{k}}(-1)^{\ddot{k}}\int_{0}^{d^{\alpha_{2,L}}} e^{-\zeta_{L}\ddot{k}\ddot{\nu_{L}} l_{\text{2,L}}} \bar{\mathcal{A}}_{2,L}(l_{2,L})+\notag\\ &\quad\quad\quad\quad\quad\quad\quad\quad\quad\quad\quad\quad\quad\quad\quad  p_{N}^{-1}\sum_{k=0}^{\infty}\sum_{\ddot{k}=0}^{m_{N}k}\binom{m_{N}k}{\ddot{k}}(-1)^{\ddot{k}}\int_{d^{\alpha_{2,N}}}^{\infty}
	e^{-\zeta_{N}\ddot{k}\ddot{\nu_{N}} l_{\text{2,N}}} \bar{\mathcal{A}}_{2,N}(l_{2,N}) dl_{2,N}\bigg\}+M_{-1,1}(\theta_{\mathcal{D}}),
	\end{align}
	\normalsize
	
{\bf Remark:} In order to better characterize the fluctuation of the local
delay, the variance of the local delay also referred to as network jitter can be given by
$
NJ=M_{-2,\text{T}}-M_{-1,\text{T}}^{2}.
$
\subsection{The Meta Distribution of the Data Rate in Hybrid Spectrum Networks}\label{data_rate_meat}
Let $\mathcal{T}$ denote the data rate (in bits/sec) of the typical {\color{black}device} on a specific transmission link which is a random variable and is defined as $\mathcal{R} = W \log_{2}(1 + \mathrm{SIR})$ using Shannon capacity.     Using the  meta distribution of the SIR, the meta distribution of the data rate can be derived to present the fraction of active devices in each realization of the point process that have a data rate $\mathcal{R}$ greater than $\mathcal{T}$ with probability at least $x$, i.e., {\color{black}devices} data rate reliability threshold. That is, first deriving the CSP of the data rate as follows:
\begin{align}
\mathbb{P}[\mathcal{R}>\mathcal{T}|\mathbf{\Phi},\text{tx}]&=\mathbb{P}[W\log_{2}(1+\mathrm{SIR})>\mathcal{T}|\mathbf{\Phi},\text{tx}]=\mathbb{P}[\mathrm{SIR}>2^{\frac{\mathcal{T}}{W}}-1|\mathbf{\Phi},\text{tx}],
\end{align}
where $P_{s}(2^{\frac{\mathcal{T}}{W}}-1)\overset{\Delta}{=}\mathbb{P}(\mathrm{SIR}>2^{\frac{\mathcal{T}}{W}}-1|\mathbf{\Phi}_1,\text{tx})$ denote the CSP of the {\color{black}device} data rate over single link. Finally, deriving the $b^{\mathrm{th}}$ moment of the CSP of the data rate and applying Gil-Pelaez inversion we can obtain the meta distribution of the data rate.

\begin{Corollary}
Similar to the meta distribution of the SIR/SNR derived in Lemma \ref{lem_3} and  conditioned on the location of the point process, we derive the meta distribution of the data rate in hybrid 5G cellular networks, using the moment $\mathcal{Q}_{b}$ of the conditional data rate as follows:
\small
\begin{align}\label{data_rate_bth_moment} &\mathcal{Q}_{b}(\mathcal{T})=\mathbb{E}[\bar{\mathcal{A}}_{2}(l_2)\mathbb{P}^{0}\left( P_{s,\mathrm{BH}}(2^{\frac{\mathcal{T}_{\mathrm{BH}}}{(1-\eta)W_{1}}}-1)P_{s,2}(2^{\frac{\mathcal{T}_2}{W_{2}}}-1)>x\right)]+\mathbb{E}[\bar{\mathcal{A}}_{1}(l_1)\mathbb{P}^{0}( P_{s,1}(2^{\frac{\mathcal{T}_{1}}{\eta W_1}}-1)>x)],\notag\\
&=M_{b,\mathrm{BH}}\left(2^{\frac{\mathcal{T}_{\mathrm{BH}}}{(1-\eta)W_{1}}}-1\right) M_{b,2}\left(2^{\frac{\mathcal{T}_2}{W_{2}}}-1\right)+M_{b,1}\left(2^{\frac{\mathcal{T}_{1}}{\eta W_1}}-1\right),
\end{align}
\normalsize
where $P_{s,1}(2^{\frac{\mathcal{T}_{1}}{\eta W_1}}-1)\overset{\Delta}{=}\mathbb{P}(\mathrm{SIR_{1,\mathcal{D}}}>2^{\frac{\mathcal{T}_{1}}{\eta W_1}}-1|\mathbf{\Phi}_1,\text{tx})$,
$P_{s,\mathrm{BH}}(2^{\frac{\mathcal{T}_{\mathrm{BH}}}{(1-\eta)W_{1}}}-1)\overset{\Delta}{=}\mathbb{P}(\mathrm{SIR_{1,2}}>2^{\frac{\mathcal{T}_{\mathrm{BH}}}{(1-\eta)W_{1}}}-1|\mathbf{\Phi_1},\text{tx})
$, and
$
P_{s,2}(2^{\frac{\mathcal{T}_2}{W_{2}}}-1)\overset{\Delta}{=}\mathbb{P}(\mathrm{SNR_{2,\mathcal{D}}}>2^{\frac{\mathcal{T}_2}{W_{2}}}-1|\mathbf{\Phi}_2,\text{tx})
$ denotes the CSP of the {\color{black}device} data rate at the direct, backhaul, and access link, respectively.
\end{Corollary}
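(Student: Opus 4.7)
The plan is to reduce this statement to a direct application of \textbf{Lemma~\ref{lem_3}} via the Shannon capacity formula. The key observation is that $\mathcal{R}=W\log_{2}(1+\mathrm{SIR})$ is strictly increasing in the SIR (or SNR), so the event $\{\mathcal{R}>\mathcal{T}\}$ is equivalent to $\{\mathrm{SIR}>2^{\mathcal{T}/W}-1\}$. Thus I can express the CSP of the rate directly in terms of the CSP of the SIR/SNR evaluated at the transformed threshold $\theta=2^{\mathcal{T}/W}-1$, which is exactly the object whose $b^{\mathrm{th}}$ moment was characterized in Section~\ref{v_sec}.

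I would then carry out the reduction link-by-link according to the spectrum partition described in Section~\ref{network_model_mmwave}. For the direct link, which uses the portion $\eta W_{1}$ of the $\mu$wave band, the event $\{\mathcal{R}_{1}>\mathcal{T}_{1}\}$ becomes $\{\mathrm{SIR}_{1,\mathcal{D}}>2^{\mathcal{T}_{1}/(\eta W_{1})}-1\}$; for the backhaul link, which uses $(1-\eta)W_{1}$, it becomes $\{\mathrm{SIR}_{1,2}>2^{\mathcal{T}_{\mathrm{BH}}/((1-\eta)W_{1})}-1\}$; and for the mm-wave access link with bandwidth $W_{2}$, it becomes $\{\mathrm{SNR}_{2,\mathcal{D}}>2^{\mathcal{T}_{2}/W_{2}}-1\}$. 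Hence the conditional rate success probabilities are $P_{s,1}$, $P_{s,\mathrm{BH}}$, and $P_{s,2}$ evaluated at these transformed arguments.

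Next, I would reproduce the decomposition from steps (a)--(e) of \textbf{Lemma~\ref{lem_3}} under the new thresholds. Conditioning on the serving tier, the device associates to the MBS for a direct transmission with conditional probability $\bar{\mathcal{A}}_{1}(l_{1})$, contributing a single-link term, or to an SBS for a dual-hop transmission with conditional probability $\bar{\mathcal{A}}_{2}(l_{2})$, contributing the product term. Because the backhaul operates on $\mu$wave and the access on mm-wave, the two sub-links see disjoint interferer processes and independent fading, so for the dual-hop path the end-to-end rate exceeds its target only when both hops succeed, and the two CSPs are independent. This independence lets me use $\mathbb{E}[(XY)^{b}]=\mathbb{E}[X^{b}]\mathbb{E}[Y^{b}]$ to split the moment exactly as in step (c) of \textbf{Lemma~\ref{lem_3}}, and then absorb the conditional association probabilities into the unconditional moments $M_{b,\mathrm{BH}}$, $M_{b,2}$, $M_{b,1}$ already derived in Section~\ref{v_sec}, yielding the stated form $M_{b,\mathrm{BH}}(\cdot)M_{b,2}(\cdot)+M_{b,1}(\cdot)$.

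The main obstacle is conceptual rather than computational: justifying that the backhaul SIR and the access SNR are independent random variables once the association distance is fixed. This rests on (i) orthogonality of the $\mu$wave and mm-wave spectra, so there is no cross-tier interference coupling the two hops, and (ii) independence of the fading realizations $g(0,\mathbf{y}_{1,0})$ on the backhaul and $h_{l}(0,\mathbf{y}_{2,0})$ on the access link. Once these are established, the remainder of the argument is a straightforward substitution of $\theta \mapsto 2^{\mathcal{T}/W}-1$ in the existing framework, and no new integrals or expansions are required beyond those already computed for the SIR/SNR meta distribution.
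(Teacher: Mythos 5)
Your proposal is correct and follows essentially the same route as the paper: the paper likewise inverts the Shannon formula to map each rate threshold to an SIR/SNR threshold ($\mathcal{T}\mapsto 2^{\mathcal{T}/W}-1$ with the per-link bandwidths $\eta W_1$, $(1-\eta)W_1$, $W_2$) and then reuses the decomposition and independence argument of Lemma~\ref{lem_3} to obtain $M_{b,\mathrm{BH}}(\cdot)M_{b,2}(\cdot)+M_{b,1}(\cdot)$. Your added care in justifying the independence of the backhaul SIR and access SNR (orthogonal spectra, independent fading) matches the remark already made in the proof of Lemma~\ref{lem_3}, so no new ideas are needed.
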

In the following section, we  discuss the application of this framework in two scenarios (i) $\mu$wave only network and (ii) mm-wave backhauls and microwave access links.

\section{Extensions of The Model to Other Network Architectures}\label{mu-wave-meta}
The framework discussed above can be flexibly applied to different network architectures.  In this section we discuss how to extend the framework to two other network architectures: 1) both tiers operating in the sub-6GHz (microwave) spectrum as in traditional cellular networks; and 2) the two tiers operating in two millimeter-wave spectrums which are orthogonal to each other. Due to space limitation, we  provide only general directions of how to extend the earlier framework to these two other network architectures.
\subsection{The Meta Distribution of the SIR in Microwave-only Cellular Networks}
We characterize the meta distribution of the downlink $\mathrm{SIR}$ attained at a typical {\color{black}device} in a  $\mu$wave-only cellular network, i.e., the access and backhaul links of SBSs operate in the $\mu$wave frequency. A {\color{black}device} associates with either a serving MBS for direct transmissions (when $k=1$) or a SBS for dual-hop transmissions (when $k=2$), depending on the biased received signal power criterion. MBSs and SBSs are assumed to operate on orthogonal spectrums; thus, there is no inter-tier interference. { On the other hand, each SBS associates with a MBS based on the maximum received power at the SBS. The association criterion for a typical {\color{black}device} can be described as follows \cite{jo2012heterogeneous}:}
      \begin{equation}\label{biasing1}
    P_{k}B_{k}(\min_{i}\|\mathbf{y}_{k,i}-\mathbf{x}\|)^{-\alpha_{k}}\ge P_{j}B_{j}(\min_{i^{'}}\|\mathbf{y}_{j,i^{'}}-\mathbf{x}\|)^{{-\alpha_{j}}}, \forall j
    \end{equation}
    \noindent
    where $\|.\|$ denotes the Euclidean distance. A typical {\color{black}device} associates with a serving node (given by Eq. (\ref{biasing1}))), which is termed the tagged SBS. For the sake of clarity, we define $\hat{P}_{jk}\overset{\Delta}{=}\frac{P_{j}}{P_{k}}$, $\hat{B}_{jk}\overset{\Delta}{=}\frac{B_{j}}{B_{k}}$, $\hat{\lambda}_{jk}\overset{\Delta}{=}\frac{\lambda_{j}}{\lambda_{k}}$. As derived in \cite{jo2012heterogeneous}, the conditional association probability for the typical {\color{black}device} connecting to the $k^{\mathrm{th}}$ tier (conditional over the desired link distance $r_{\mathcal{D},k}$) is as follows:
    \begin{equation}\label{condition-access}
    \mathbb{P}(n=k|r_{\mathcal{D},k})=\prod_{j\neq k}e^{-\pi\lambda_{j}(\hat{P}_{jk}\hat{B}_{jk})^{2/\alpha_{j}}r^{2}},
    \end{equation}
    where $n$ denotes the index of the tier associating with the typical {\color{black}device}. We calculate the {CSP $P_{s,2'}(\theta_{\mathcal{D}})$}  (when $k=2$) of the access link operating in the $\mu$wave band as follows:
     \begin{align}\label{MBS-OR-SBS_macrowave}
        &P_{s,2'}(\theta_{\mathcal{D}})=\mathbb{P}\left(g(0, \mathbf{y}_{2,0})>\frac{\theta_{\mathcal{D}}r_{2,\mathcal{D}}^{\alpha_{2}}}{P_{2}}\mathcal{I}_{2,\mathcal{D}}|\mathbf{\Phi}_{1},\mathbf{\Phi}_{2},\text{tx}\right),\notag\\
        &\stackrel{(a)}{=}\mathbb{E}\bigg[\exp\left(-\theta_{\mathcal{D}}r_{2,\mathcal{D}}^{\alpha_{2}}\sum_{i:\mathbf{y}_{2,i}\in\mathbf{\Phi}_2 \backslash \{\mathbf{y}_{2,0}\}}\|\mathbf{y}_{2,i}\|^{-\alpha_{2}}g(0,\mathbf{y}_{2,i})\right)\bigg]\stackrel{(b)}{=}\prod_{\mathbf{y}_{2,i}\in\mathbf{\Phi}_2\backslash \{\mathbf{y}_{2,0}\}}\frac{1}{1+\theta_{\mathcal{D}}\left(\frac{r_{2,\mathcal{D}}}{\|\mathbf{y}_{2,i}\|}\right)^{\alpha_2}}.
      \end{align}
      where (a) follows from the channel gain $g(0, \mathbf{y}_{2,0})\sim \exp(1)$ and is independently exponentially distributed with unit mean and (b) is obtained by taking the expectation with respect to $g(0,\mathbf{y}_{2,i})$.
      \begin{lemma}\label{5}
      Using Eq. (\ref{MBS-OR-SBS}) and Eq. (\ref{MBS-OR-SBS_macrowave}), we calculate a general expression for the {$\boldsymbol{b^{\mathrm{th}}}$ moment of the CSP on direct link} $M_{b,\text{k'}}$ (when $k=2$) and the {$\boldsymbol{b^{th}}$ moment of the CSP at access link} (when $k=1$) as:
    \small
    \begin{align}\label{MBS-or-SBS-b-moment}
    M_{b,\text{k'}}&=\frac{1}{\sum\limits_{j\neq k}\hat{\lambda}_{jk}(\hat{P}_{jk}\hat{B}_{jk})^{2/\alpha_{j}}+\text{ }_{2}F_{1}(b,-\frac{2}{\alpha_{k}};1-\frac{2}{\alpha_{k}};-\theta_{\mathcal{D}})}.
    \end{align}
    \normalsize
    \end{lemma}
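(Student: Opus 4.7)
The plan is to mirror the backhaul calculation in Eq.~\eqref{two-hops-b-backhaul} but carry the multi-tier association rule of Eq.~\eqref{condition-access} through the outer expectation over the serving-BS distance $r_{k,\mathcal{D}}$. I would split $M_{b,k'}=\mathbb{E}[P_{s,k'}^{b}]$ into an inner PGFL of the interfering tier-$k$ BSs at distances beyond the tagged one, times an outer integral over $r_{k,\mathcal{D}}$ that absorbs both the nearest-BS density and the association event $\{n=k\}$.

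For the inner step, I would raise the CSP in Eq.~\eqref{MBS-OR-SBS_macrowave} (or Eq.~\eqref{MBS-OR-SBS} when $k=1$) to the $b^{\mathrm{th}}$ power and, conditioning on $r_{k,\mathcal{D}}=r$, invoke the PGFL of $\mathbf{\Phi}_{k}\setminus\{\mathbf{y}_{k,0}\}$ restricted to $\|\mathbf{y}\|>r$:
\begin{equation*}
\mathbb{E}\bigl[P_{s,k'}^{b}\,\bigl|\,r\bigr]=\exp\!\left(-2\pi\lambda_{k}\int_{r}^{\infty}\!\Bigl[1-\bigl(1+\theta_{\mathcal{D}}(r/x)^{\alpha_{k}}\bigr)^{-b}\Bigr]x\,\mathrm{d}x\right).
\end{equation*}
The substitution $t=(r/x)^{\alpha_{k}}$ reduces this to $\exp\bigl(-\pi\lambda_{k}r^{2}\bigl[{}_{2}F_{1}(b,-2/\alpha_{k};1-2/\alpha_{k};-\theta_{\mathcal{D}})-1\bigr]\bigr)$ once the resulting Mellin integral is identified with the Gauss hypergeometric function. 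For the outer step, the nearest-tier-$k$-BS density contributes $2\pi\lambda_{k}r\,e^{-\pi\lambda_{k}r^{2}}$ and the association rule in Eq.~\eqref{condition-access} contributes $\prod_{j\neq k}\exp\bigl(-\pi\lambda_{j}(\hat{P}_{jk}\hat{B}_{jk})^{2/\alpha_{j}}r^{2}\bigr)$. Because every factor is Gaussian in $r$, the outer integral collapses to $\int_{0}^{\infty}2\pi\lambda_{k}r\,e^{-\pi r^{2}\Lambda}\,\mathrm{d}r=\lambda_{k}/\Lambda$, and the ``$-1$'' from the conditional moment cancels with the ``$+1$'' from the void probability to yield $\Lambda=\lambda_{k}\,{}_{2}F_{1}(b,-2/\alpha_{k};1-2/\alpha_{k};-\theta_{\mathcal{D}})+\sum_{j\neq k}\lambda_{j}(\hat{P}_{jk}\hat{B}_{jk})^{2/\alpha_{j}}$; dividing through by $\lambda_{k}$ and recalling $\hat{\lambda}_{jk}=\lambda_{j}/\lambda_{k}$ reproduces the claimed expression.

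The main obstacle I anticipate is the hypergeometric identification: the integrand $[1-(1+\theta_{\mathcal{D}}t)^{-b}]t^{-2/\alpha_{k}-1}$ has a non-integrable singularity at $t=0$ in each summand separately, so the combination must be handled either by a single integration by parts or by analytic continuation of the Euler integral for ${}_{2}F_{1}$, relying on the usual pathloss assumption $\alpha_{k}>2$. Everything else -- the PGFL of the PPP, the Gaussian outer integral, and the bookkeeping of biases and densities via $\hat{P}_{jk}$, $\hat{B}_{jk}$, $\hat{\lambda}_{jk}$ -- is routine and directly parallels the derivation of $M_{b,\mathrm{BH}}$ in Eq.~\eqref{two-hops-b-backhaul}.
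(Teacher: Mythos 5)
Your proposal is correct and follows essentially the same route as the paper's Appendix~E: condition on the serving distance, apply the PGFL of the interfering tier-$k$ points beyond $r$, identify the resulting integral with ${}_{2}F_{1}(b,-\frac{2}{\alpha_{k}};1-\frac{2}{\alpha_{k}};-\theta_{\mathcal{D}})-1$, and evaluate the outer Gaussian integral in $r$ against the nearest-BS density and the association factor $\prod_{j\neq k}e^{-\pi\lambda_{j}(\hat{P}_{jk}\hat{B}_{jk})^{2/\alpha_{j}}r^{2}}$, with the $\pm 1$ cancellation giving the stated denominator. Your worry about the $t=0$ singularity is already resolved by the observation that $1-(1+\theta_{\mathcal{D}}t)^{-b}=O(t)$ near the origin, so the combined integrand is integrable for $\alpha_{k}>2$ without further devices.
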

\begin{proof}
See {\bf Appendix~E}.
\end{proof}
Note that Lemma \ref{5} is novel and different from \cite{rrrt} where we derive the $b^{\mathrm{th}}$ moment of CSP for orthogonal spectrum two tier network while the work in \cite{rrrt} is done for shared spectrum tiers.

Similarly, the moment of the CSP  of a typical {\color{black}device} with offloading biases is defined as follows:
\begin{align}\label{total_meta_microwave}
        M_{b,\text{T}}&= \underbrace{M_{b,\text{dual-hop}}}_\text{Dual-hop  transmission}+\underbrace{M_{b,1'}(\theta_{\mathcal{D}})}_\text{Direct transmission}\stackrel{(a)}{=} M_{b,\mathrm{BH}}(\theta_{2}) M_{b,2'}(\theta_{\mathcal{D}})+M_{b,1'}(\theta_{\mathcal{D}}),
\end{align}
\normalsize
where $M_{b,\mathrm{BH}}(\theta_{2})$, $M_{b,2'}(\theta_{\mathcal{D}})$, and $M_{b,1'}(\theta_{\mathcal{D}})$ are defined in Eq. \eqref{two-hops-b-backhaul}, Eq. \eqref{MBS-or-SBS-b-moment} (when $k=2$), and Eq. (\ref{MBS-or-SBS-b-moment}) (when $k=1$), respectively. The step (a) follows from the similar approach as taken in {\bf Lemma~4}.
\small
  \begingroup \allowdisplaybreaks
          \begin{align}\label{two-hops-b-moment}
            M_{b,\text{dual-hop}}&=\mathbb{E}\bigg[P_{s,\mathrm{BH}}(\theta_{2})^{b}\times \prod_{j\neq k}e^{-\pi\lambda_{j}(\hat{P}_{jk}\hat{B}_{jk})^{2/\alpha_{j}}r^{2}} P_{s,2'}(\theta_{\mathcal{D}})^{b}\bigg],\notag\\
            &\stackrel{(a)}{=}\underbrace{\mathbb{E}\bigg[P_{s,\mathrm{BH}}(\theta_{2})^{b}\bigg]}_\text{$M_{b,\mathrm{BH}}(\theta_{2})$(Backhaul link )}\underbrace{\mathbb{E}\bigg[\prod_{j\neq k}e^{-\pi\lambda_{j}(\hat{P}_{jk}\hat{B}_{jk})^{2/\alpha_{j}}r^{2}}P_{s,2'}(\theta_{\mathcal{D}})^{b}\bigg]}_\text{$M_{b,2}(\theta_{\mathcal{D}})$(access link)},\notag\\
                 &\stackrel{(b)}{=}\frac{1}{\text{ }_{2}F_{1}(b,-\frac{2}{\alpha_1};1-\frac{2}{\alpha_1};-\theta_{2})}\times\frac{1}{\hat{\lambda}_{12}(\hat{P}_{12}\hat{B}_{12})^{2/\alpha_{1}}+\text{ }_{2}F_{1}(b,-\frac{2}{\alpha_{2}};1-\frac{2}{\alpha_{2}};-\theta_{\mathcal{D}})},
          \end{align}
          \endgroup
          \normalsize
         where (a) follows from the independence between the location of the MBSs and SBSs. In step (b) we substitute  $M_{b,\text{BH}}(\theta_{2})$ from Eq. \eqref{two-hops-b-backhaul} and  $M_{b,\text{2'}}(\theta_{\mathcal{D}})$ into Eq. \eqref{MBS-or-SBS-b-moment} when $k=2$.
         By substituting Eq. \eqref{two-hops-b-moment} and Eq. \eqref{MBS-or-SBS-b-moment} (when $k=1$) in Eq. (\ref{total_meta_microwave}), we get the $b^{\mathrm{th}}$ moment  $M_{b,\text{T}}$. Finally, by substituting $M_{b,\text{T}}$ in Eq. (\ref{total_meta_microwave}) into either Eq. (\ref{Gil-pelaez-2hops}) or Eq. (\ref{beta-match}), we get the meta distribution of the $\mathrm{SIR}$. 

\subsection{Extensions to Millimeter-wave Backhauls Networks}\label{mmwave_backhaul_cor}
 The proposed framework can be extended to a scenario where the backhaul and access transmissions are conducted on orthogonal mm-wave spectrums. Note that Eq. (\ref{SIR-backhaul-O}) will be changed similar to Eq. (\ref{snr-access-O-mmwave}). Then, only the first term, $M_{b,\mathrm{BH}}(\theta_{2})$ in the main Eq. (\ref{total_meta}) of our model that characterizes the moment of the CSP in the backhaul will be re-defined as $M_{b,\mathrm{BH}}(\theta_{2})=\mathbb{E}[P^b_{s,2}(\theta_2)]$.

The framework can also be extended to a scenario where the backhaul transmissions are conducted on the mm-wave spectrum and the access links of SBSs operate on $\mu$-wave.
In this case, we will need to use the results in Section~VII.A while redefining the term $M_{b,\mathrm{BH}}(\theta_{2})$  as  $M_{b,\mathrm{BH}}(\theta_{2})=\mathbb{E}[P^b_{s,2}(\theta_2)]$ in \eqref{two-hops-b-moment}.

\section{Numerical Results and Discussions}\label{results_mm}
We present the simulation parameters in Section \ref{parameter_sec}. Then, we validate our numerical results using Monte-Carlo simulations in Section \ref{num_resu}. In Section \ref{num_resu}, we use the developed analytical models  to obtain insights related to the meta distribution of the SIR/SNR of a typical {\color{black}device}, mean and variance of the success probability, transmission delay, and the reliability of a typical {\color{black}device} in the downlink direction.

\subsection{Simulation Parameters}\label{parameter_sec}
Unless otherwise stated, we use the following simulation parameters throughout our numerical results. The transmission powers of MBSs and SBSs in the downlink are $P_{1}=50$ Watts and $P_{2}=5$ Watts, respectively. The size of the simulated network is $ 90\text{km}\times90 \text{km}$. We assume that the density of MBSs is $\lambda_{1}=2$ MBSs/km$^{2}$ and the density of SBSs is $\lambda_{2}=70$ SBSs/km$^{2}$. The offloading biases for the MBSs and the SBSs are  $B_{1}=B_{2}=1$, respectively. The PLE for MBSs is set to $\alpha_{1}=4$ and for mm-wave SBSs, $\alpha_{2,L}=2$ in the case of LOS and $\alpha_{2,N}=4$ in the case of NLOS. The network downlink bandwidth is 100 MHz for $\mu$wave MBSs and 1 GHz for mm-wave SBSs with channel frequency 28 GHz. The LOS (NLOS) states are modeled by large (small) values of $m$, i.e., {$m_L=2$ and $m_N=1$} \cite{turgut2017coverage}. SBSs number of antenna elements is $\mathcal{N}=10$. The receiver noise  is calculated as \cite{singh2015tractable}   $\sigma_{2}^{2}=-174\text{ dBm/Hz}+10\log_{10}(W_{2})+10\text{ dB}$, where $W_{2}=1\text{ GHz}$ is bandwidth allocated to the mm-wave SBSs. The antenna gains of MBSs are $G^{\mathrm{o}}_{1}=0$ dB and {\color{black}devices} directional antenna gain is $G^{\mathrm{max}}_{\mathcal{D}}=10$ dB.

\subsection{Numerical Results and Discussions}\label{num_resu}

\subsubsection{Association Probability}

Fig. \ref{association_total}  illustrates the accuracy of association probabilities in a hybrid spectrum network, derived in {\bf Lemma \ref{1}} and {\bf Lemma \ref{2}}, as a function of  $\lambda_2$ by showing a comparison  with Monte-Carlo simulations.  We notice from  Fig.~\ref{association_total} that by increasing the density of the
mm-wave SBSs $\lambda_2$, the probability of association with mm-wave LOS
SBSs $\mathcal{A}_{2,L}$ increases which confirms the insights from {\bf Corollary~1} and {\bf Corollary~2}. The reason is the increasing number of SBSs per unit area within the LOS ball will favour the {\color{black}device} association towards LOS SBSs and reduces the chances of associating with NLOS SBSs.  The addition of
$\mathcal{A}_{1}+\mathcal{A}_{2,L}+\mathcal{A}_{2,N}=1$ is equal  to
unity for different densities of SBSs  $\lambda_2$. Note that the probability of associating with $\mu$wave MBSs is minimal due to a higher path-loss exponent and NLOS omnidirectional transmissions from MBSs.

\subsubsection{The Meta Distribution of the {\color{black} SIR/}SNR}
In Fig. \ref{meta_total_mmwave}, we validate our analytical results for the meta distribution of the {\color{black}SIR/}SNR of a typical {\color{black}device} in a hybrid spectrum network through simulations.  Fig. \ref{meta_total_mmwave} also  depicts the probability of achieving reliability $x$, i.e., $x\%$ fraction of {\color{black}devices} can achieve their quality of service for $\theta \in \{10, 1, 0.1\}$ dB. From Fig. \ref{meta_total_mmwave}, we note that about 18\% of the devices (when $\theta=10$), 51\% of devices (when $\theta=1$), and 96\% of devices (when $\theta=0.1$) have success probabilities equal to $0.3$.

\begin{figure}[ht]
\begin{minipage}[b]{.48\textwidth}
\centering
\includegraphics[width=0.95\textwidth]{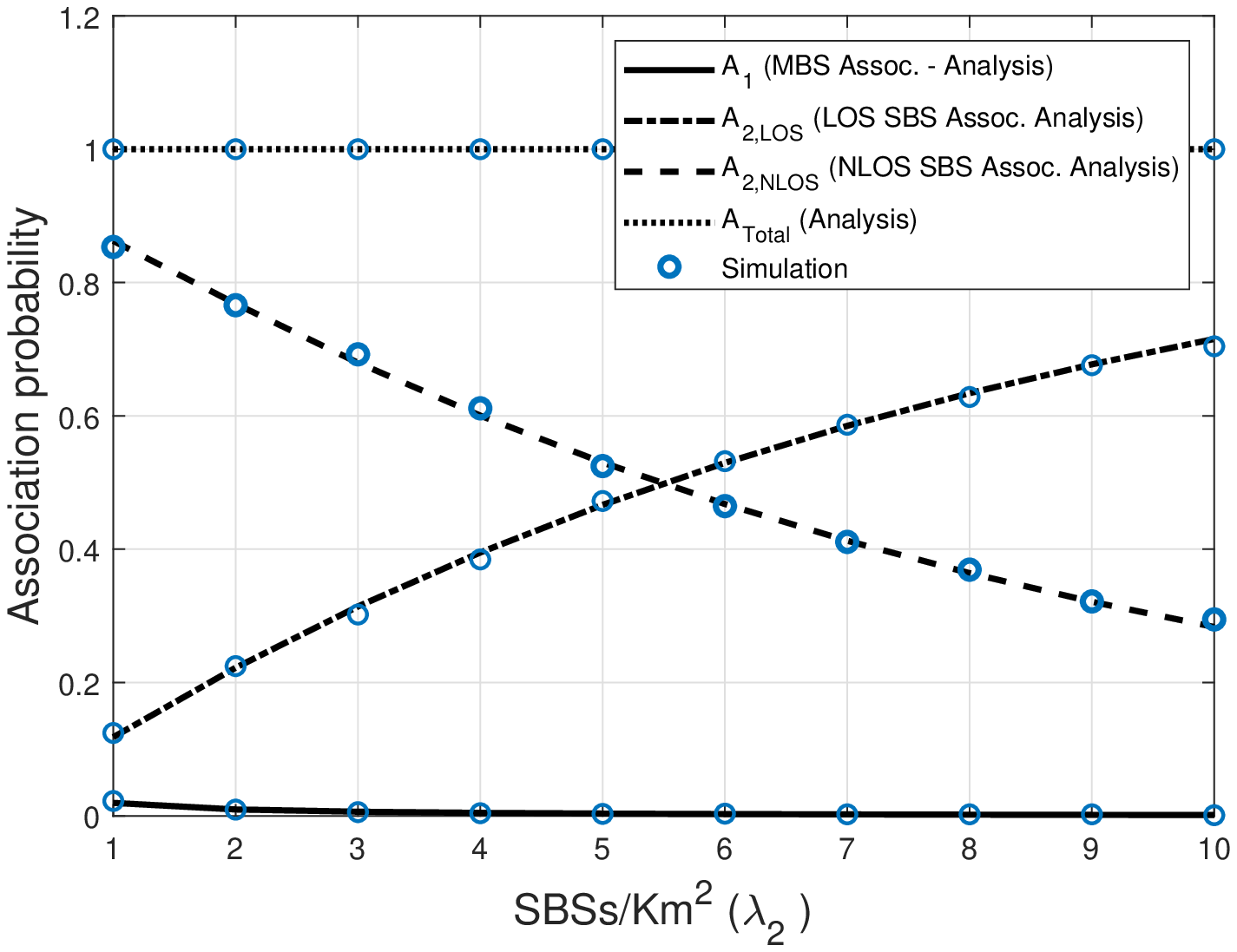}
\caption{\small Association probabilities as a function of $\lambda_{2}$ for the hybrid spectrum network when $\lambda_1=2\text{ MBSs/km}^{2}$, $B_1=B_2=1$,  and $d=200\text{m}$.}
\label{association_total}
\end{minipage}
\hfill
\begin{minipage}[b]{.48\textwidth}
\centering
\includegraphics[width=0.9\textwidth]{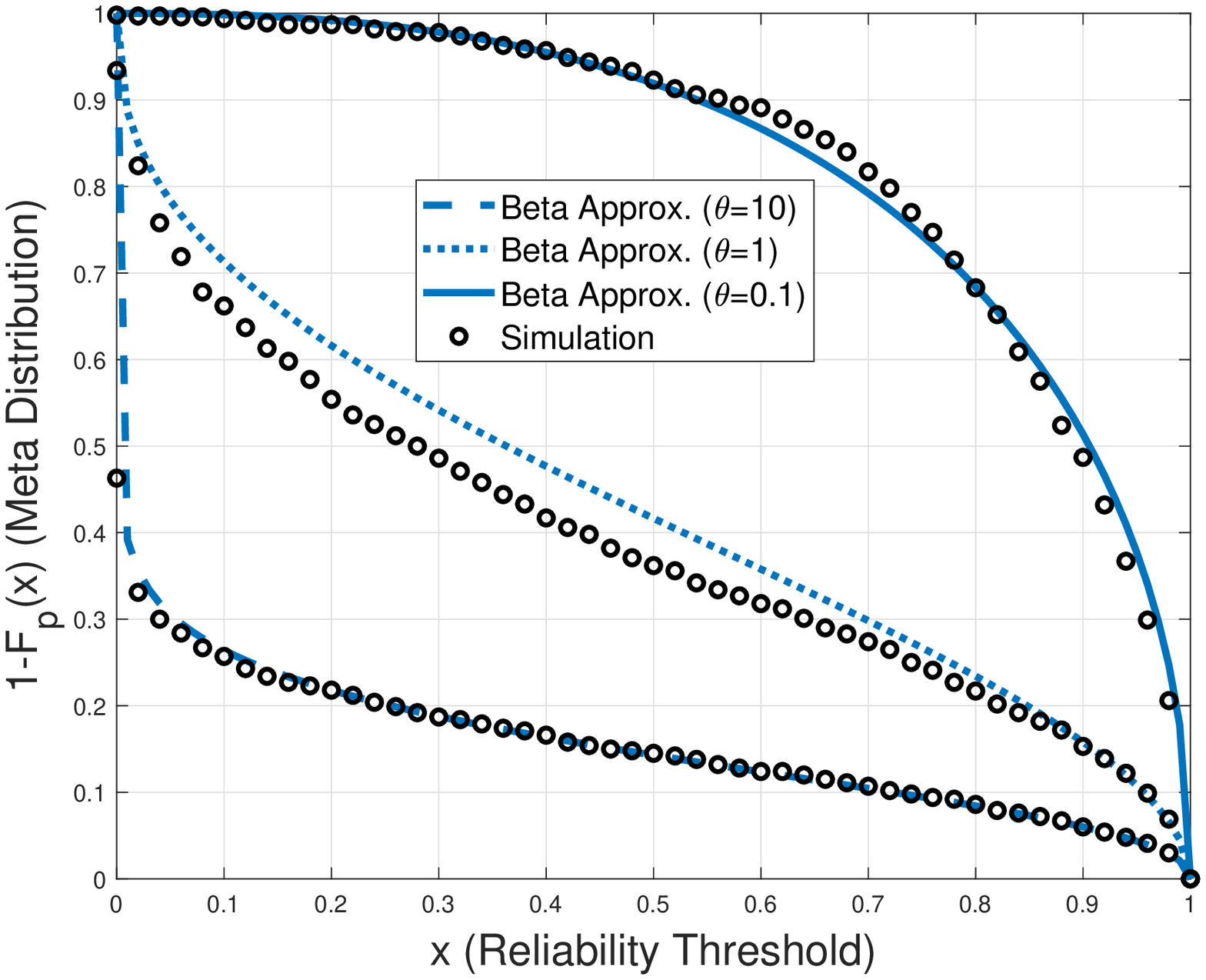}
\caption{\small The meta distribution vs. reliability threshold  $x$ for $\theta=\theta_{\mathcal{D}}=\theta_{2}=$10, 1, and 0.1 for the hybrid spectrum network when $B_{1}=B_{2}=1$,  and $d=200\text{m}$.}
\label{meta_total_mmwave}
\end{minipage}
\end{figure}

\subsubsection{Coverage and Variance as a Function of SIR/SNR Threshold in Hybrid Spectrum Networks}
Fig. \ref{coverage_varianace_mmwave} illustrates the standard success probability  $M_{1,\text{T}}$ and its variance $M_{2,\text{T}}-M^2_{1,\text{T}}$ as a function of target {\color{black}SIR/}SNR threshold $\theta$ of {\color{black}devices} in a hybrid spectrum network. As we can see  in Fig. \ref{coverage_varianace_mmwave} that the simulation results match the analytical results, however the slight gap is due to the Alzer's inequality considered in {\bf Appendix~C}. This gap will be zero when Nakagami fading turns into Rayleigh fading as shown in the next figure. 
By examining Fig. \ref{coverage_varianace_mmwave}, a numerical evaluation shows that the variance is maximized at $\theta=-3$ dB where the success is $M_{1,\text{T}} = 0.49$. For moderate values of $\theta$, there is a trade-off between maximizing coverage or reducing variance because the variance first increases and then decreases while the coverage probability is monotonically decreasing. For higher values of $\theta$,  lower coverage probabilities have lower variance so its a low-reliability regime where more {\color{black}devices}' performances are spread around low coverage probability. As such, the low values of $\theta$ provides a higher reliability regime.

Fig. \ref{coverage_varianace_rayleigh} illustrates the standard
success probability $M_{1,\text{T}}$ and the variance as
a function of $\theta$ with Rayleigh fading (i.e., $m_L=m_N=1$). As we can see  in   Fig. \ref{coverage_varianace_rayleigh} that the simulation results closely match the analytical results. The reason is that the approximation of the incomplete Gamma function (also referred to as Alzer's inequality) becomes exact when $m_L$ becomes equal to unity.
Subsequently, this figure explains the reason for the gap between the simulation and the analytical curves in Fig. \ref{coverage_varianace_mmwave}.

\begin{figure}[ht]
\begin{minipage}[b]{.48\textwidth}
\centering
\includegraphics[width=0.85\textwidth]{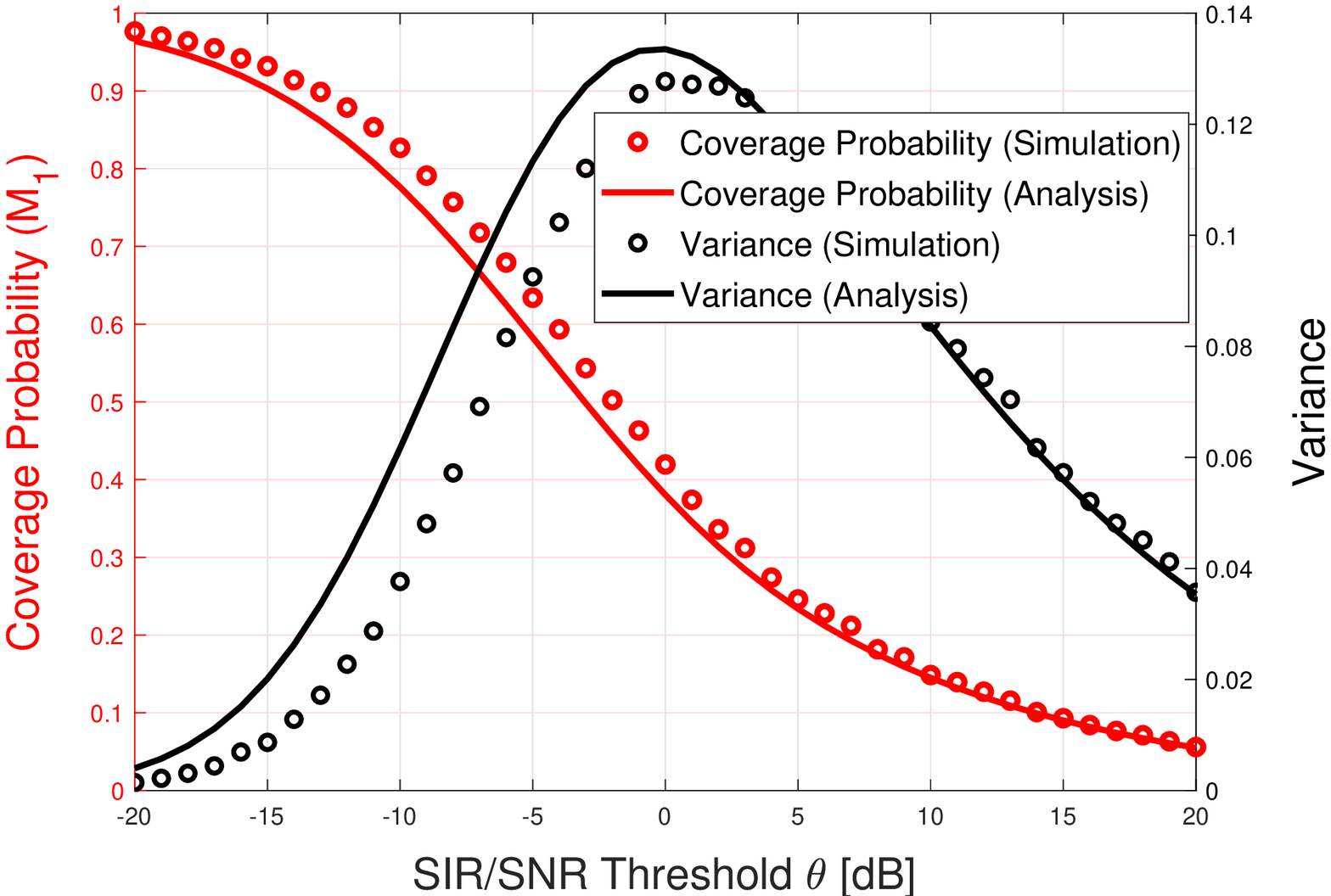}
\caption{\small Coverage probability $M_{1,\text{T}}$ and  variance $M_{2,\text{T}}-M_{1,\text{T}}^{2}$ as a function of $\theta$ considering Nakagami-m fading when $B_{1}=B_{2}=1$ and $d=200\text{m}$.}
\label{coverage_varianace_mmwave}
\end{minipage}
\hfill
\begin{minipage}[b]{.48\textwidth}
\centering
\includegraphics[width=0.9\textwidth]{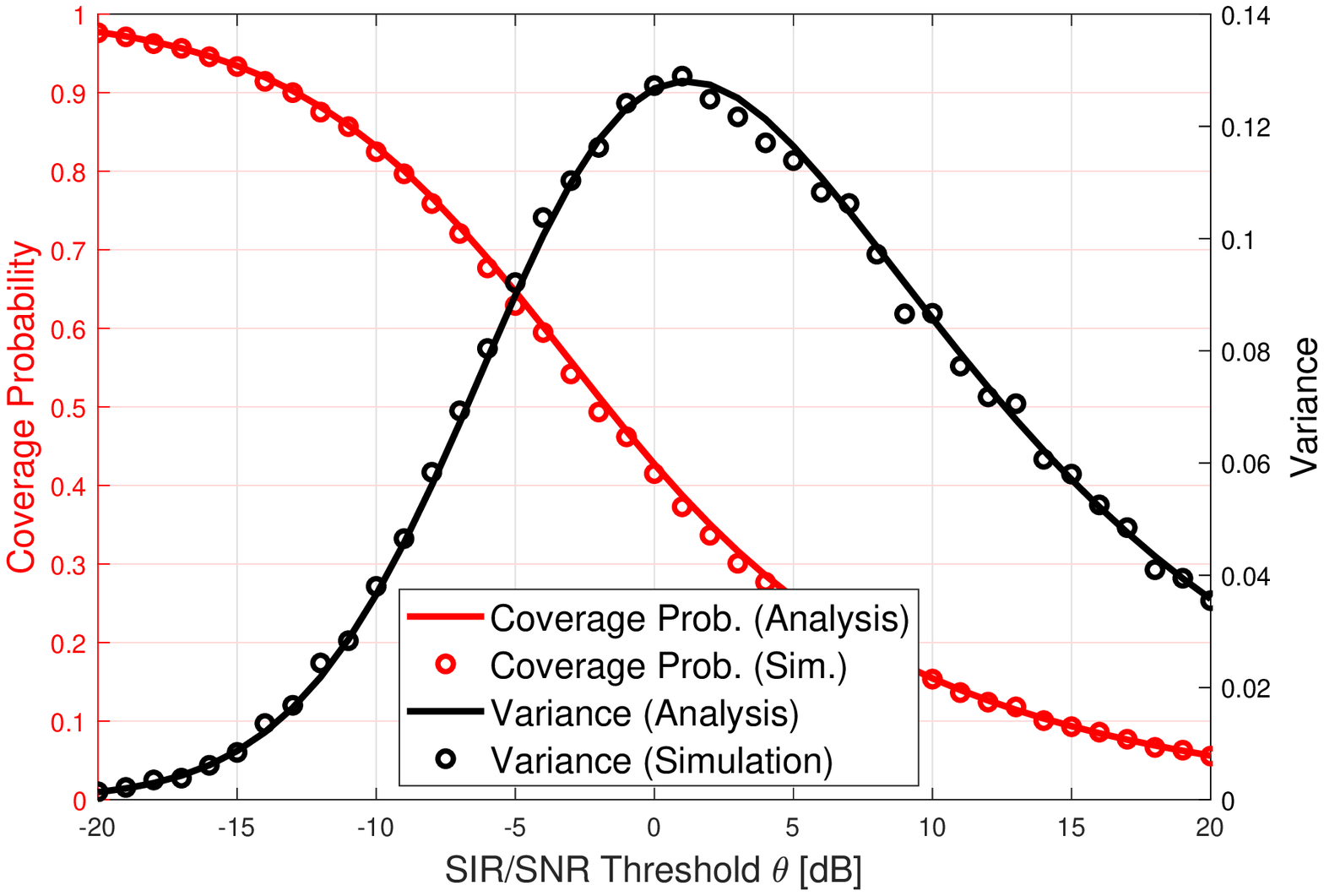}
\caption{\small Coverage probability $M_{1,\text{T}}$ and variance	$M_{2,\text{T}}-M_{1,\text{T}}^{2}$ as a function of $\theta$ considering Rayleigh fading (i.e., $m_L=m_N=1$, when $B_{1}=B_{2}=1$  and $d=200\text{m}$.}
\label{coverage_varianace_rayleigh}
\end{minipage}
\end{figure}



 \subsubsection{Coverage and Variance as a Function of  the Number of Antenna Array Elements in Hybrid Spectrum Networks}
 Fig. \ref{coverage_varianace_antenna} depicts the coverage probability and variance as a function of $\theta$ considering the number of antenna array elements  as $\mathcal{N}=10$, 20, and 30 to show the effect of higher directional antenna gains. The general trends for the coverage probability and its variance are found to be the same as in previous figures. The main observation is that although the coverage enhancement is not significant with increasing antenna elements,  the reduction in the variance is noticeable which supports  higher directional antenna gains and the importance of analyzing the higher moments of the CSP.

 \subsubsection{Coverage and Variance as a Function of $B_{2}$ in $\mu$wave-only Networks}
In Fig. \ref{coverage_variance_bias}, we study the effect of offloading {\color{black}devices} from the MBS tier to the SBSs tier in terms of  the coverage probability (which is the mean reliability) and the variance of the CSP (or reliability). By offloading {\color{black}devices} from the MBS tier to the SBSs tier when $B_2=30$, the coverage probability $M_{1,\text{T}}$ suffers due to the dual-hop transmission effect in wireless backhauled SBSs; however the variance of the results reduces which is a novel and positive insight. Another observation is that the variance of the CSP in $\mu$wave-only network is high compared to the hybrid network. This can be shown by comparing points $V_{1}=(1,0.1)$ in Fig. \ref{coverage_varianace_antenna} and $V_{2}=(4,0.19)$ in Fig. \ref{coverage_variance_bias}, for the case of $B_1=B_2=1$. We noticed that the variance has decreased from 0.19 to 0.1 when the SBS antenna array size is increased to $\mathcal{N}=20$. This implies that the hybrid spectrum network outperforms the $\mu$wave-only network due to the directional antenna gains.

\begin{figure}[ht]
\begin{minipage}[b]{.48\textwidth}
\centering
\includegraphics[width=0.85\textwidth]{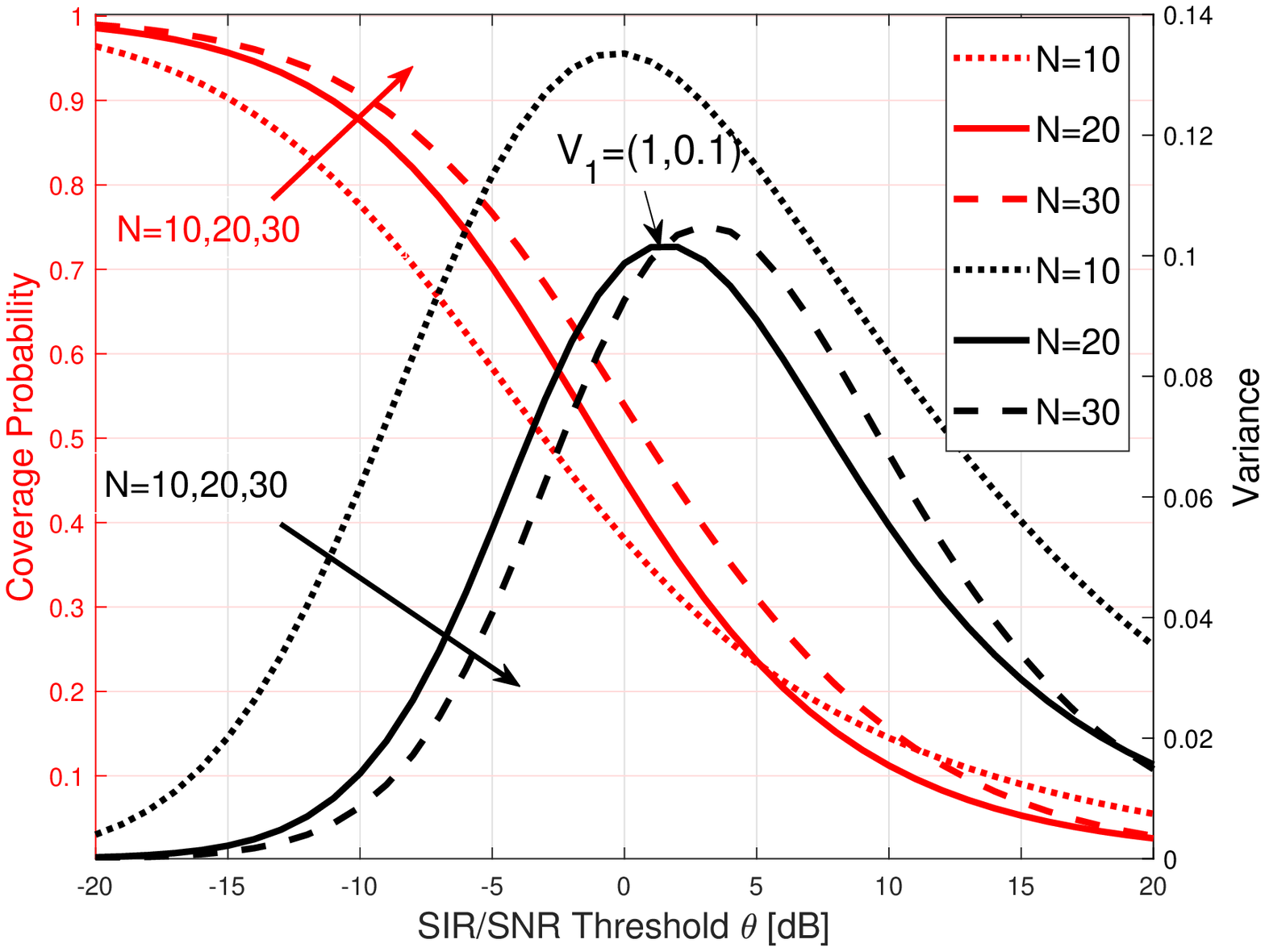}
\caption{\small Coverage probability $M_{1,\text{T}}$ and  variance $M_{2,\text{T}}-M_{1,\text{T}}^{2}$ as a function of {\color{black} $\mathcal{N}$} for hybrid spectrum network when $B_{1}=B_{2}=1$, and $d=200\text{m}$.}
\label{coverage_varianace_antenna}
\end{minipage}
\hfill
\begin{minipage}[b]{.48\textwidth}
\centering
\includegraphics[width=0.95\textwidth]{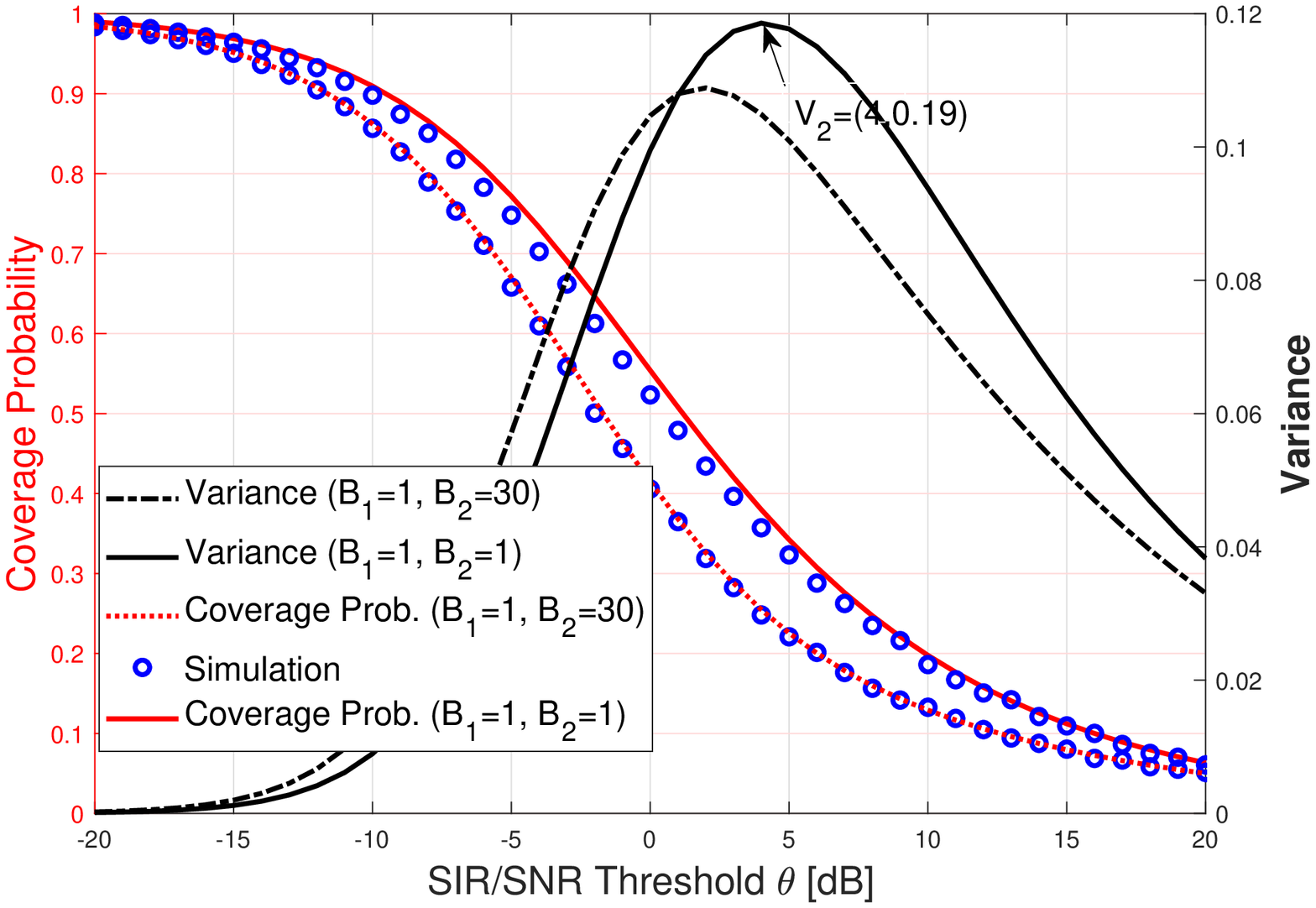}
\caption{\small Coverage probability $M_{1,\text{T}}$ and variance $M_{2,\text{T}}-M_{1,\text{T}}^{2}$ as a function of $\theta$ for $\mu$wave-only network when $\alpha_{1}=\alpha_{2}=4$, $B_{1}=1$, and $B_{2}=1$ and 30.}
\label{coverage_variance_bias}
\end{minipage}
\end{figure}


 \subsubsection{Mean Local Delay ($\mu$wave vs mm-wave SBSs)}
 Fig. \ref{local_delay_mmwave} depicts the mean local delay experienced by a typical {\color{black}device} as a function of the SBSs density $\lambda_2$ in a hybrid spectrum network. The mean local delay is the mean number of transmission attempts  to successfully transmit a packet.
 The mean local delay
 increases by increasing $\lambda_2$. After  the SBS density reaches $\lambda_2=20\text{
SBSs/km}^{2}$, the mean local delay stays constant at value 1.11. This result can be intuitively explained as follows. When the mm-wave SBS density is low, the typical {\color{black}device} has a higher probability to connect to a MBS, i.e., the mean local delay of the network results from only one hop communication (from the MBS to the {\color{black}device}). However, when the $\lambda_2$ increases, the typical {\color{black}device} has a higher probability to connect to a mm-wave SBS, i.e., the network local delay results from two hops communication (from the MBS to the SBS then from the SBS to the {\color{black}device}). Furthermore, the beamforming high directional gain steerable antennas will push more {\color{black}devices} to associate with SBSs thus a higher network delay is observed.  Fig. \ref{local_delay} shows that, all else being equal, the mean local delay of the hybrid spectrum network is lower than that of the $\mu$wave-only network.

\begin{figure}[ht]
\begin{minipage}[b]{.48\textwidth}
\centering
\includegraphics[width=0.91\textwidth]{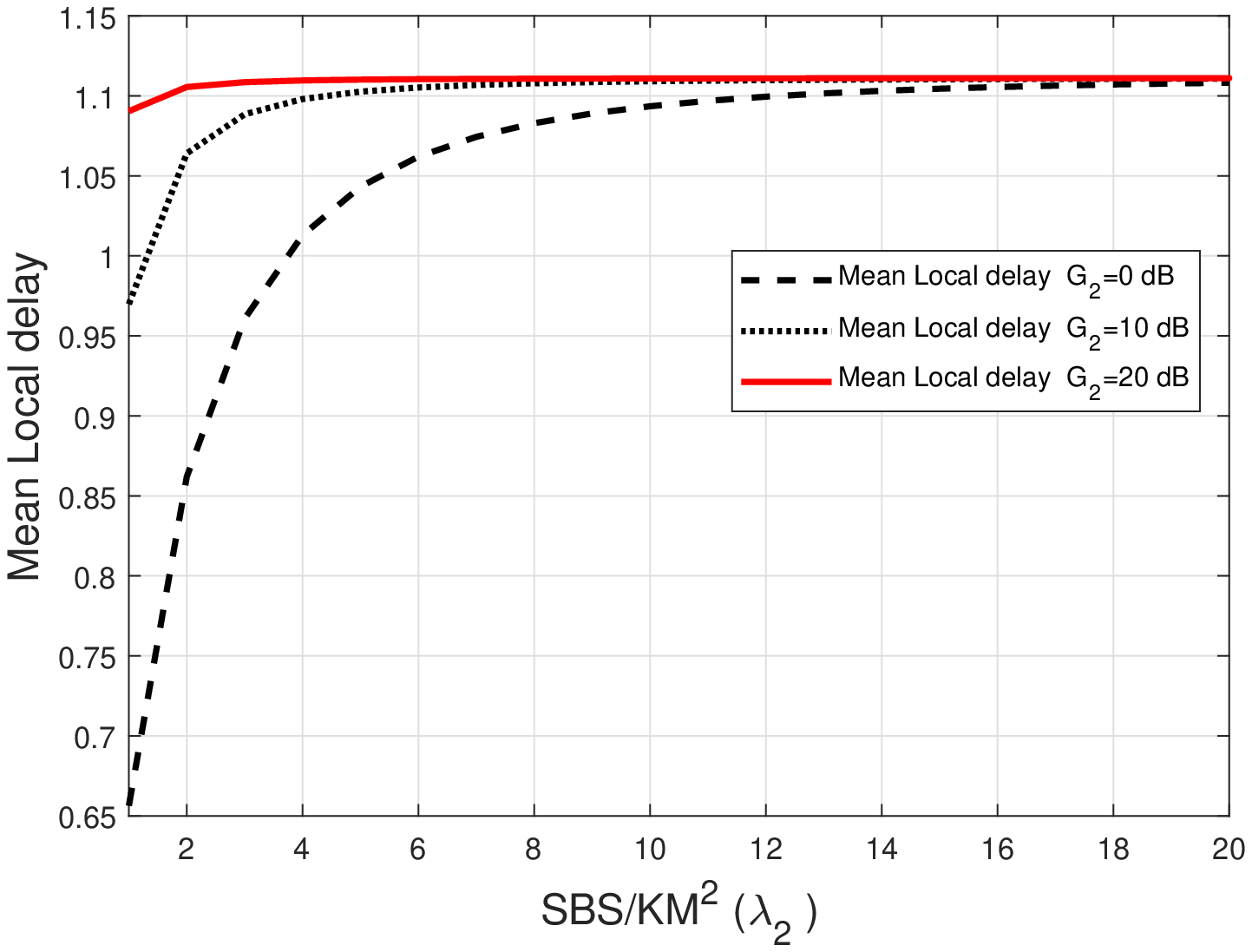}
\caption{\small Mean local delay $M_{-1,\text{T}}$ as a function of $\lambda_2$  for the hybrid spectrum network  when $\lambda_1=2\text{ MBS/Km}^{2}$, $B_{1}=1$, $B_{2}=10$, $\alpha_{1}=4$, $d=200\text{m}$, and $\theta=\theta_{\mathcal{D}}=\theta_{2}=$-10 dB.}
\label{local_delay_mmwave}
\end{minipage}
\hfill
\begin{minipage}[b]{.48\textwidth}
\centering
\includegraphics[width=1.03\textwidth]{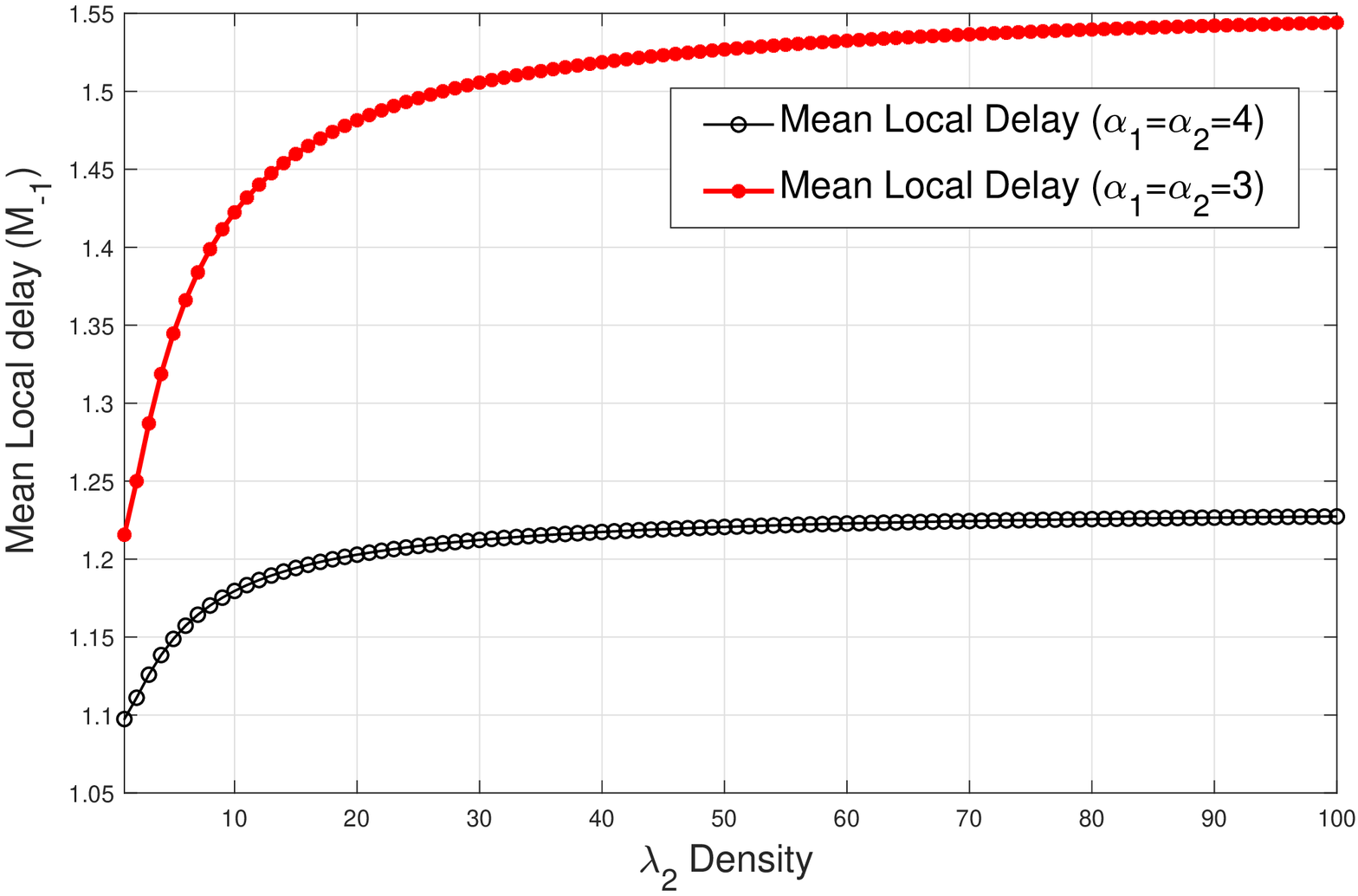}
\caption{\small Mean local delay $M_{-1,\text{T}}$ as a function of $\lambda_2$ for the $\mu$wave-only network when $\lambda_1=2\text{ MBS/KM}^{2}$, $B_{1}=1$ and $B_{2}=10$, $\alpha_{1}=\alpha_{2}=3$ and 4, and $\theta=\theta_{\mathcal{D}}=\theta_{2}=$-10 dB.}
\label{local_delay}
\end{minipage}
\end{figure}



Fig. \ref{local_delay} depicts the mean local delay for a $\mu$wave-only network as a function of $\lambda_2$. When $\lambda_2$ increases the mean local delay of the total network increases again due to the increase in interference which is not the case in the hybrid spectrum  network. The network mean local delay in the case of $\alpha_{1}=\alpha_{2}=3$ is higher than that in the case of $\alpha_{1}=\alpha_{2}=4$ due to higher path loss degradation for higher PLEs.
\subsubsection{The Meta Distribution of the Achievable Data Rate in Hybrid Spectrum Networks}
Fig. \ref{meta_data_rate} depicts the meta distribution of the data rate in hybrid spectrum networks as a function of reliability $x$ for different number of antenna elements $\mathcal{N}=10$, 20, 40, and 50 with rate threshold $\mathcal{T}=1$ Gbps. As shown in Fig. \ref{meta_data_rate}, the fraction of devices achieving a required rate increases as the number of antennas elements increases. In other words, increasing the number of antenna elements of SBSs has a positive effect on the achievable rate and its meta distribution. This insight helps 5G cellular network operators to find the most efficient operating antenna configuration to achieve certain reliability for certain 5G applications.

\subsubsection{The Meta Distribution in a Microwave-only  Network}
In Fig. \ref{meta_selfbackhaul}, we validate our analysis by depicting the exact (Gil-Pelaez) meta distribution in a $\mu$wave-only network defined in Eq. (\ref{Gil-pelaez-2hops}), and the beta approximation for the meta distribution defined in Eq. (\ref{beta-match}). Our simulation result  provides an excellent match for a wide range of $\theta$ values and this validates the correctness of our analytical model. Fig. \ref{meta_selfbackhaul} also serves as an illustration of the meta distribution of the SIR of a typical {\color{black}device} in a $\mu$wave-only network. We note that about 23\% of {\color{black}devices} (when $\theta=10$), 72\% of {\color{black}devices} (when $\theta=1$), and 98\% of {\color{black}devices} (when $\theta=0.1$) have reliability, i.e., success probability, equal to $0.3$.

\begin{figure}[ht]
\begin{minipage}[b]{.48\textwidth}
\centering
\includegraphics[width=0.8\textwidth]{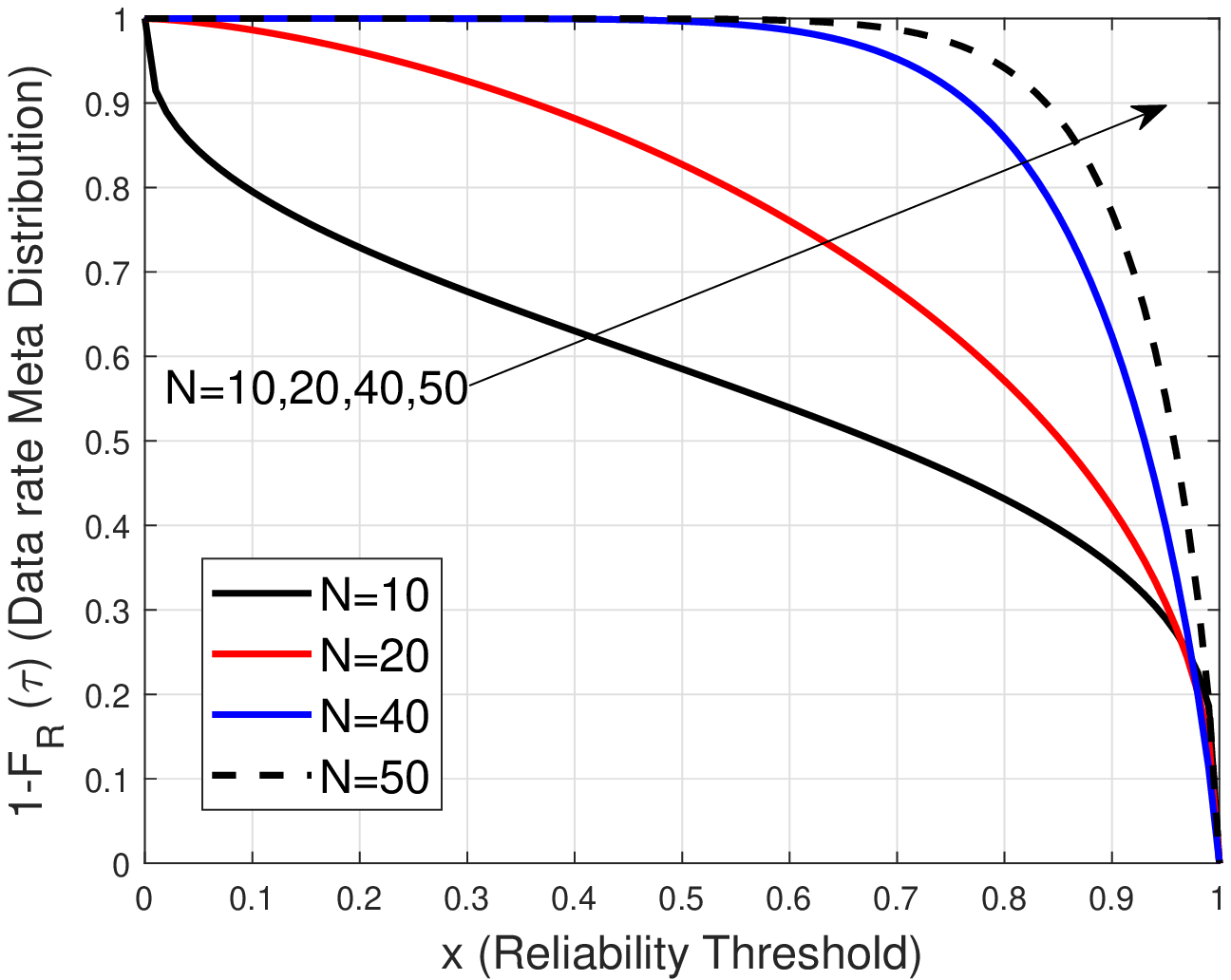}
\caption{\small Meta distribution of the achievable data rate as a function of reliability $x$ for different number of antenna elements $\mathcal{N}$ with rate threshold $\mathcal{T}=1$ Gbps.}
\label{meta_data_rate}
\end{minipage}
\hfill
\begin{minipage}[b]{.48\textwidth}
\centering
\includegraphics[width=0.9\textwidth]{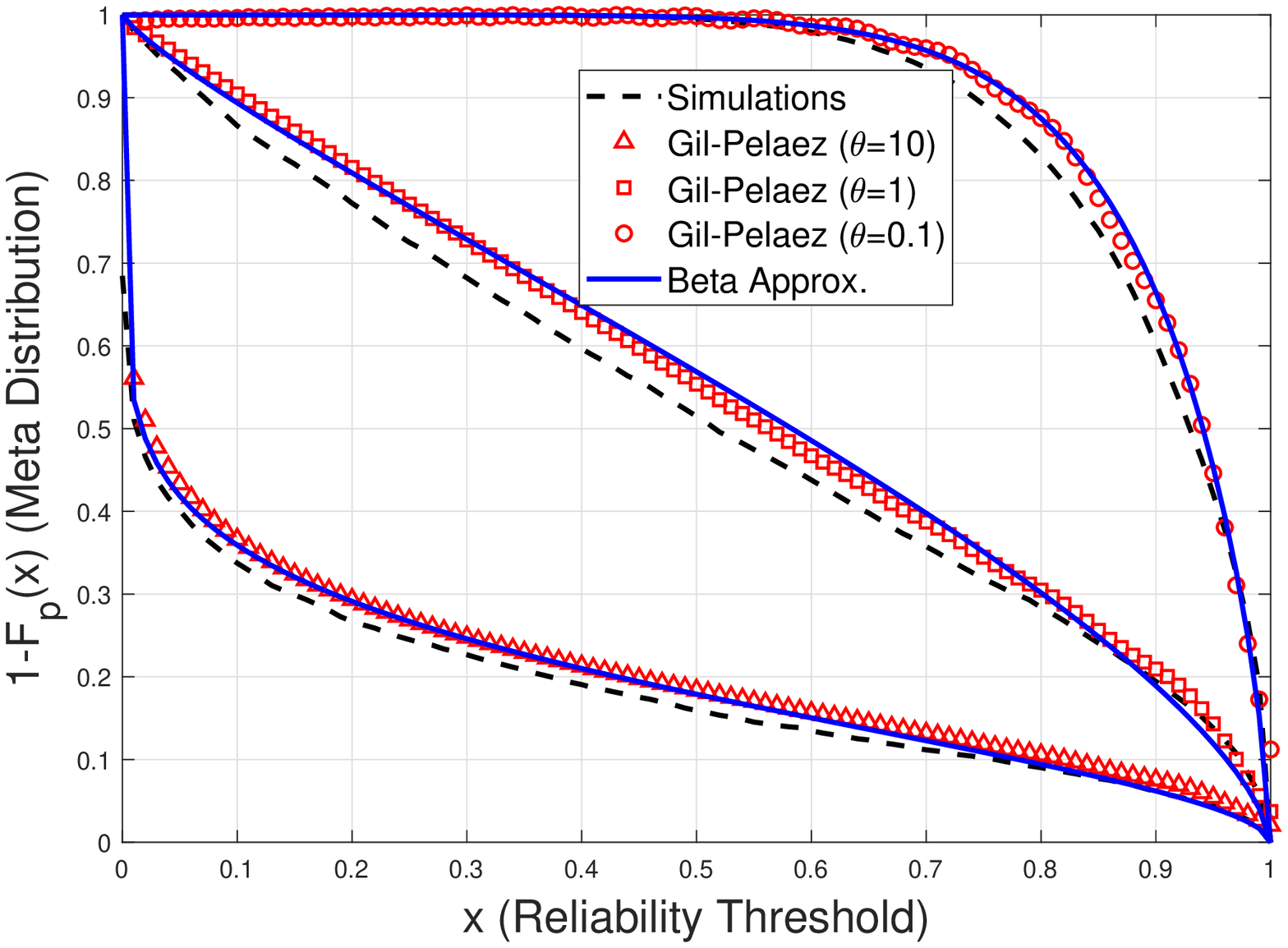}
\caption{\small The meta distribution as a function of reliability $x$ for $\theta=\theta_{\mathcal{D}}=\theta_{2}=$10, 1, and 0.1 for SBSs in a $\mu$wave-only network when $B_{1}=B_{2}=1$ and $\alpha_{1}=\alpha_{2}=4$.}
\label{meta_selfbackhaul}
\end{minipage}
\end{figure}



\section{Conclusion}\label{conc_future_mm}
This paper characterizes the meta distributions of the {\color{black} SIR/SNR} and data rate of a typical {\color{black}device} in a hybrid spectrum network and $\mu$wave-only network.  The meta distribution is evaluated first by formulating and then characterizing the moments of the CSP of a typical {\color{black}device} in the hybrid network. Important performance metrics such as the mean local delay, coverage probability, network jitter, and variance of the CSP (or reliability) are studied. Numerical results demonstrate the significance of evaluating the meta distribution which requires a systematic evaluation of the generalized moment of order $b$ that helps in evaluating network metric such as coverage probability when $b=1$, mean local delay when $b=-1$, network jitter using $b=-2$ and $b=-1$, etc. Numerical results provide valuable insights related to the reliability and latency of the hybrid spectrum network and $\mu$wave-only network. These insights would help 5G cellular network operators to find the most efficient operating antenna configuration  to  achieve  certain  reliability  for  certain  5G applications.

\section*{Appendix C: Proof of Lemma \ref{3}}\label{prof_3}
The $b^{\mathrm{th}}$ moment of the CSP of a typical {\color{black}device} served by the mm-wave SBS can be derived as:
\small
\begingroup \allowdisplaybreaks
      \begin{align}\label{MBS-or-SBS-b-moment-mmwave_2_new}
            &M_{b,\text{2}}(\theta_{\mathcal{D}})=\mathbb{E}_{l}\bigg[\underbrace{\mathbb{P}(n=2|L_{2,min}=l_2)}_{\bar{\mathcal{A}}_{2}(l_2)}P_{s,2}(\theta_{\mathcal{D}})^{b}\bigg]=
        \mathbb{E}_{l}\bigg[\bar{\mathcal{A}}_{2}(l_2)\left(p_{L} P_{s,2,L}(\theta_{\mathcal{D}})+p_{N}P_{s,2,N}(\theta_{\mathcal{D}})\right)^{b}\bigg],
            \notag\\
                 &\stackrel{(a)}{=}\mathbb{E}_{l}\Bigg[\bar{\mathcal{A}}_{2}(l_2)\left(p_{L} \frac{\Gamma\left(m_{L},\frac{m_{L}}{\Omega_{L}}\nu_{L}\right)}{\Gamma{(m_{L})}}+p_{N}\frac{\Gamma\left(m_{N},\frac{m_{N}}{\Omega_{N}}\nu_{N}\right)}{\Gamma(m_{N})}\right)^{b}\Bigg],
            \notag\\
                 &\stackrel{}{=}\mathbb{E}_{l}\Bigg[(\bar{\mathcal{A}}_{2,L}(l_{2,L})+\bar{\mathcal{A}}_{2,N}(l_{2,N}))\times\left(p_{L} \frac{\Gamma\left(m_{L},\frac{m_{L}}{\Omega_{L}}\nu_{L}\right)}{\Gamma{(m_{L})}}+p_{N}\frac{\Gamma\left(m_{N},\frac{m_{N}}{\Omega_{N}}\nu_{N}\right)}{\Gamma(m_{N})}\right)^{b}\Bigg],
                 \notag\\
                 &\stackrel{(b)}{=}\mathbb{E}_{l}\Bigg[\bar{\mathcal{A}}_{2,L}(l_{2,L})\left(p_{L} \frac{\Gamma\left(m_{L},\frac{m_{L}}{\Omega_{L}}\nu_{L}\right)}{\Gamma{(m_{L})}}\right)^{b}\Bigg]+\mathbb{E}_{l}\Bigg[\bar{\mathcal{A}}_{2,N}(l_{2,N})\left( p_{N}\frac{\Gamma\left(m_{N},\frac{m_{N}}{\Omega_{N}}\nu_{N}\right)}{\Gamma(m_{N})}\right)^{b}\Bigg],
                 \notag\\&\stackrel{(c)}=\mathbb{E}_{l}\Bigg[\bar{\mathcal{A}}_{2,L}(l_{2,L})\,p_{L}^{b}
                 \left(1-\frac{\gamma\left(m_{L},\frac{m_{L}}{\Omega_{L}}\nu_{L}\right)}{\Gamma{(m_{L})}}\right)^{b} \Bigg] +\mathbb{E}_{l}\Bigg[\bar{\mathcal{A}}_{2,N}(l_{2,N})\, p_{N}^{b}\left(1-\frac{\gamma\left(m_{N},\frac{m_{N}}{\Omega_{N}}\nu_{N}\right)}{\Gamma(m_{N})}\right)^{b} \Bigg],
                 \notag\\
                 &\stackrel{(d)}{\approx} \mathbb{E}_{l}\Bigg[\bar{\mathcal{A}}_{2,L}(l_{2,L}) p_{L}^{b}
                 \left(1-[1-e^{-\zeta_{L}\nu_{L}}]^{m_{L}}\right)^{b} \Bigg]+\mathbb{E}_{l}\Bigg[\bar{\mathcal{A}}_{2,N}(l_{2,N}) \,p_{N}^{b}\left(1-[1-e^{-\zeta_{N}\nu_{N}}]^{m_{N}}\right)^{b} \Bigg],
                 \notag\\
                 &\stackrel{(e)}=\mathbb{E}_{l}\Bigg[\bar{\mathcal{A}}_{2,L}(l_{2,L})p_{L}^{b}
             \sum_{k=0}^{b}\binom{b}{k}\left(-[1-e^{-\zeta_{L}\nu_{L}}]^{m_{L}}\right)^{k} \Bigg]+\mathbb{E}_{l}\Bigg[\bar{\mathcal{A}}_{2,N}(l_{2,N})\,p_{N}^{b}\sum_{k=0}^{b}\binom{b}{k}\left(-[1-e^{-\zeta_{N}\nu_{N}}]^{m_{N}}\right)^{k} \Bigg],
                 \notag\\
                 & \stackrel{(f)}{=}\mathbb{E}_{l}\Bigg[\bar{\mathcal{A}}_{2,L}(l_{2,L})p_{L}^{b}
                 \sum_{k=0}^{b}\sum_{\ddot{k}=0}^{m_{L}k} \binom{b}{k}\binom{m_{L}k}{\ddot{k}}(-1)^{\ddot{k} +k}e^{-\zeta_{L}\nu_{L}\ddot{k}} \Bigg]
                 +  \mathbb{E}_{l}\Bigg[\bar{\mathcal{A}}_{2,N}(l_{2,N}) p_{N}^{b}\sum_{k=0}^{b}\sum_{\ddot{k}=0}^{m_{N}k} \binom{b}{k}\binom{m_{N}k}{\ddot{k}}(-1)^{\ddot{k}+k}e^{-\zeta_{N}\nu_{N}\ddot{k}}\Bigg],
                 \nonumber
          \end{align}
          \endgroup
          \normalsize
where (a) follows from substituting the value of $P_{s,2,L}(\theta_{\mathcal{D}})$ and  $P_{s,2,N}(\theta_{\mathcal{D}})$ from Eq. (\ref{MBS-OR-SBS-LOS}) and Eq. (\ref{MBS-OR-SBS-NLOS}), respectively,  (b) follows from $l_{\text{2,L}}=r_{2,\mathcal{D}}^{\alpha_{2,L}}$ and $l_{\text{2,N}}=r_{2,\mathcal{D}}^{\alpha_{2,N}}$ and the considered blockage model where $p_{L}=1$ when mm-wave intended link distance $r_{2,\mathcal{D}}<d$ and $p_{N}=1$ when mm-wave intended link distance $r_{2,\mathcal{D}}>d$, (c) follows from  $\Gamma(s)=\gamma(s,x)+\Gamma(s,x)$, (d) follows from the CDF of Gamma random variable  which can be tightly upper bounded by $\frac{\gamma\left(m_{L},\frac{m_{L}}{\Omega_{L}}\nu_{L}\right)}{\Gamma{(m_{L})}}<[1-e^{-\zeta_{L}\nu_{L}}]^{m_{L}}$ \cite{alzer1997some}, where $\zeta_{L}\overset{\Delta}{=}m_{L}(m_{L}!)^{-1/m_{L}}$, $\nu_{L}\overset{\Delta}{=}\frac{\theta_{\mathcal{D}}r_{2,\mathcal{D}}^{\alpha_{2,L}}\sigma_{2}^{2}}{P_{2}G_{2}}$, $\zeta_{N}\overset{\Delta}{=}m_{N}(m_{N}!)^{-1/m_{N}}$, and $\nu_{N}\overset{\Delta}{=}\frac{\theta_{\mathcal{D}}r_{2,\mathcal{D}}^{\alpha_{2,N}}\sigma_{2}^{2}}{P_{2}G_{2}}$ \cite{bai2015coverage}. The steps in (e) and (f) are done by following the binomial expansion theorem. Finally, the Lemma~4 follows from de-conditioning on $l$ and using the definitions $\ddot{\nu_{L}}\overset{\Delta}{=}\frac{\nu_{L}}{r_{2,\mathcal{D}}^{\alpha_{2,L}}}=\frac{\nu_{L}}{l_{\text{2,L}}}=\frac{\theta_{\mathcal{D}}\sigma_{2}^{2}}{P_{2}G_{2}}$  and $\ddot{\nu_{N}}\overset{\Delta}{=}\frac{\nu_{N}}{r_{2,\mathcal{D}}^{\alpha_{2,N}}}=\frac{\nu_{N}}{l_{\text{2,N}}}=\frac{\theta_{\mathcal{D}}\sigma_{2}^{2}}{P_{2}G_{2}}$.

\section*{Appendix~D: Proof of Lemma \ref{4}}\label{prof_4}
The $b^{\mathrm{th}}$ moment of the CSP of a typical {\color{black}device} when associated to $\mu$wave MBS  is derived as follows:
\small
\begingroup \allowdisplaybreaks
      \begin{align*}
           &M_{b,\text{1}}(\theta_{\mathcal{D}})=\mathbb{E}_{l_{1}}\bigg[\underbrace{\mathbb{P}(n=1|L_{1,min}=l_{1})}_{\bar{\mathcal{A}}_{1}(l_1)}P_{s,1}(\theta_{\mathcal{D}})^{b}\bigg]\stackrel{(a)}{=}\mathbb{E}_{l_{1}}\bigg[\bar{\mathcal{A}}_{1}(l_1)
                 \prod_{\mathbf{y}_{1,i}\in\mathbf{\Phi}_1\backslash \{\mathbf{y}_{1,0}\}}\frac{1}{\left(1+\theta_{\mathcal{D}}\left(\frac{r_{1,\mathcal{D}}}{\|\mathbf{y}_{1,i}\|}\right)^{\alpha_{1}}\right)^{b}}\bigg],
                 \notag\\
                 &\stackrel{(b)}{=}\mathbb{E}_{l_{1}}\bigg[\bar{\mathcal{A}}_{1}(l_1)
                 \exp
                 \left(\int_{r}^{\infty}-2\lambda_{1}\pi\bigg[1-\frac{1}{\left(1+\theta_{\mathcal{D}}\left(\frac{r}{y}\right)^{\alpha_{1}}\right)^{b}}\bigg]y\text{d}y\right)\bigg]\stackrel{(c)}{=}\mathbb{E}_{l_{1}}\bigg[\bar{\mathcal{A}}_{1}(l_1)
                 \exp
                 \left(\int_{l_{1}^{\frac{1}{\alpha_{1}}}}^{\infty}-2\lambda_{1}\pi\bigg[1-\frac{1}{\left(1+\theta_{\mathcal{D}}\frac{l_{1}}{y^{\alpha_{1}}}\right)^{b}}\bigg]y\text{d}y\right),
                 \notag\\
                  &\stackrel{(d)}{=} \mathbb{E}_{l_{1}}\bigg[\bar{\mathcal{A}}_{1}(l_1)\exp
                 \left(\int_{0}^{1}-2\lambda_{1}\pi\bigg[1-\frac{1}{\left(1+\theta_{\mathcal{D}}v\right)^{b}}\bigg]v^{-1}\frac{y^{2}}{\alpha_{1}}\text{d}v\right)\bigg],
                 \notag\\
                 &\stackrel{(e)}{=}\mathbb{E}_{l_{1}}\bigg[\bar{\mathcal{A}}_{1}(l_1)\exp
                 \left(\frac{-2\lambda_{1}\pi l_{1}^{\frac{2}{\alpha_{1}}}}{\alpha_{1}}\int_{0}^{1}\bigg[1-\frac{1}{\left(1+\theta_{\mathcal{D}}v\right)^{b}}\bigg]\frac{1}{v^{\frac{2}{\alpha_{1}}+1}}\text{d}v\right)\Bigg],\notag
          \end{align*}
          \endgroup
          \normalsize
          where (a) follows from taking expectation over $l_{1}=r^{\alpha_{1}}$ and considering the conditional association probability for the typical {\color{black}device} connecting to the MBSs tier given in Lemma~(\ref{1}) and substituting the value of $P_{s,1}(\theta_{\mathcal{D}})$ from Eq. (\ref{MBS-OR-SBS}). In step (b) we apply PGFL of the PPP \cite[Chapter 4]{haenggi2012book}. Step (c) follows from averaging over $l_{1}$. In step (d), we use the change of variable $v=\frac{l_{1}}{y^{\alpha_{1}}}$, $dy=\frac{-1}{\alpha_{1}l_{1}y^{-\alpha_{1}-1}}dv=\frac{-1}{\alpha_{1}}v^{-1}ydv$, when $y=l_{1}^{\frac{1}{\alpha_{1}}}$ $\rightarrow$ $v=1$ and when $y=\infty$ $\rightarrow$ $v=0$ and we swap the integral limits and multiply by $-1$, (e) follows from $y^{2}=l_{1}^{\frac{2}{\alpha_{1}}}/v^{\frac{2}{\alpha_{1}}}$ and doing some mathematical manipulations.
\section*{Appendix~E: Proof of Lemma \ref{5}}\label{prof_5}
While taking the association biases effect in consideration, the $b^{\mathrm{th}}$ moment of the CSP $P_{s,k}(\theta_{\mathcal{D}})$ of the typical {\color{black}device} when it is served by the $k^{\mathrm{th}}$ tier is given as follows:
\begingroup \allowdisplaybreaks
\small
\begin{align}
           &M_{b,\text{k'}}(\theta_{\mathcal{D}})=\mathbb{E}_{r_{k,\mathcal{D}}}\bigg[\mathbb{P}(n=k|r_{k,\mathcal{D}})P_{s,k'}(\theta_{\mathcal{D}})^{b}\bigg],\notag\\
                 &\stackrel{(a)}{=}\mathbb{E}_{r_{k,\mathcal{D}}}\bigg[\prod_{j\neq k}e^{-\pi\lambda_{j}(\hat{P}_{jk}\hat{B}_{jk})^{2/\alpha_{j}}r^{2}}\times\prod_{\mathbf{y}_{k,i}\in\mathbf{\Phi}_k\backslash \{\mathbf{y}_{k,0}\}}\frac{1}{\left(1+\theta_{\mathcal{D}}\left(\frac{r_{k,\mathcal{D}}}{\|\mathbf{y}_{k,i}\|}\right)^{\alpha_{k}}\right)^{b}}\bigg],\notag\\
                 &\stackrel{(b)}{=}\mathbb{E}_{r_{k,\mathcal{D}}}\bigg[\prod_{j\neq k}e^{-\pi\lambda_{j}(\hat{P}_{jk}\hat{B}_{jk})^{2/\alpha_{j}}r^{2}}\times\exp
                 \left(\int_{r_{k,\mathcal{D}}}^{\infty}-2\lambda_{k}\pi\bigg[1-\frac{1}{\left(1+\theta_{\mathcal{D}}\left(\frac{r_{k,\mathcal{D}}}{y}\right)^{\alpha_{k}}\right)^{b}}\bigg]y\text{d}y\right)\bigg],\notag\\
                 &\stackrel{(c)}{=}\int_{0}^{\infty}2\lambda_{k}\pi r e^{-\lambda_{k}\pi r^{2}}e^{-\sum\limits_{j\neq k}\lambda_{j}(\hat{P}_{jk}\hat{B}_{jk})^{2/\alpha_{j}}\pi r^{2}}\times\exp\left(\int_{r}^{\infty}-2\lambda_{k}\pi\bigg[1-\frac{1}{\left(1+\theta_{\mathcal{D}}\left(\frac{r}{y}\right)^{\alpha_{k}}\right)^{b}}\bigg]y\text{d}y\right)\text{d}r,\notag\\
                 &\stackrel{(d)}{=}\int_{0}^{\infty}e^{-q} e^{-q\sum\limits_{j\neq k}\hat{\lambda}_{jk}(\hat{P}_{jk}\hat{B}_{jk})^{2/\alpha_{j}}}\times\exp\left(-2q\int_{0}^{1}\bigg[1-\frac{1}{\left(1+\theta_{\mathcal{D}}v^{\alpha_{k}}\right)^{b}}\bigg]v^{-3}\text{d}v\right)\text{d}q,\notag\\
                 &\stackrel{(e)}{=}\int_{0}^{\infty}e^{-q} e^{-q\sum\limits_{j\neq k}\hat{\lambda}_{jk}(\hat{P}_{jk}\hat{B}_{jk})^{2/\alpha_{j}}}\times\exp\left(-q\int_{1}^{\infty}\bigg[1-\frac{1}{\left(1+\theta_{\mathcal{D}}u^{-\alpha_{k}/2}\right)^{b}}\bigg]\text{d}u\right)\text{d}q,\notag\\
                 &\stackrel{(f)}{=}\int_{0}^{\infty}e^{-q} e^{-q\sum\limits_{j\neq k}\hat{\lambda}_{jk}(\hat{P}_{jk}\hat{B}_{jk})^{2/\alpha_{j}}}\times\exp\left(-q\bigg[\text{ }_{2}F_{1}(b,-\frac{2}{\alpha_{k}};1-\frac{2}{\alpha_{k}};-\theta_{\mathcal{D}})-1\bigg]\right)\text{d}q,\notag\\
                 &=\frac{1}{\sum\limits_{j\neq k}\hat{\lambda}_{jk}(\hat{P}_{jk}\hat{B}_{jk})^{2/\alpha_{j}}+\text{ }_{2}F_{1}(b,-\frac{2}{\alpha_{k}};1-\frac{2}{\alpha_{k}};-\theta_{\mathcal{D}})}.\notag
          \end{align}
          \normalsize
          \endgroup
where (a) follows from considering the conditional association probability for the typical {\color{black}device} connecting to the $k^{\mathrm{th}}$ tier given in Eq. \eqref{condition-access}. In step (b), we apply PGFL of the PPP \cite[Chapter 4]{haenggi2012book}. Step (c) follows from averaging over $r_{k,\mathcal{D}}$, step (d) is by using variable substitution $q=\pi\lambda_{k}r^{2}$ and $v=r/y$. In step (e), we perform variable substitution $v=u(\hat{P}_{jk}\hat{B}_{jk})^{-1/\alpha_{j}}$ and step (f) follows from the fact that $\text{ }_{2}F_{1}(b,-\frac{2}{\alpha};1-\frac{2}{\alpha};-\theta)\equiv 1+\int_{1}^{\infty}(1-\frac{1}{(1+\theta h^{-\alpha/2})^{b}})\text{d}h$.

%


%

\end{document}